%% file: main.tex
\documentclass[11pt]{article}

\input{preamble}
\hypersetup{
	pdftitle={The planted tensor problem and its cryptographic applications},
	pdfauthor={Anonymous Author(s)},
	bookmarksnumbered=true,     
	bookmarksopen=true,         
	bookmarksopenlevel=1,       
	colorlinks=true,            
	pdfstartview=Fit,           
	pdfpagemode=UseOutlines,
	pdfpagelayout=OneColumn,
	pdfstartview=FitH
}

\title{The planted tensor problem over finite fields: \\
algorithms and cryptography}
\author{
Yuxuan Liu\thanks{State Key Lab of Processors, Institute of Computing Technology, Chinese Academy of Sciences, Beijing 100190, China\\
School of Computer Science and Technology, University of Chinese Academy of Sciences, Beijing 100049,
China}
\and 
Youming Qiao\thanks{University of Technology Sydney}
\and 
Gang Tang\thanks{Wuhan University}
\and 
Chuanqi Zhang\thanks{Monash University}
}

\date{}

\begin{document}

\maketitle

\pagenumbering{gobble}
\begin{abstract}
Inspired by the planted clique problem for random graphs, we introduce the planted totally-isotropic space problem for random tensors as follows. Let $U\cong \F_q^n$ and $W\cong \F_q^m$ be finite-dimensional vector spaces over a finite field $\F_q$. Given $d\in \N$, choose a random \(d\)-dimensional subspace \(V\leq U\), and construct a random alternating bilinear map $\phi:U\times U\to W$ subject to the constraint \(\phi(V,V)=0\). Such a $V$ is known as a totally-isotropic space of $\phi$, and the goal is to recover $V$. 

Building on the recent probabilistic analysis of random tensors (Pham--Qiao--Wigderson--Wigderson, \emph{in progress}), we initiate the study of the algorithmic hardness of this problem. 
Setting $m=\lceil n/\log n\rceil$, we show that this problem admits an average-case polynomial-time algorithm for $d\geq n/2$, by leveraging recent advances on the non-commutative rank problem.
We also show that this problem admits a $q^{O(n\log n)}$-time algorithm. We carry out algorithmic experiments using polynomial-system solving. From these results, we conjecture that the planted totally-isotropic space problem for $d=\lceil n/C\rceil$ with some constant $C\geq 3$ is exponentially hard.  

Based on this evidence of computational hardness, we explore cryptographic applications of the planted totally-isotropic space problem and related planted tensor problems. We present private simultaneous messages and secret sharing protocols based on planted tensor problems, following the protocols based on planted subgraphs in (Abram--Beimel--Ishai--Kushilevitz--Narayanan, \emph{TCC}'23). At the same security level, the public information size of protocols based on planted subgraphs is (moderately) exponential in that of protocols based on planted tensors, while the communication costs of these protocols are polynomially related. 
\end{abstract}

\newpage
\tableofcontents
\newpage

\pagenumbering{arabic}
\setcounter{page}{1}

\section{Introduction}

The \textit{planted clique problem}, introduced by Jerrum \cite{Jer92} and Ku\v{c}era \cite{Ku95}, is a flagship problem in average-case complexity. The problem states that it is computationally hard
to find a clique of size $d$ that has been ``planted'' in a random graph of order $n$ as long as $d$ is not too large. It has been extensively studied from algorithmic, statistical, and complexity-theoretic perspectives \cite{AKS98,FK03,Ros08,Ros10,FRG13,BHK16,ABDr18,MRS21}, serving as a canonical example of a problem with a conjectured statistical-computational gap.

Cryptographic protocols based on the assumed hardness of the planted clique problem have been devised, including one-way functions, zero-knowledge protocols \cite{JP00}, and public-key encryption schemes \cite{ABW10,BKR23}. Recent new cryptographic and security applications based on planted cliques include planting undetectable backdoors in machine learning models \cite{GKVZ22}, private simultaneous messages and secret sharing \cite{ABIKN23}, as well as the public-key encryption scheme \cite{GHJS25}. In particular, \cite{GHJS25} relies on the standard planted clique hardness assumption instead of the variant used in \cite{ABW10}.

With the increasing applications of the planted clique hardness assumption in cryptography, there is an inconvenient issue in putting these protocols into action. That is, the planted clique problem in the standard model can be solved in quasipolynomial time. As a result, the computational hardness assumption relies on the \emph{quasipolynomial-time} hardness against \emph{polynomial-time} algorithms, whereas most practical cryptographic protocols rely on the \emph{(sub)exponential-time} hardness against \emph{polynomial-time} algorithms. 

Based on the above, the main motivation of this paper is the following question:
\begin{quote}
    \textit{Is there a structure with a planted substructure paradigm that can be potentially exponentially hard and useful for cryptographic applications?}
\end{quote}
Our investigation suggests that \emph{tensors} could serve as an answer to the above question. In the following, we introduce a tensor version of the planted clique problem, examine algorithms for this problem, and explore its cryptographic applications. 

\subsection{From planted cliques to planted totally-isotropic spaces}

\paragraph{Planting cliques, planting independent sets.} We shall focus on the planted clique problem in the Erd\H{o}s--R\'enyi model $\ER(n, p=1/2)$. In this case, we can plant independent sets which is the same as planting cliques and then taking the complement graph. In particular, all algorithms and complexity-theoretic results for planted clique can be transferred to planted independent set in a straightforward fashion.

\paragraph{Graphs and tensors.} A tensor is a multilinear map. In particular, an alternating bilinear map may be viewed as an order-3 tensor. A classical correspondence between graphs and tensors goes back to Tutte \cite{Tut47}, Edmonds \cite{Edmonds67}, and Lov\'asz \cite{Lov79}, who showed a connection between perfect matchings and full-rank matrices. 
Recently, more connections between graphs and bilinear maps were discovered \cite{LQ17,BCGQS21,LQWWZ23,LQWWZ25}. 

We explain the construction of a bilinear map from a graph. Let $G=([n], E)$ be a graph with vertex set $[n]=\{1, \dots, n\}$ and edge set $E\subseteq \binom{[n]}{2}$, $|E|=m$. Given $\{i, j\}\subseteq [n]$, $i<j$, the elementary alternating matrix corresponding to $\{i,j\}$, $A_{i,j}$, is the $n\times n$ matrix with the $(i,j)$th entry $1$, $(j, i)$th entry $-1$, and all other entries equal to $0$. Let $U$ and $W$ be two vector spaces. A bilinear map $\phi:U\times U\to W$ is \emph{alternating} if for any $u\in U$, $\phi(u, u)=0$. We adopt the convention that $\F^n$ consists of column vectors.

From $G=([n], E)$, $|E|=m$, we first order the edges arbitrarily as $(e_1, \dots, e_m)$, where $e_k=\{i_k, j_k\}$. Then we can construct a tuple of elementary alternating matrices $(A_1, \dots, A_m)$ where $A_k$ is the elementary alternating matrix corresponding to $e_k$. This alternating matrix tuple then yields an alternating bilinear map $\phi_G:\F^n\times\F^n\to\F^m$ by sending $(u, v)\in \F^n\times \F^n$ to $(u^\tr A_1v, \dots, u^\tr A_mv)^\tr$. We shall refer to $\phi_G$ as the \emph{graphical alternating bilinear map} corresponding to $G$.


\paragraph{Independent sets and totally-isotropic spaces.} Let $U$ and $W$ be finite-dimensional vector spaces over a field $\F$. Let $\phi:U\times U\to W$ be an alternating bilinear map. A subspace $V\leq U$ is a \emph{totally-isotropic space} (TI space for short) of $\phi$ if the restriction of $\phi$ to $V \times V$ is the zero map, that is, for any $v, v'\in V$, $\phi(v, v')=0$. 

Some recent works study totally-isotropic spaces as a linear algebraic analogue of independent sets. 
Let $\alpha(\phi)$ denote the maximum dimension among totally-isotropic spaces of $\phi$. In \cite{BCGQS21}, it was shown that the maximum independent set size of a graph $G$ is equal to $\alpha(\phi_G)$; recall that $\phi_G$ is the graphical bilinear map corresponding to $G$. Such a correspondence is generalized to graph independence polynomials \cite{LM05}. In \cite{Qia24}, the totally-isotropic space polynomial was proposed and shown to form a $q$-analogue of graph independence polynomials. 

Based on totally-isotropic spaces, in \cite{Qia23}, Tur\'an and Ramsey problems for multilinear maps were studied, with origins in group theory \cite{BGH87} and geometry \cite{feldman1992linear}. Some open questions from \cite{Qia23} were recently resolved in \cite{CY25,CXY26}.


\paragraph{A random model of alternating bilinear maps over finite fields.}
To prepare for a planted totally-isotropic space problem, we need a random model for alternating bilinear maps. This is realized by noting that after fixing bases of the relevant vector spaces, an alternating bilinear map is represented by a tuple of alternating matrices. Let $\Lambda(n, \F_q)$ denote the linear space of $n\times n$ alternating\footnote{An $n\times n$ matrix $A$ over a field $\F$ is alternating if for any $u\in \F^n$, $u^\tr Au=0$. Over fields of characteristic not $2$, a matrix is alternating if and only if it is skew-symmetric, i.e., $A=-A^\tr$.} matrices over $\F_q$. From a tuple of alternating matrices $\tA=(A_1, \dots, A_m)$, where $A_i\in \Lambda(n, \F_q)$, we can construct an alternating bilinear map $\phi_\tA:\F_q^n\times\F_q^n\to\F_q^m$, defined for $u, v\in \F_q^n$ by
\begin{equation}\label{eq:bilinear-map}
	\phi_\tA(u, v)=(u^\tr A_1v, \dots, u^\tr A_mv)^\tr.
\end{equation}






We then have a natural random model of alternating bilinear maps over finite fields, which was studied, for example, in \cite{LQ17}. 


\begin{definition}\label{def:LinER}
    The \emph{linear Erd\H{o}s--R\'enyi model} $\LinER(n, m, q)$ is a distribution over alternating bilinear maps $\phi_{\tA}:\F_q^n\times\F_q^n\to\F_q^m$, where $\tA\in \Lambda(n, \F_q)^m$ is sampled as follows. 
    \begin{enumerate}
    \item A random alternating matrix tuple $\tA=(A_1, \dots, A_m)\in \Lambda(n, \F_q)^m$ is obtained by sampling each $A_i$ independently, where each $A_i$ is sampled as follows. 
    \item Sample a random alternating matrix $A=(a_{i,j})\in\Lambda(n, \F_q)$ by setting $a_{i,i}=0$ for $i\in[n]$, independently sampling $a_{i,j}$ uniformly at random from $\F_q$ for $1\leq i<j\leq n$, and setting $a_{i,j}=-a_{j,i}$ for $1\leq j<i\leq n$. 
    \end{enumerate}
\end{definition}

The random model $\LinER(n, m, q)$ was recently examined more closely in an ongoing work of Pham, Qiao, Wigderson, and Wigderson \cite{PQWW}. For a random $\phi:\F_q^n\times\F_q^n\to\F_q^m$ sampled from $\LinER(n, m, q)$, a natural question is to estimate $\alpha(\phi)$, the maximum dimension among TI spaces of $\phi$. This question is analogous to estimating $\alpha(G)$ for a random $G$ sampled from $\ER(n, m)$ or $\ER(n, p)$. Estimating $\alpha(\phi)$ with $\phi$ sampled from $\LinER(n, m, q)$ can be traced back to work in group theory in the 1980s \cite{Ols78,BGH87}, which gave a lower bound of $\alpha(\phi)$ using the first-moment method. With a second-moment method \cite{PQWW}, the following was proved.

\begin{theorem}[{\cite{Ols78,BGH87,PQWW}}]\label{thm:TI-dim}
Let $\phi:\F_q^n\times\F_q^n\to \F_q^m$ be a random alternating bilinear map with $m\geq 2$. Then $\alpha(\phi)$, the maximum totally-isotropic space dimension of $\phi$, is $\lfloor(m+2n)/(m+2)\rfloor$ with high probability.
\end{theorem}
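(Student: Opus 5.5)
We prove the statement by the first- and second-moment method applied to the number of totally-isotropic spaces of a fixed dimension. For $d\le n$ let $X_d$ be the number of $d$-dimensional subspaces $V\le\F_q^n$ that are totally isotropic for $\phi\sim\LinER(n,m,q)$.

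\textbf{Probability that a fixed $V$ is totally isotropic.} Change basis so that $V=\langle e_1,\dots,e_d\rangle$. The distribution of each $A_i$ is invariant under $A\mapsto P^\tr AP$ for $P\in\GL(n,\F_q)$, since this map is a linear bijection of $\Lambda(n,\F_q)$. Being totally isotropic for $V$ then means exactly that the upper-left $d\times d$ block of each $A_i$ vanishes. That block has $\binom d2$ independent uniform entries, so
\[
\Pr[V\text{ is TI}]=q^{-m\binom d2}.
\]
The number of $d$-dimensional subspaces is $\binom nd_q=\Theta(q^{d(n-d)})$, so
\[
\E X_d=\Theta\bigl(q^{f(d)}\bigr),\qquad f(d)=d(n-d)-m\tbinom d2 .
\]

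\textbf{Locating the threshold.} Compute $f(d+1)-f(d)=n-2d-1-md$. This is nonnegative exactly when $d\le (n-1)/(m+2)$. Let $d^*=\lfloor (m+2n)/(m+2)\rfloor$.

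\emph{Upper bound.} It suffices to show $f(d^*+1)<0$ with margin $\Omega(n)$; Markov's inequality then gives $\Pr[X_{d^*+1}>0]\to 0$. Writing $f(d)=d\bigl(n-d-m(d-1)/2\bigr)$, we have $f(d)<0$ iff $(m+2)d>2n+m$, i.e. $d>(2n+m)/(m+2)$. Since $d^*+1>(2n+m)/(m+2)$, we get $f(d^*+1)\le -c\,(d^*+1)$ for some constant $c>0$. When $(2n+m)/(m+2)$ is close to an integer the margin is only a constant, giving $\Pr\le q^{-\Omega(d)}$; this is still fine when $d\to\infty$. When $m$ is so large that $d^*$ is bounded, the $\Theta$ constants in the $q$-binomial need explicit care.

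\emph{Lower bound.} At $d=d^*$ we have $f(d^*)\ge 0$, and typically $f(d^*)=\Omega(d^*)$. Here we apply the second-moment method (Paley--Zygmund or Chebyshev):
\[
\frac{\E X_{d^*}^2}{(\E X_{d^*})^2}=\sum_{k}\frac{N_k\,p_k}{\E X_{d^*}},
\]
where $k=\dim(V\cap V')$, $N_k\approx q^{(d-k)(n-d)+k(d-k)}\cdot\#\{V\}$ counts pairs with that intersection, and $p_k$ is the probability that both are TI. Since $\phi(V,V)=0$ and $\phi(V',V')=0$ together impose $m\bigl(2\binom d2-\binom k2\bigr)$ independent linear conditions (the alternating forms on $V$ and on $V'$ overlap exactly on $\Lambda^2(V\cap V')$, and $\Lambda^2V+\Lambda^2V'$ has dimension $2\binom d2-\binom k2$), we get
\[
p_k=q^{-m(2\binom d2-\binom k2)}.
\]
Each term is therefore $\approx q^{\,k^2-kn+m\binom k2}\cdot q^{-f(d)}\cdot q^{f(d)}$, and the ratio is
\[
\sum_k q^{-k(n-k)+m\binom k2}\cdot q^{-f(d)+\ldots}.
\]
The $k=0$ term tends to $1$. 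For $1\le k\le d$ the exponent is $g(k)=m\binom k2+k(d-k)\cdot(\ldots)-k(n-d)$, which is convex in $k$. So it suffices to check the endpoints: $k=1$ gives roughly $-(n-d)<0$, and $k=d$ gives exactly $-f(d)$.

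\textbf{Main obstacle.} The hard step is ensuring that the $k=d$ (and near-$d$) terms are small, which requires $f(d^*)\to\infty$. When $(2n+m)/(m+2)$ sits just above an integer, $f(d^*)$ may be only $O(1)$, and the second moment then gives only constant-probability existence. I would handle this first by the convexity bound on the intermediate terms. For the boundary case I would try a sharper count: the fluctuations of $X$ near $k=d$ come only from $V'=V$, contributing $1/\E X$. So Chebyshev yields $\Pr[X=0]\le 1/\E X+o(1)$, and with high probability existence follows whenever $\E X\to\infty$. The leftover degenerate arithmetic cases would be absorbed into the meaning of ``with high probability'' as in \cite{PQWW}.
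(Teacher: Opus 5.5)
Your route (first moment at $d^*+1$ for the upper bound, second moment at $d^*$ for the lower bound) is the one the paper attributes to \cite{Ols78,BGH87,PQWW}. The upper bound is fine: $(m+2)(d^*+1)-(2n+m)\ge 1$ gives $f(d^*+1)\le-(d^*+1)/2$. The second-moment step, however, is miscomputed, and the error hides a case where it fails. The ratio is $\E X_d^2/(\E X_d)^2=\sum_k \bigl(N'_k/\binom nd_q\bigr)q^{m\binom k2}$, not a sum divided by $\E X_d$; here $N'_k$ counts the $V'$ meeting a fixed $V$ in dimension $k$. Its $k$th term is $\Theta(q^{h(k)})$ with
\[
h(k)=m\tbinom k2-k(n-2d+k)=\Bigl(\tfrac m2-1\Bigr)k^2-\Bigl(n-2d+\tfrac m2\Bigr)k .
\]
This is your $g$ with the ellipsis equal to $1$; your other expression $k^2-kn+m\binom k2$ drops the $2dk$. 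Convexity and $h(d)=-f(d)$ are right, but $h(1)=-(n-2d+1)$, not roughly $-(n-d)$. For $m\ge3$ one has $n-2d^*\ge(n-6)/5$, so this endpoint is harmless. For $m=2$, which the statement includes, $d^*=\lceil n/2\rceil$ and $h(k)=-k(n-2d^*+1)$, which is identically $0$ for odd $n$ and equals $-k$ for even $n$. Even after the $k\le1$ terms cancel against the $-1$ in $\Var X_{d^*}/(\E X_{d^*})^2$, that ratio is $\Theta(d^*)$ for odd $n$ and $\Theta(q^{-2})$ for even $n$. So Chebyshev gives nothing in the odd case and only $1-O(q^{-2})$ in the even case.

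For $m=2$ you should argue deterministically instead. Pick $v\ne0$, taking $v\in\ker A_1$ when $n$ is odd (odd-order alternating matrices are singular). Then $U=\{u:\phi(v,u)=0\}$ has dimension $\ge n-2$, resp.\ $\ge n-1$, and $\phi$ descends to $U/\langle v\rangle$. By induction, a TI space there, lifted and joined with $v$, gives $\alpha(\phi)\ge\lceil n/2\rceil=\lfloor(n+1)/2\rfloor$ for every $\phi$.

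Your account of the threshold is also off, and your last sentence is not an argument. Put $r=2n+m-(m+2)d^*\in\{0,\dots,m+1\}$; then $f(d^*)=d^*r/2$. So whenever $r\ge1$ you get $f(d^*)\ge d^*/2$, no matter how close $(2n+m)/(m+2)$ is to an integer, and the second moment closes for $m\ge3$ with $d^*\to\infty$. The real obstructions are $r=0$ and bounded $d^*$. The case $r=0$ means $(m+2)\mid 2(n-1)$, which happens for infinitely many $n$ even with $m=\lceil n/\log n\rceil$. There $\E X_{d^*}=\binom n{d^*}_q/q^{d^*(n-d^*)}\in[1,c_q]$ with $c_q=\prod_{j\ge1}(1-q^{-j})^{-1}$. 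The diagonal term $1/\E X_{d^*}$ then does not vanish, and Paley--Zygmund gives only a constant lower bound on $\Pr[X_{d^*}>0]$. For $m\ge3$ all other correlation terms are $o(1)$, so $X_{d^*}$ is heuristically Poisson with bounded mean, and one expects $\alpha(\phi)=d^*-1$ with probability bounded away from $0$. Similarly, for bounded $d^*$ the first moment gives only $\Pr[X_{d^*+1}>0]\le c_q q^{-(d^*+1)/2}$, a constant for fixed $q$; this has nothing to do with the $q$-binomial constants. These cases cannot be absorbed into the phrase ``with high probability''. A correct proof must assume $d^*\to\infty$ and $(m+2)\nmid(2n+m)$ for $m\ge3$, treat $m=2$ as above, and in the divisible case conclude only $\alpha(\phi)\in\{d^*-1,d^*\}$ with high probability.
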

\begin{remark}\label{rem:choice-of-m}
Noting that $\lfloor(m+2n)/(m+2)\rfloor\leq 2n/m+1$, we shall assume that we plant in a TI space of dimension $\gg 2n/m$ in this article. An interesting case is when $m=\lceil n/\log n\rceil$. In this case, the maximum dimension among TI spaces is $(2+o(1))\log n$, matching the maximum clique size of the Erd\H{o}s--R\'enyi model $\ER(n, p=1/2)$. 
Therefore, in this paper, we shall mostly work with $\LinER(n, m=\lceil n/\log n\rceil, q)$.
\end{remark}


\paragraph{The planted totally-isotropic space problem.} Based on the above works, it is natural to formulate a planted totally-isotropic space problem in analogy with the planted independent set problem. Note that we shall move from graphical alternating bilinear maps to general alternating bilinear maps, and we will be working over finite fields.

We now introduce a random model of alternating bilinear maps with a random dimension-$d$ totally-isotropic space. 

\begin{definition}\label{def:PlLinER}
    The \emph{$d$-totally-isotropic ($d$-TI) planted linear Erd\H{o}s--R\'enyi model} $\PlLinER(n, m, q, d)$ is a distribution over alternating bilinear maps $\phi_{\tilde\tA}:\F_q^n\times\F_q^n\to\F_q^m$ with a $d$-dimensional totally-isotropic space, where $\tilde\tA\in\Lambda(n, \F_q)^m$ is obtained as follows. 
    \begin{enumerate}
        \item Sample a random alternating matrix tuple $\tA=(A_1, \dots, A_m)\in \Lambda(n, \F_q)^m$ according to the sampling procedure of $\LinER(n, m, q)$ as in \cref{def:LinER}. 
        \item For $i\in[m]$, set the leading principal submatrix of order \(d\) (i.e., the top-left $d\times d$ submatrix) of $A_i$ to $0$, and denote by $\tilde A_i$ the resulting matrix.
        \item Sample \(L\) uniformly at random from \(\GL(n,\F_q)\), and let $\tilde\tA=(L^\tr \tilde A_1L, \dots, L^\tr \tilde A_mL)$. 
    \end{enumerate}
\end{definition}


Note that Step 2 in Definition~\ref{def:PlLinER} ensures that the linear space $\linspan\{e_1, \dots, e_d\}$ is a totally-isotropic space, where $e_i$ denotes the $i$th standard basis vector of $\F_q^n$. Then by Step 3, the alternating bilinear map $\phi_{\tilde\tA}$ admits a TI space spanned by $\{L^{-1}e_1, \dots, L^{-1}e_d\}$. 

Equivalently, one may first sample a uniformly random \(d\)-dimensional subspace \(V\leq \F_q^n\), and then sample a random alternating bilinear map from \(\LinER(n,m,q)\) conditioned on \(V\) being totally isotropic.



The planted totally-isotropic space problem can be formulated as follows. The 
decision version asks to distinguish whether a given alternating bilinear map is sampled from the distribution $\LinER(n, m, q)$ or $\PlLinER(n, m, q, d)$. The search version asks to compute the planted $d$-dimensional TI space from an alternating bilinear map sampled from $\PlLinER(n, m, q, d)$. As discussed in Remark~\ref{rem:choice-of-m}, we will mostly focus on $m=\lceil n/\log n\rceil$, while the choices of $d$ and the precise formulations of the hardness conjectures will be specified after the algorithmic results in Section~\ref{subsec:algorithm} (Conjectures~\ref{conj:search} and~\ref{conj:decision}).

Recall from \cite{BCGQS21} that $\alpha(G)=\alpha(\phi_G)$ for a graph $G$ and its corresponding graphical bilinear map $\phi_G$, establishing a first connection between independent sets and totally-isotropic spaces. However, that result does not imply a correspondence between the planted independent set problem and the planted totally-isotropic space problem, as the distributions of graphs under $\ER(n, m)$ (or $\ER(n, p)$) and of alternating bilinear maps under $\LinER(n, m, q)$ are quite different. Indeed, most alternating bilinear maps are not graphical. Thus, our planted TI problem should be viewed as a linear-algebraic analogue of planted clique, rather than as a direct algebraic encoding of the planted clique distribution. Still, as the reader will see, the research questions and methodologies for the planted totally-isotropic space problem are strongly influenced by those for the planted clique problem. 

\subsection{Algorithms for the planted totally-isotropic space problem}\label{subsec:algorithm}




\paragraph{Review of algorithms for planted clique.} We first recall some key algorithmic results for the planted clique problem in $\ER(n, p=1/2, d)$, where $d\gg (2+o(1))\log n$ is the planted clique size. 

First, for any planted clique size $d$, the planted clique problem can be solved in time $n^{O(\log n)}$. This is done by enumerating all vertex subsets of size $3\log n$ and checking their common neighbors.

Second, when $d\geq c \sqrt{n\log n}$ for some constant $c$, the planted clique can be recovered by listing those vertices with the highest degrees as observed by Ku\v{c}era \cite{Ku95}.

Third, when the planted clique size $d\geq c\sqrt{n}$ for some constant $c$, the celebrated algorithm of Alon, Krivelevich and Sudakov can recover the planted clique in polynomial time using spectral techniques \cite{AKS98}. Since then, algorithms that achieve essentially the same result as \cite{AKS98} have been developed, including the one in \cite{FK03} that utilizes Lov\'asz theta function, and combinatorial algorithms in \cite{FR10,DGP14}. 

The planted clique problem admits a decision version, which is useful for cryptographic purposes. Several works \cite{alon2007testing,hirahara2023hardness} discussed the relation between search and decision versions, leading to the work of Hirahara and Shimizu \cite{HS24} who established an optimal equivalence between these search and decision variants.

\paragraph{Algorithms for planted TI space.} We now turn to the planted totally-isotropic space problem for $\PlLinER(n, m, q, d)$ where $m=\lceil n/\log n\rceil$ and $d\gg (2+o(1))\log n$. 

As a baseline, note that a natural brute-force algorithm would enumerate every $d$-dimensional subspace of $\F_q^n$ and check if it is totally-isotropic for the input alternating bilinear maps. This algorithm runs in time $q^{O(dn)}$. 

As in the planted clique problem, we can improve the brute-force algorithm by noting that we only need to examine subspaces of dimension, say, $\lceil 3\log n\rceil$, leading to an algorithm in time $q^{O(n\log n)}$ (Proposition~\ref{prop:general-algorithm}). Once such a small planted subspace is identified, the full planted TI space can be recovered efficiently. Note that this still gives an exponential-time algorithm, in contrast to the planted clique problem where the general case is assumed to be quasipolynomial-time hard.

We then examine how large the planted TI space dimension $d$ could be to allow for a polynomial-time recovery algorithm or distinguishing algorithm. Unlike the planted clique problem, even getting a better-than-brute-force algorithm for $d\geq\lceil n/C\rceil$ for some constant $C\geq 3$ seems tricky. 

Our current results suggest that
$d\approx n/2-o(n)$ may mark a threshold for polynomial-time recovery. First, inspired by the degree-based approaches for the planted clique problem, we analyze its counterpart in tensors—the rank of matrix slices. This yields a straightforward polynomial-time distinguisher for $d > n/2$. Moreover, for $d=n/2-c$ with $c=O(\sqrt{n/\log n})$, this also provides a distinguishing algorithm whose running time is polynomial in $n$ and exponential in $c$ (see \Cref{sec:distinguisher}). We then obtain a polynomial-time algorithm that recovers the planted TI space for $d\geq n/2$ (Proposition~\ref{prop:n-over-2}), by resorting to the solutions of the non-commutative rank problem \cite{GGOW20,IQS18,HH21}.

Finally, we note that the planted TI space problem can be formulated as solving a system of polynomial equations. In Section~\ref{subsec:grobner}, we report our experiments based on polynomial equation solving in Magma \cite{Magma}.

All the above evidence indicates that the planted TI space problem seems exponential-time hard, in contrast to the quasipolynomial-time hardness of the planted clique problem. Formally, we propose the following conjectures. 

\begin{conjecture}[The hardness assumption for finding planted TI spaces]\label{conj:search}
    There exists a constant $C\geq 3$ such that no randomized algorithm running in time $q^{o(n)}$ can recover the planted TI space for a random $\phi$ sampled from $\PlLinER(n, m=\lceil n/\log n\rceil, q, d)$ for $d=\lceil n/C\rceil$.
\end{conjecture}

\begin{conjecture}[The hardness assumption for distinguishing random alternating bilinear maps with or without TI spaces]\label{conj:decision}
    There exist constants $C\geq 3$ and $\varepsilon >0$ such that no randomized algorithm running in time $q^{o(n)}$ can distinguish whether $\phi:\F_q^n\times\F_q^n\to\F_q^m$ is sampled from $\LinER(n, m=\lceil n/\log n\rceil, q)$ or from $\PlLinER(n, m=\lceil n/\log n\rceil, q, d=\lceil n/C\rceil)$ with advantage at least $\varepsilon$. Formally, for any randomized algorithm $D$ in time $q^{o(n)}$, we have 
    $$
    \Big|\Pr_{D, \phi\sim \LinER(n, \lceil n/\log n\rceil, q)}[D(\phi)=1]-\Pr_{D, \phi\sim \PlLinER(n, \lceil n/\log n\rceil, q, \lceil n/C\rceil)}[D(\phi)=1]\Big|<\varepsilon.
    $$
\end{conjecture}



\subsection{Cryptographic applications of the planted tensor problem}\label{subsec:crypto}

Our algorithmic investigations indicate that the planted totally-isotropic space problem for $\LinER(n$, $m=\lceil n/\log n\rceil$, $q$, $d=\lceil n/C\rceil)$ for some constant $C\geq 3$ could be exponentially hard. It is then of natural interest to see if any cryptographic schemes could be devised based on the planted TI space problem or the more general planted random tensor problem. We demonstrate that this is possible for the private simultaneous messages and secret-sharing schemes equipped with a public information model presented in a recent work by Abram, Beimel, Ishai, Kushilevitz, and Narayanan \cite{ABIKN23}. In this subsection, we first review the two cryptographic primitives and their constructions under the planted graph assumptions. Then we formalize our planted tensor assumptions and explain how to use them for our constructions in detail. 

\paragraph{Private simultaneous messages (PSM).} A \textit{private simultaneous messages} (PSM) protocol, first introduced in \cite{FKN94} and named by Ishai
and Kushilevitz \cite{ishai1997private}, is a cryptographic primitive that enables $d$ parties to jointly and non-interactively compute a function $f$ on their respective private inputs, $x_1, x_2, \dots, x_d$. The protocol ensures that a referee, upon receiving the transmitted encodings from all parties, can correctly evaluate the function's output $f(x_1, x_2, \dots, x_d)$, while learning nothing else about the original inputs. PSM can be viewed as a particularly restricted form of non-interactive secure computation, in which each input holder sends only one simultaneous message to a referee; this connection to broader non-interactive secure multiparty computation has been explored in later works such as \cite{beimel2014non}.


\paragraph{Secret-sharing schemes for forbidden graph access
structures (FGSS).} A \textit{secret-sharing scheme} \cite{Sha79,Bla79,ISN89} is a cryptographic primitive for distributing a secret among multiple parties in such a way that only certain groups of them can recover it. Specifically, the scheme's access structure specifies which sets of parties are authorized to fully reconstruct the secret and which sets are forbidden from learning anything about it. Other sets, not explicitly defined in the access structure, fall into an indeterminate zone where they may obtain nothing, partial information, or even the entire secret. A notable variant, introduced in \cite{SS97}, is the \textit{forbidden graph secret-sharing} (FGSS) scheme, where the access structure is represented by a graph. In this graph, each vertex corresponds to a party, and any two parties can recover the secret if and only if their corresponding vertices are adjacent, while non-adjacent pairs are guaranteed to learn nothing from their secret shares. Note that for larger subsets of more than two parties, the scheme provides no guarantee as to whether the secret can be recovered.

\paragraph{From planting in TI spaces to planting in general alternating bilinear maps.} To enable cryptographic applications, we need to accommodate planting in general bilinear maps, not just the zero one as in the totally-isotropic space case. 
\begin{definition}\label{def:PlRandomLinER}
    Let $\tB=(B_1, B_2, \dots, B_m)\in \Lambda(d, \F_q)^m$. 
    The \emph{$\tB$-planted linear Erd\H{o}s--R\'enyi model} $\PlLinER(n, m, q, \tB)$ is a distribution over alternating bilinear maps $\phi_{\tilde\tA}:\F_q^n\times\F_q^n\to\F_q^m$, where $\tilde\tA\in\Lambda(n, \F_q)^m$ is obtained as follows. 
    \begin{enumerate}
        \item Sample a random alternating matrix tuple $\tA=(A_1, \dots, A_m)\in \Lambda(n, \F_q)^m$ from $\LinER(n, m, q)$ as in \cref{def:LinER}. 
        \item (Embedding) For $i\in[m]$, set the $d$th order leading principal submatrix (i.e., the top-left $d\times d$ submatrix) of $A_i$ to be $B_i$, and denote by $\tilde A_i$ the resulting matrix.
        \item (Hiding) Sample a random $L\in\GL(n, \F_q)$, and let $\tilde\tA=(L^\tr \tilde A_1L, \dots, L^\tr \tilde A_mL)$. 
    \end{enumerate}

    We define a distribution $\PPlLinER(n, m, q, \tB)$ over pairs of an alternating bilinear map and an invertible matrix that follows the above steps and outputs $(\tilde \tA, L)$ where $\tilde \tA\in \Lambda(n, \F_q)^m$ and $L\in \GL(n, \F_q)$ as above. 
\end{definition}

We can define the $\tB$-planted random tensor problem as recovering the planted embedding of $\tB$ from $\tilde\tA$ sampled from $\PlLinER(n, m, q, \tB)$, as well as its decision version similarly to those for the planted TI space problem. By the common belief (e.g., in \cite{ABIKN23}) that the planted clique problem represents the most detectable instance among other types of planted graphs, we choose the same parameters of the planted random tensor conjecture as those in the planted TI space conjectures (\cref{conj:search} and \cref{conj:decision}). 

Note that we recover Definition~\ref{def:PlLinER} by setting $\tB$ to be the tuple of all-zero $d\times d$ matrices in Definition~\ref{def:PlRandomLinER}.

\paragraph{Our results: PSM and FGSS based on planted tensors.} Before stating our results, we briefly explain the underlying conjectures we need. Our constructions rely on two formal assumptions regarding the exponential-time hardness of detecting planted structures in random tensors. 

The first is the \textit{Planted Alternating Tensor with 2-Hints} (Conjecture~\ref{conj:\PRSwH}), which states that no sub-exponential time adversary can distinguish a random alternating tensor from one containing a fixed planted subtensor, even when provided with two hint vectors related to the hidden basis transformation. The second is the \textit{Planted Random Alternating Tensor with 2-Hints}  (Conjecture \ref{conj:\PRATwTwoH}), a weaker assumption where the planted subtensor is sampled randomly rather than being fixed. These assumptions guarantee exponential hardness of our subsequent PSM and FGSS protocols; see \cref{remark:justification} for the justification in this hint case.

\begin{conjecture}[Informal - Planted Alternating Tensor with 2-Hints]\label{1}
There exists some constant $C\geq3$ such that the following holds. For a large enough $n\in \N$, set $m=\lceil n/\log n\rceil$, and $d=\lceil n/C\rceil$. Fix $\tB\in\AT(d,m,\F_q)$. Then no randomized algorithm in time $q^{o(n)}$ can distinguish whether a random $\phi:\F_q^n\times\F_q^n\to\F_q^m$ is sampled from $\LinER(n, m=\lceil n/\log n\rceil, q)$ or $\LinER(n, m=\lceil n/\log n\rceil, q, \tB)$ with non-negligible probability, even with two vectors in the row space of the inverse hidden transformation as hints.
\end{conjecture}

\begin{theorem}
    Suppose $d=2r$ is an even number, and assume Conjecture~\ref{1} holds for parameters $C, n, m, d$ there. Then there exists a $2$-party PSM construction with public information for bilinear functions $f:\F_q^r \times \F_q^r \to \F_q^m$ achieving $(T,\varepsilon)$-security for every adversary running in time $T=q^{o(n)}$, with per-party message size $C \cdot d \log q$ and public information size  $O\left(\frac{d^3\log q}{\log d}\right)$.
\end{theorem}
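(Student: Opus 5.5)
The plan is to transplant the planted-subgraph PSM of \cite{ABIKN23} to tensors. Adjacency in a public planted graph is replaced by evaluation of a public planted alternating bilinear map on two secret vectors. Write $f(x,y)=(x^\tr B_1y,\dots,x^\tr B_my)^\tr$ with $B_i\in\F_q^{r\times r}$. Let $\tB=(\hat B_1,\dots,\hat B_m)\in\Lambda(d,\F_q)^m$ be the tuple of block matrices
\[
\hat B_i=\begin{pmatrix}0 & B_i\\ -B_i^\tr & 0\end{pmatrix},
\]
so that $(x;0)^\tr\hat B_i(0;y)=x^\tr B_iy$ for $x,y\in\F_q^r$.

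\textbf{Protocol.} Sample $(\tilde\tA,L)\sim\PPlLinER(n,m,q,\tB)$ with $n=Cd$ and $m=\lceil n/\log n\rceil$. Publish $\tilde\tA$ as the public information, and give $L$ to both parties as common randomness. Alice sends $u=L^{-1}(x;0;0)\in\F_q^n$ and Bob sends $v=L^{-1}(0;y;0)$, where the three blocks have sizes $r,r,n-d$. The referee outputs $(u^\tr\tilde A_iv)_i$.

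\textbf{Correctness and sizes.} Since $\tilde A_i=L^\tr\tilde A_i^{(0)}L$, where $\tilde A_i^{(0)}$ is the matrix before hiding, we get $u^\tr\tilde A_iv=(x;0;0)^\tr\tilde A^{(0)}_i(0;y;0)=x^\tr B_iy$, because only the planted block is touched. Each message is one vector in $\F_q^n$, so its size is $n\log q=Cd\log q$. The public information consists of $m\binom n2$ field elements, which is $O\big(\tfrac{n}{\log n}\,n^2\log q\big)=O\big(\tfrac{d^3\log q}{\log d}\big)$ because $n=Cd$ with $C$ constant.

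\textbf{Security.} I would build a simulator $\mathrm{Sim}(z)$ for the referee's view $(\tilde\tA,u,v)$, given only $z=f(x,y)$. It samples $\tA\sim\LinER(n,m,q)$ and two independent uniform vectors $u,v$, which are linearly independent with probability $1-O(q^{-n})$. It then corrects the tensor so that the evaluation matches $z$: for each $i$, set $A_i'=A_i+c_i\,(wv^\tr... )$, more precisely add $c_i(w\,w'^\tr-w'w^\tr)$, where $w,w'$ form a dual pair with $w^\tr u=1$, $w^\tr v=0$, $w'^\tr u=0$, $w'^\tr v=1$, and $c_i=z_i-u^\tr A_iv$. Because $\LinER$ is uniform on $\Lambda(n,\F_q)^m$, this shift makes $\tA'$ exactly uniform conditioned on $u^\tr A'_iv=z_i$ for all $i$.

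The security argument is then a short hybrid:
\begin{enumerate}
\item Conditioned on $x,y$, the real view is $\PPlLinER$ with hints $u=L^{-1}(x;0;0)$ and $v=L^{-1}(0;y;0)$, which are two vectors in the column/row span of $L^{-1}$ as in Conjecture~\ref{1}.
\item Conjecture~\ref{1}, applied to the fixed $\tB$, should show that this view is $(T,\varepsilon)$-indistinguishable from a uniformly random tensor together with the same hints, \emph{subject to the one constraint} $u^\tr\tilde A_iv=z_i$.
\item That constrained distribution is exactly $\mathrm{Sim}(z)$.
\end{enumerate}

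\textbf{Main obstacle.} Step 2 is the one I expect to be hardest. A truly random tensor with independent hints gives uniform $u^\tr\tilde A_iv$, not $z$, so the conjecture cannot be invoked naively. I would handle this by an explicit reduction. Given a challenge $(\tilde\tA,h_1,h_2)$ and knowing $z$, the reduction re-randomizes the rank-2 component of $\tilde\tA$ on $\linspan\{h_1,h_2\}$ using the dual-basis correction above. This fixes the evaluation to $z$ in both worlds, and it maps the planted distribution to the real view and the random distribution to $\mathrm{Sim}(z)$. Making this precise requires two checks. First, in the planted world, the entries of $\hat B$ coupling the two hint directions carry exactly the information $z$, so overwriting them loses nothing else. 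Second, the hint format in Conjecture~\ref{conj:\PRSwH} matches $L^{-1}$ applied to vectors supported in the first and second halves of the planted block. I would verify the first check by computing the action of the correction on $L^{-\tr}$ coordinates, where it changes only the $(x,y)$ bilinear form.
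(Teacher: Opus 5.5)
Your protocol, correctness computation, size count and simulator $\mathrm{Sim}(z)$ all coincide with the paper's. The paper gives the parties $U^\tr J_d^\tr L^{-\tr}$ and $V^\tr J_d^\tr L^{-\tr}$ rather than $L$, which is immaterial. What goes wrong is your resolution of the ``main obstacle'' in Step~2. Your reduction starts from a version of the conjecture whose random world is a uniform tensor with independent uniform hints, and that version is false. The reduction knows $z$, and the quantity it computes, $c_i=z_i-h_1^\tr\tilde A_ih_2$, vanishes for every $i$ with probability $1$ in the planted world but only with probability about $q^{-m}$ in the random world. So the reduction is itself a distinguisher, and deriving security from that statement gives nothing. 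Under any reading of the conjecture that is not trivially broken, the random world must already satisfy $h_1^\tr A_ih_2=z_i$. Then $c_i=0$ in both worlds, and your correction is just the identity.

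That constrained reading is how the paper formalizes the hypothesis. In the second distribution of the PATw2H assumption (Section~\ref{sec:assumptions}), $\hat x,\hat y$ are uniform and each $M_k$ is uniform on $\{M\in\Lambda(n,\F_q):\hat x^\tr M\hat y=z_k\}$, which is exactly your $\mathrm{Sim}(z)$. Hence ``real view $\approx\mathrm{Sim}(z)$'' is the assumption itself, applied with $x'=(x;0)$ and $y'=(0;y)$ (Lemma~\ref{lemma:bridge-sim}), and no re-randomization is needed. The one condition you do need and did not state is that the assumption only covers linearly independent $x',y'$, i.e.\ nonzero $x,y$. For $x=0$ your real message is $u=0$ while the simulated one is uniform. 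So zero inputs must be excluded, as the paper does in its PSM definition.
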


We now present a slight variant of Conjecture~\ref{1} by allowing the fixed $\tB$ there to be random.
\begin{conjecture}[Informal - Planted Random Alternating Tensor with 2-Hints]\label{2}
There exists some constant $C\geq3$ such that the following holds. For a large enough $n\in \N$, set $m=\lceil n/\log n\rceil$, and $d=\lceil n/C\rceil$. Let $\tB\smpl\mathcal{AT}(d,m,\F_q)$. Then no randomized algorithm in time $q^{o(n)}$ can distinguish whether a random $\phi:\F_q^n\times\F_q^n\to\F_q^m$ is sampled from $\LinER(n, m=\lceil n/\log n\rceil, q)$ or $\LinER(n, m=\lceil n/\log n\rceil, q, \tB)$ with non-negligible probability, even with two vectors in the row space of the inverse hidden transformation as hints.
\end{conjecture}

\begin{theorem}
    Assume Conjecture \ref{2} holds for parameters $C, n, m, d$ there. Then there exists a secret-sharing scheme for forbidden graph access
structures for sharing $(d\log q/\log d)$-bit secrets over $d$ parties that achieves $(T, 2\varepsilon)$-security, with share size $C\cdot d\log q$, and public information size $O\left(\frac{d^3\log q}{\log d}\right)$, for every adversary running in time $T=q^{o(n)}$.
\end{theorem}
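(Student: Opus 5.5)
Following the ABIKN23 template for forbidden-graph secret sharing from planted subgraphs, I would build the FGSS scheme on top of the 2-party PSM construction of the previous theorem, and get security from \cref{2}.

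\textbf{Construction.} The public information is a tensor $\tilde\tA=(L^\tr\tilde A_1L,\dots,L^\tr\tilde A_mL)\in\Lambda(n,\F_q)^m$ drawn from $\PPlLinER(n,m,q,\tB)$, where $\tB$ encodes the forbidden graph $G$ on $[d]$. For each pair $\{i,j\}$, the slices $B_1,\dots,B_m$ restricted to $(i,j)$ give a vector $\tB_{ij}\in\F_q^m$. At non-edges this vector is uniformly random and independent of everything else. At edges it is set to the secret $s\in\F_q^m$, masked in the standard way by a per-edge random pad from CDS/PSM that is distributed inside the shares. Packing $m=d/\log d$ field elements per pair over $\binom d2$ pairs gives public size $O(d^3\log q/\log d)$ and secret length $m\log q=d\log q/\log d$ bits, as claimed.

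The share of party $i$ is the $i$-th row $\ell_i$ of $L^{-1}$, a vector in $\F_q^n$. This gives share size $n\log q=C d\log q$. It is exactly a hint of the kind allowed in \cref{2}.

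\textbf{Correctness.} Parties $i,j$ compute $\ell_i^\tr\tilde A_k\ell_j=e_i^\tr\tilde A_k e_j$ for every $k$. This recovers $\tB_{ij}$, and hence $s$ when $\{i,j\}$ is an edge. It is the same bilinear-evaluation mechanism as in the PSM theorem.

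\textbf{Security.} Fix a non-adjacent pair $\{i,j\}$ and two secrets $s_0,s_1$. The view consists of $\tilde\tA$ and the two hints $\ell_i,\ell_j$.
\begin{itemize}
\item \emph{Step 1.} For secret $s_b$, I would argue that the view is distributed as in the planted model with a tuple $\tB^{(b)}$. Because $\{i,j\}$ is a non-edge, the coordinate seen by the pair is uniform regardless of $b$. All other entries of $\tB^{(b)}$ are, conditioned on the public values, uniform in the random-pad sense.
\item \emph{Step 2.} I would then invoke \cref{2} to replace each planted view with a $\LinER$ sample plus two hints. This needs a hybrid, reindexing so that $i,j$ are the coordinates supplied as hints, and uses that a random $\tB$ is the right distribution once the pads are averaged.
\item \emph{Step 3.} Under $\LinER$ with two hints, the view is independent of $b$, because the hints are independent of a uniform tensor.
\end{itemize}
The triangle inequality over two hybrids then gives advantage $2\varepsilon$ against adversaries running in time $T=q^{o(n)}$.

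\textbf{Main obstacle.} The delicate step is Step 2. The conjecture speaks of a uniformly random $\tB$, but the scheme's $\tB$ has structured edge entries tied to $s$. I would handle this by masking each edge entry with an independent uniform pad held in the corresponding shares' PSM component, so that, from the pair's view, the marginal of $\tB$ is exactly uniform. Making this compatible with the share-size bound $Cd\log q$ requires the pads to be derivable from the hint vectors themselves. For instance, one can use the PSM theorem's encoding, where $\ell_i$ doubles as the PSM randomness, and I would check this carefully.
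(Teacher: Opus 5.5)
\textbf{There is a genuine gap, and it is in the construction rather than in Step 2.}

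You put the secret inside the planted tensor ($\tB_{ij}=s+\rho_{ij}$ on edges) and the pads $\rho_{ij}$ in the private shares. Correctness then forces party $i$ to store $\rho_{ik}\in\F_q^m$ for every neighbour $k$. Its share therefore has $n\log q+\deg(i)\,m\log q$ bits. For a dense $Q_d$ this is $\Theta(d\cdot m\log q)=\Theta(d^2\log q/\log d)$, which breaks the claimed $Cd\log q$.

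Your fallback, deriving the pads from the hint vectors, is not specified, and nothing in \cref{2} supports it. A pad computable from $(\ell_i,\ell_k)$ and public data is in effect a second planted tensor correlated with $\tB$. The planted object would then no longer be a uniformly random tensor.

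Two smaller slips:
\begin{itemize}
\item With hiding $A\mapsto L^\tr AL$, the share must be the column $L^{-1}e_i$, not a row of $L^{-1}$, for $\ell_i^\tr(L^\tr\tilde A_kL)\ell_j=e_i^\tr\tilde A_ke_j$ to hold.
\item Your claim that $\tB$ looks uniform from the pair's view is false on edges incident to $i$ or $j$, since the pair holds those pads.
\end{itemize}

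\textbf{The missing idea.} What the paper does is swap the roles of pad and ciphertext, and put the ciphertext in the \emph{public} information, where size is cheap.
\begin{itemize}
\item Plant a uniformly random $\tB$, so \cref{2} applies verbatim, and let its planted entries serve as the one-time pads.
\item Publish $I=(\tilde\tA,\tilde\tB)$, where $\tilde\tB(i,j,\cdot)=x+((L')^\tr\tB L')(i,j,\cdot)$ on edges and $0$ on non-edges, with $L'\in\GL(d,\F_q)$ an extra random change of basis.
\item Shares remain single vectors $s_i=L^{-1}J_dL'e_i\in\F_q^n$.
\item An authorized pair computes $s_i^\tr\tilde\tA s_j=((L')^\tr\tB L')(i,j,\cdot)$ and subtracts it from $\tilde\tB(i,j,\cdot)$.
\item The extra public tensor costs $d^2m\log q$ bits, which is absorbed in $O(d^3\log q/\log d)$.
\end{itemize}

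\textbf{Why your Step 2 concern disappears.} The tuple in the \PRATwTwoH{} assumption contains $\tB$ itself. So the reduction can sample the secret and $L'$ and compute $\tilde\tB$ from $\tB$ by post-processing. In the second (simulated) distribution, the public tensor depends on $\tB$ only through the single entry $((L')^\tr\tB L')(i,j,\cdot)$ at the hinted non-edge pair. That entry is independent of the edge entries, so $\tilde\tB$ is a perfect one-time pad and the forbidden pair guesses the secret with probability exactly $q^{-m}$.

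The same argument shows your per-edge-pad scheme is in fact secure under \cref{2}: the reduction can set $\rho_{ab}=\tB_{ab}-s$ itself. Its only defect is the share size.
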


\begin{remark}[Justification for the security with two hints]\label{remark:justification}
In planted cliques, it is believed that revealing two vertices in the planted clique would not speed up the recovery of the clique. This is because a pair of vertices in the clique looks indistinguishable from a pair of vertices outside the clique. No known algorithms for planted clique could exploit knowing constantly many vertices in the clique. This supports the use of planted cliques with hints in \cite{ABIKN23}.

Similarly, for the planted TI space problem, revealing two vectors in the planted TI space seems not helpful to recover the TI space. More specifically, let $\phi:\F_q^n\times\F_q^n\to\F_q^m$ be sampled from $\LinER(n, m=\lceil n/\log n\rceil, q, d)$ with $d=\lceil n/C\rceil$ for some constant $C\geq 3$. Let $V\leq\F_q^n$ be the planted TI space. Let $v, v'\in \F_q^n$ (not necessarily in $V$). Construct $\phi_v, \phi_{v'}:\F_q^n\to\F_q^m$. As $m=\lceil n/\log n\rceil \ll n$, the left kernels of $\phi_v$ and $\phi_{v'}$ intersect in a subspace of dimension $n-2\cdot \frac{n}{\log n}$, which is large enough to accommodate a subspace of dimension $\lceil n/C\rceil$. In particular, whether $v$ and $v'$ are in $V$ or not cannot be reflected by examining such ``neighborhood'' information, similar to the graph setting.

If some algorithm could successfully exploit knowing two vectors in the planted TI space and run in time $q^{O(n)}$, then we would obtain an algorithm in time $q^{O(n)}$, as enumerating two vectors has cost $q^{2n}$. However, we don't know how to achieve this, and the best algorithm we know for the general case still runs in time $q^{O(n\log n)}$ (\cref{prop:general-algorithm}).
\end{remark}

Compared with the construction of~\cite{ABIKN23} in the same setting, our tensor-based protocols enjoy two advantages. First, both our PSM and FGSS constructions achieve security against subexponential-time adversaries under our conjectures, more precisely against adversaries running in time \(q^{o(n)}\), whereas the construction of~\cite{ABIKN23} only guarantees security against quasi-polynomial-time adversaries. As a result, to attain the same level of security, their parameters must scale superpolynomially relative to ours. In particular, for the FGSS construction, our scheme maintains compact private shares in the multi-bit secret-sharing setting. For the PSM construction, the difference is more subtle: the planted-subgraph protocol of~\cite{ABIKN23} is tailored to a single Boolean output bit, whereas our construction natively supports bilinear maps with multi-coordinate output in one planted tensor instance. Thus, while the communication is comparable under a natural bit-wise repetition, our PSM protocol handles multi-bit output in a more direct algebraic way under a conjecturally stronger security foundation. We will quantify this improvement more precisely in the next subsection after presenting our concrete constructions.



\subsection{Overview of cryptographic protocols based on planted tensors}

In this section we give an overview of our protocols for PSM and FGSS. For this, we shall review the corresponding protocols by Abram, Beimel, Ishai, Kushilevitz, and Narayanan \cite{ABIKN23}.

\subsubsection{Review of PSM and FGSS with public information from planted graphs} To optimize communication cost, \cite{ABIKN23} introduced a variant of the standard model for PSM and FGSS with \textit{public information}. In this model, an offline setup phase generates a large, publicly accessible data object, alongside compact private secrets for each party. This public object is structured to act as a complex computational landscape. While it reveals nothing on its own under appropriate computational assumptions, parties can use their secrets as precise coordinates to navigate this landscape and extract a specific result. By contrast, any adversary lacking these coordinates would remain lost. This division of resources from the public information model makes the ``planted problem'' paradigm a very natural fit. 

In \cite{ABIKN23}, the public data object is instantiated as a  randomly generated ambient graph with a random subgraph (not necessarily a clique) planted inside, and the compact private secrets act as individual injection mappings. Note that if the planted subgraph is fixed to be a clique, the underlying problem will degenerate to the planted clique problem. The authors of \cite{ABIKN23} then construct a $2$-party PSM protocol for Boolean functions $f:[d]\times [d]\to\{0,1\}$. First, $f$ is encoded as a bipartite graph $H$ with $d$ vertices on each side, where an edge $(x,y)$ exists if and only if $f(x,y)=1$. The setup samples a large random graph $G$ and plants $H$ inside it by following the clique planting procedure. 
Then $G$ is published as public information while giving both parties the embedding of $H$ into $G$ as secrets. On inputs $x$ and $y$, each party sends the location of its corresponding vertex in $G$, and the referee outputs $1$ if the two revealed vertices are adjacent in $G$ and outputs $0$ otherwise. 

The construction of FGSS in \cite{ABIKN23} is more involved but follows a similar graph-structural approach, i.e., making a planted subgraph whose $d$ vertices correspond to $d$ parties for the secret-sharing. To share a single-bit secret from $\{0,1\}$, the dealer first samples a random small graph $H$ from $\mathcal{G}(d,p=1/2)$ and plants it into a larger random graph. Let the resulting graph after planting be $G$. Each party $i$ receives the location $\phi(i)$ of its corresponding vertex in the embedding of $G$ as a respective private share. Then, based on the forbidden graph $Q$ defining the access structure, the dealer derives a new graph $H'$ by removing from $H$ all edges that are not present in $Q$. Intuitively, this ``erases'' the information carried by the edge between pairs of parties (i.e., non-adjacent vertices in $Q$) that are forbidden to recover the secret. The public information is then released as $(H', G')$, where $G'$ is the same as $G$ if the secret bit is $1$, and $G'$ is the complement graph of $G$ if the secret bit is $0$. 

The secret recovery for parties $i,j$ is to compare the edge $\{i,j\}$ in $H'$ with the edge $\{\phi(i),\phi(j)\}$ in $G'$: if they are both present or both absent in $H'$ and $G'$, the output is $1$, otherwise it is $0$. For authorized pairs, the edge or non-edge is retained in $H'$ and always matches its embedding in $G$, ensuring correct recovery. However, for forbidden pairs, $\{i,j\}$ is always a non-edge in $H'$, while the existence of edge $\{\phi(i),\phi(j)\}$ in $G'$ is fully random with probability $1/2$, because it depends on the existence of edge $\{i,j\}$ in the planted graph $H$ that is sampled from $\mathcal{G}(d,p=1/2)$.

    The security of the above PSM and FGSS constructions relies on the computational intractability of detecting the embedded small graphs. In particular, for the FGSS scheme with public information, \cite{ABIKN23} introduced \textit{the Planted Random Subgraph with Hints} (\PRSwH) assumption. The ``hints'' parameter explicitly reflects the protocol mechanics where two vertex locations are leaked to the forbidden pairs during the secret recovery phase. The asymptotic efficiency of these protocols is directly dictated by $n$, the total number of vertices in the public ambient graph $G$. Because each party's message or private share is simply the index of a specific vertex in $G$, the communication cost is exactly $\log n$ bits. They further conjecture that under the \PRSwH assumption, the planted random subgraph of order $d$ remains computationally hidden in $G$ when $n = d^{1+\delta}$ for any arbitrarily small constant $\delta > 0$, against all non-uniform adversaries running in $d^{o(\log d)}$ time. Consequently, for $d$ parties sharing a single-bit secret, the required private share size is tightly compressed to $\log(d^{1+\delta}) = (1+\delta)\log d$. This beats the best-known, and typically the best possible, information-theoretic FGSS scheme with share size $2^{\tilde{O}(\sqrt{\log d})}$ \cite{LVW17}. It also offered an improvement over the best-known computational FGSS scheme (without public information) with share size $\operatorname{poly}(\log d)$ that assumes the existence of one-way functions with sub-exponential security \cite{ABI+23}.

Nevertheless, the work of \cite{ABIKN23} was intrinsically tied to the structure of graphs. Graphs are fundamentally representations of pairwise relationships over the binary field $\F_2$, which naturally limits the scope of their constructions to Boolean functions and yields security only against quasipolynomial-time adversaries. This motivates us to extend this ``planted'' paradigm beyond graphs to encompass scenarios with more secure algebraic structures for at least the same cryptographic applications. As discussed in the previous subsections, the planted TI space problem provides a promising candidate in this direction. Analogous to how the planted clique problem extends to planted random subgraph problems, we now generalize the planted TI space problem to the planted random subtensor setting and then use it to construct our PSM and FGSS with public information for exponential security at a comparable cost.

\subsubsection{Our construction of PSM and FGSS from planted tensors}

Similar to prior cryptographic constructions based on planted graphs, the mechanics of our protocols also leak partial information to the adversary. In the FGSS scheme from planted graphs, when unauthorized parties attempt to recover the secret, they have their specific vertex locations within the large ambient graph. This structural leakage is encoded in their planted random subgraph with hints (\PRSwH) assumption, where the hints are precisely these exposed vertex indices. In our FGSS scheme, the private shares distributed to the parties are algebraically tied to the random change-of-basis matrix $L$ used during the hiding step of planting. When unauthorized parties attempt to evaluate the tensor, their shares leak partial information about this underlying hiding transformation. Specifically, these leaked hints manifest as two vectors in the row space of the inverse transformation $L^{-1}$. This will be seen more explicitly below when we detail the concrete construction of our FGSS scheme. We will formalize this scenario as the \textit{Planted Random Alternating Tensor with 2-Hints} (\PRATwTwoH) assumption in \cref{sec:assumptions}.
We now present our constructions of PSM and FGSS based on planted tensors.

\paragraph{PSM with public information from planted tensors.} 
To be consistent with \cite{ABI+23}, we first explain how to plant in a general bilinear map $f:\F_q^r\times \F_q^r\to \F_q^m$ (not necessarily alternating). 
A bilinear function $f:\F_q^r\times \F_q^r\to \F_q^m$ can be represented by a tuple of matrices $H=(H_1,\dots,H_m)$, where the $k$th output coordinate is given by $x^\tr H_k y$. To accommodate the alternating bilinear map setting, we first convert this tuple into an alternating matrix tuple $\tB=(B_1,\dots,B_m)$, where $B_i=\begin{bmatrix}
   0 & H_i \\
   -H_i^\tr & 0
\end{bmatrix}$. 
We can then use $\tB$ in Definition~\ref{def:PlRandomLinER} as a way to hide $f$ in a random alternating bilinear map $\tA$.
The resulting tensor $\tA$ is published as the public information. 

The two parties receive short private values derived from the hidden basis transformation. These private values allow them to convert their inputs $x,y\in\F_q^r$ into two vectors $\hat x, \hat y\in\F_q^n$ that are aligned with the hidden planted block inside $\tA$. The referee then evaluates the public tensor on these two vectors and outputs $\hat x^\tr A_1 \hat y_1,\dots,\hat x_0^\tr A_m \hat y$. By construction, the result is exactly $f(x,y)$. Intuitively, the public tensor $\tA$ plays the role of a large obfuscated algebraic landscape, while the two private values serve as compact coordinates that let the two parties navigate this landscape and reach the planted tensor $\tB$. Security is based on the hardness of distinguishing a random alternating tensor from one containing a hidden planted subtensor, even when the adversary also sees the two transmitted vectors. 

In this way, we obtain a $2$-party PSM protocol with public information for bilinear functions over finite fields, extending the planted graph paradigm of \cite{ABIKN23} to a more generalized algebraic setting. This is because a bilinear function $f:\F_q^r\times \F_q^r\to \F_q^m$ may encode a Boolean function $b:[r]\times [r]\to \{0, 1\}$ by the following rule: if $b(i, j)=1$, then $f(e_i, e_j)=u$ for some non-zero $u\in \F_q^m$; if $b(i, j)=0$, then $f(e_i, e_j)=\bzero$ where $\bzero$ denotes the zero vector. 

\paragraph{FGSS with public information from planted tensors.}
To better illustrate our construction of the secret-sharing scheme for $d$ parties whose access structure is defined by a forbidden graph $Q_d=([d], E)$, we first consider a relatively simple version step by step: 
\begin{enumerate}[leftmargin=*,label=Step \arabic*., start=0]
    \item Index each of the $d$ parties from $1$ to $d$, and let each party know its own index.
    \item The dealer uniformly samples a random small tensor $\tB$ based on $\mathcal{AT}(d, m, q)$, and saves a copy $\widehat{\tB}=\tB$ for the upcoming changes. Denote by $(\widehat{B}_1,\dots,\widehat{B}_m)\in\Lambda(d,\F_q)^m$ the frontal slices of $\widehat{\tB}$. 
    \item According to the forbidden graph $Q_d$, the dealer zeros out the value of $\widehat{B}_k(i,j)$ for all $(i,j)$ satisfying $i\nsim j$ in $Q_d$ and all $k\in[m]$. (Intuitively, this step is doing ``puncturing'' at each position $(i,j)$ from the frontal direction through tensor $\widehat{\tB}$ for all the non-edges $\{i,j\}$ in $Q_d$. This is also analogous to the FGSS construction based on planted graphs, where the dealer removes from the randomly sampled $H$ all edges that are absent in $Q_d$.)
    \item The dealer uniformly samples a large tensor $\tA$ based on $\mathcal{AT}(n, m, q)$ and plants the initially sampled $\tB$ into $\tA$ to produce the resulting tensor denoted by $\tilde{\tA}\in\AT(n, m,\F_q)$. (Note that planting is not simply a procedure of ``embedding'' but also includes ``hiding'' with some randomly sampled transformation $L\in\GL(n,\F_q)$. In other words, the top-left $(d\times d\times m)$-block of $L^{-\tr}\tilde{\tA}L^{-1}$ exactly recovers the planted $\tB$.)
    \item The dealer sets the public information to $(x\cdot\widehat{\tB},\tilde{\tA})$ for a $(q-1)$-ary secret $x\in\{1,\dots,q-1\}$ hidden as a scalar, and assigns party \(i\) the share \(s_i:=L^{-1}[I_d \ 0]^\tr e_i\). (This is a more general way to encode a larger secret compared to the planted graph-based FGSS, where the public information is restricted to be either the large graph or its complement for a single-bit secret.)
\end{enumerate}
Denote $J_d =[I_d \ 0]^\tr \in \F_q^{n\times d}$. It can be seen that if each party $i$ is given $s_i=L^{-1}J_de_i$, then any two parties $i$ and $j$ with $i\sim j$ in $Q_d$ can recover the secret by computing $(L^{-1}J_de_i)^{\tr}\tilde{\tA}(L^{-1}J_de_j)=e_i^{\tr}\tB e_j=e_i^{\tr}\widehat{\tB} e_j$, so they can easily compare $e_i^{\tr}\widehat{\tB} e_j$ with the public $e_i^{\tr}(x\cdot\widehat{\tB}) e_j$ to recover the secret $x$, while any two parties $i^\prime$ and $j^\prime$ with $i^\prime\nsim j^\prime$ in $Q_d$ can only get a zero vector $e_{i^\prime}^{\tr}(x\cdot\widehat{\tB}) e_{j^\prime}$ and thereby gaining no information about $x$ unless they can recover $\tB$ with their limited hints of $L^{-1}J_de_{i'}$ and $L^{-1}J_de_{j'}$.

There are two aspects of improvements we can make to the secret-sharing scheme proposed above. One aspect is that we do not want adversaries to get a vector that directly reveals certain information about $L^{-1}$ (e.g., $L^{-1}J_de_i$) but something that looks more random (e.g., $L^{-1}J_du_i$ where $u_i\smpl\F_q^d\setminus \{0\}$). To this end, the dealer can sample an additional transformation $L^{\prime}\in\GL(d,\F_q)$ and define $L^{\prime\prime}=J_dL^\prime\in\F_q^{n\times d}$. This is used for adding obfuscation to the initialization of $\widehat{\tB}$, which is not an identical copy of $\tB$ anymore, but sets to $\widehat{\tB}=(L^{\prime})^\tr \tB L^{\prime}$. With this new $\widehat{\tB}$, the subsequent steps remain the same, except that the dealer needs to allocate the secret share for party $i$ as $L^{-1}L^{\prime\prime}e_i$ instead of $L^{-1}J_de_i$. Similar computations as before show that any two parties $i$ and $j$ with $i\sim j$ in $Q_d$ can get $(L^{-1}L^{\prime\prime}e_i)^{\tr}\tilde{\tA}(L^{-1}L^{\prime\prime}e_j)=(L^{\prime}e_i)^{\tr}\tB(L^{\prime}e_j)=e_i^\tr \widehat{\tB} e_j$ and compare it with the public $e_i^{\tr}(x\cdot\widehat{\tB}) e_j$, while any two parties $i^\prime$ and $j^\prime$ with $i^\prime\nsim j^\prime$ in $Q_d$ get a random vector $e_{i^\prime}^\tr {L^\prime}^\tr \tB L^\prime e_{j^\prime}$ and the public $e_{i^\prime}^{\tr}(x\cdot\widehat{\tB}) e_{j^\prime}=0$. The other aspect of improvement is about the length of secret that can be shared. It is straightforward to boost the secret size from $(q-1)$ to $q^m$ in this way: rather than hiding a scalar $x\in\{1,\dots,q-1\}$ in the entire tensor $x\cdot\tB$, we consider a vector $x:=(x_1,\dots,x_m)^\tr\in\F_q^m$ and additively hide each component by updating $e_i^\tr\tilde{\tB} e_j=x+ e_i^\tr\widehat{\tB} e_j$ for all $i\sim j$ in $Q_d$, so any authorized pair of parties can recover the secret by comparing the components one by one and thereby recovering the additive offset hidden in each component. The final version of our secret-sharing schemes is depicted in \cref{protocol:secret-sharing} in detail.

\paragraph{Comparisons of protocols based on planted subgraphs and planted subtensors.} Recall that planted subgraph problems assume that the planted subgraph order $D$ and the ambient graph order $N$ are related as $D=N^{1/2-\delta}$ for some small constant $\delta$, and the hardness assumption is against $N^{o(\log N)}$-time adversaries. On the other hand, for planted tensor problems, the setting is to plant in $\tB\in \AT(d, m, q)$ to $\tA\in \AT(n, m, q)$ where $d=\lceil n/C\rceil$ for some constant $C\geq 3$ and $m=\lceil n/\log n\rceil$, and the hardness assumption is against $q^{o(n)}$-time adversaries. 

Therefore, to achieve the same security level, we would have $N^{a\cdot \log N}=q^{b\cdot n}$ where $a$ and $b$ are positive constants determined by the best algorithms for planted cliques and planted TI spaces, respectively. This yields that 
\begin{equation}\label{eq:logN}
\log N=\Theta(\sqrt{n\log q}).
\end{equation}

On the other hand, in cryptographic applications, the protocols based on planted subgraphs have the public information size (the graph) as $N^2$, and the communication costs as $\log N$. The protocols based on planted tensors have the public information size (the tensors) as $n^2m\log q=n^3\log q/\log n$, and the communication costs as $n\log q$. By Equation~\ref{eq:logN}, we see that the graph setting public information size is $\binom{N}{2}=2^{\Theta(\sqrt{n\log q})}$, which is (moderately) exponential in the public information size in the tensor setting. On the other hand, the communication costs of protocols based on planted tensors are $n\log q=\Theta((\log N)^2)$, that is a polynomial increase of the communication costs of protocols based on planted subgraphs.

Finally, we note that compared to protocols based on planted subgraphs primarily focusing on single bits, protocols based on planted tensors naturally transmit $m\log q$ bits. Although there are generic transformations based on symmetric encryption that can remove this restriction \cite{ABI+23,ABIKN23}, at comparable message or share sizes, this provides only sub-exponential security under subexponentially hard one-way functions. The main benefit of our new conjecture is that it enables exponential security for the construction.

\subsection{Discussions and open problems}

In this paper we introduced the planted totally-isotropic space problem as a linear algebraic analogue of the planted clique problem. We carried out an initial investigation into this problem and current evidence suggests that this problem could be exponential-time hard. Based on this, we demonstrated some cryptographic protocols based on planted cliques can be adapted to be based on planted totally-isotropic spaces. 

Several open questions can be raised immediately. First, we need to improve our understanding of $\LinER(n, m=\lceil n/\log n\rceil, q, d=\lceil n/C\rceil)$ for $C\geq 3$. Our current best algorithm runs in time $q^{O(n\log n)}$, so even getting an algorithm in time $q^{O(n)}$ would be of great interest. Second, it is an open problem to devise public-key encryption schemes based on planted TI spaces, in line with the works of \cite{ABW10,GHJS25}. Third, our understanding regarding relations between the search and decision versions of the planted TI space problem is quite limited, unlike the planted clique problem \cite{HS24}.

We also note a close relation of this work with an algorithmic problem underlying the Unbalanced Oil and Vinegar (UOV) digital signature scheme \cite{KP99}. Briefly speaking, the following algorithmic problem is related to the key-recovery problem of UOV: plant in a random totally-isotropic space of dimension $d$ in an $m$-tuple of $n\times n$symmetric matrices over $\F_q$, and recover this TI space. In the literature of UOV (see \cite{UOV2025} and references therein), it is usually set that $m=d=n/C$ for some constant $C> 2$. On the other hand, we use alternating matrices instead of symmetric, and we set $m=\lceil n/\log n\rceil$ instead of linear in $n$. Furthermore, our research is largely inspired by a comparison with the planted clique problem as discussed. It will be interesting to compare these two settings more closely. 

\paragraph{Organization.} The remainder of this paper is organized as follows. In \cref{sec:prelim}, we set up our notation in this paper. In \cref{sec:algorithm}, we provide a detailed treatment of the planted totally-isotropic space problems to support our conjectures. In \cref{sec:assumptions}, we generalize the planted totally-isotropic space problems to a family of planted alternating tensor problems for the purpose of cryptographic applications. In \cref{sec:PSM,sec:secretsharing}, we present the main construction and security proof for our PSM protocols and secret sharing schemes, respectively. 

\section{Preliminaries}\label{sec:prelim}


\paragraph{Graphs.} For $n \in \mathbb{N},[n]:=\{1,2, \ldots, n\}$. Let $G=([n],E)$ be a simple and undirected graph on the vertex set $[n]$, and the edge set $E\subseteq \binom{[n]}{2}$. For $i,j\in[n]$, we use $i\sim j$ and $i\nsim j$ to denote $\{i, j\}\in E$ and $\{i, j\}\notin E$, respectively.

\paragraph{Matrices.} Let $\mathbb{F}_q$ be the finite field of order $q$. Let $\mathrm{M}(n \times m, \F_q)$ be the linear space of $n\times m$ matrices over $\F_q$, and $\mathrm{M}(n, \F_q):=\mathrm{M}(n \times n, \F_q)$. Given $A\in \M(n\times m, \F_q)$, the $(i,j)$th entry of $A$ is denoted by $A(i,j)\in \F_q$. 

Let $\GL(n, \F_q)$ be the group of $n \times n$ invertible matrices over $\F_q$. We use $\GL(d\times n, \F_q)$ to denote the variety of rank-$\min(d, n)$ matrices in $\M(d\times n, \F_q)$. 

For a matrix $A \in \GL(n, \F_q)$, let $A^{-1}$ be its inverse. For a matrix $B \in \M(n, \F_q)$, let $B^{\tr}$ be its transpose. We use $\bzero_{d\times n}$ to denote an $d\times n$ all-zero matrix, and $\bzero_{d}$ for a $d\times d$ all-zero matrix.

\paragraph{Vector spaces.} We use $\F_q^n$ to denote the vector space of length-$n$ column vectors over $\F_q$. 
The $i$th standard basis vector of $\F_q^n$ is denoted by $e_i$.

Let $V$ be a vector space. We use $\Gr(d, V)$ to denote the Grassmannian of $d$-dimensional subspaces of $V$. 


\paragraph{Tensors, bilinear maps, and group actions on them.} We use $\mathrm{T}(\ell \times n \times m, \F_q)$ to denote the linear space of $\ell\times n\times m$ tensors
over $\F_q$. We use the fixed-width teletype font for tensors and matrix tuples, like $\tA$, $\tB$, etc. Given $\tA\in \T(\ell\times n\times m, \F_q)$, the $(i,j,k)$th entry of $\tA$ is denoted by $\tA(i,j,k)\in \F_q$. The all-zero tensor in $\mathrm{T}(\ell \times n \times m, \F_q)$ is denoted by $\mathtt{0}_{\ell\times n\times m}$, i.e., $\mathtt{0}_{\ell\times n\times m}(i,j,k)=0$ for all $i \in[\ell], j \in[n], k \in[m]$. 

We can slice $\tA$ along one direction, turning the tensor into several matrices called \textit{slices} of $\tA$. In this paper, we mainly work with slicing along the third coordinate to obtain the \emph{frontal slices}, namely $m$ matrices $A_1, \dots, A_m\in \M(\ell\times n, \F_q)$, where $A_k(i,j)=\tA(i,j,k)$ for each $k\in[m]$. We also use $\tA^\tr$ to denote the tensor whose frontal slices are the transpose of those of $\tA$, i.e., $\tA^\tr(i,j,k) = \tA(j,i,k)$. We sometimes abuse the notation of tensors and matrix tuples, e.g., writing $\tA=(A_1,\dots,A_m)$, as a matrix tuple can define a tensor by specifying its frontal slices.


We consider tensors and bilinear maps as interchangeable notions for the following reason: from any tensor $\tA\in \T(\ell\times n\times m, \F_q)$, by treating each frontal slice of $\tA$ as a bilinear form, we can construct a corresponding bilinear map $\phi_\tA:\F_q^\ell\times \F_q^n\rightarrow \F_q^m$ sending $(u,v)\in \F_q^\ell\times \F_q^n$ to $(u^{\tr}A_1v,\dots,u^{\tr}A_m v)^{\tr}\in\F_q^m$. For convenience, we denote $u^{\tr}\tA v:=\phi_\tA(u,v)$. Conversely, from any bilinear map $\phi:\F_q^\ell\times \F_q^n\rightarrow \F_q^m$, we can construct a corresponding tensor $\tA_{\phi}$ by defining $\tA_{\phi}(i,j,k)$ as the $k$th coordinate of $\phi(e_i,e_j)$ for $i\in[\ell],j\in[n],k\in[m]$.

Given $L\in \GL(\ell, \F_q)$ and $R\in \GL(n, \F_q)$, let $L\tA R$ be the $\ell \times n\times m$ tensor whose $k$th frontal slice is $L A_k R$. Equivalently, for a bilinear map $\phi:\F_q^\ell\times \F_q^n\rightarrow \F_q^m$, let $\phi_{L,R}(u,v):=\phi(L^\tr u, Rv)$. If $\ell=n$, let $\phi_L(u,v):=\phi_{L,L^\tr}(u,v)=\phi(L^\tr u, L^\tr v)$.

For $v\in \F_q^n$, define $\phi_v:\F_q^n\to\F_q^m$ as $\phi_v(u):=\phi(v, u)$. The degree of $v$ in $\phi$ is defined as $\deg_\phi(v):=\operatorname{rank}(\phi_v)$ because it represents the neighborhood information of $v$. 



\paragraph{Alternating matrices, alternating tensors, and alternating bilinear maps.} A matrix $A \in \M(n, \F_q)$ is \textit{skew-symmetric}, if for any $u,v \in \F_q^n$, we have $u^\tr Av = -v^\tr Au$, or equivalently $A=-A^\tr$. A matrix $A \in \M(n, \F_q)$ is \textit{alternating}, if for any $u \in \F_q^n$, we have $u^\tr Au = 0$. In other words, $A$ represents an \textit{alternating bilinear form}. Note that in characteristic $\neq 2$, alternating is the same as skew-symmetric. The linear space of $n\times n$ alternating matrices over $\F_q$ is denoted by $\Lambda(n, \F_q)$.

A tensor $\tA\in\T(n\times n\times m, \F_q)$ is \textit{frontal-alternating}, if for any $u \in \F_q^n$, we have $u^\tr \tA u = 0$, or equivalently, all of its frontal slices are alternating matrices in $\Lambda(n, \F_q)$. The linear space of $n\times n\times m$ frontal-alternating tensors over $\F_q$ is denoted by $\AT(n, m,\F_q)$.

A bilinear map $\phi:\F_q^n\times \F_q^n\rightarrow \F_q^m$ is \textit{alternating}, if for any $u \in \F_q^n$, we have $\phi(u,u)= 0$.

\paragraph{Distributions over random bilinear maps.} We denote by $\mathcal{AT}(n, m, q)$ the random distribution over all alternating bilinear maps from $\F_q^n\times \F_q^n$ to $\F_q^m$ defined in \cref{def:LinER}, 
denote by $\mathcal{AT}(n, m, q, d)$ the random distribution over all alternating bilinear maps from $\F_q^n\times \F_q^n$ to $\F_q^m$ after planting a $d$-TI space defined in \cref{def:PlLinER}, and denote by $\mathcal{AT}(n, m, q, \tB)$ the random distribution over all alternating bilinear maps from $\F_q^n\times \F_q^n$ to $\F_q^m$ after planting a given tensor $\tB$ defined in \cref{def:PlRandomLinER}.




\paragraph{Cryptographic adversary.} 
An \textit{adversary} is a probabilistic polynomial-time (PPT) algorithm.

\section{On the hardness of the planted totally-isotropic space problem}\label{sec:algorithm}

\subsection{Review of some algorithms for the planted clique problem}

\paragraph{The brute-force algorithm.} To recover the planted clique in $G$ sampled from $\ER(n, 1/2, d)$, the brute-force algorithm is to enumerate all size-$d$ vertex subsets and verify if it is a clique. If the size of the clique is more than $2\log n$, it would be the planted one with high probability. This gives a natural $\binom{n}{d}\cdot \poly(n)$-time algorithm.

\paragraph{A quasipolynomial-time algorithm.} The planted clique problem for graphs from $\ER(n, 1/2, d)$ admits a quasipolynomial-time algorithm. 
Take a graph $G$ from $\ER(n, 1/2)$ and plant in a size-$d$ clique $C$ with $d=\omega(\log n)$. Recall that the maximum clique of a random graph from $\ER(n, 1/2)$ is of size $(2+o(1))\log n$. One can then enumerate over vertex sets of size $3\log n$. Note that any subset $D$ of $C$ of size $3\log n$ is a clique, and every vertex in $C$ is connected to every vertex in $D$. From these two properties, once such $D\subseteq C$ of size $3\log n$ is found, we can recover $C$ with high probability.

\paragraph{The degree profile algorithm.}  Ku\v{c}era \cite{Ku95} observed that if $d\geq C\cdot \sqrt{n\log n}$ then the size-$d$ planted clique can be recovered in polynomial time. This is based on the fact that in this case those vertices in the planted clique are almost surely those with the largest degrees. Ku\v{c}era extended this observation and showed that it is applicable to some more complicated random models. 


\paragraph{The spectral algorithm.} In \cite{AKS98}, Alon, Krivelevich and Sudakov presented a spectral algorithm for the planted clique when $d=C\cdot\sqrt{n}$. Their method exploits the fact that a large clique acts as a low-rank perturbation to the random graph, causing the second largest eigenvector of the adjacency matrix to concentrate on the clique vertices. By computing this eigenvector and selecting vertices with large corresponding coordinates, the algorithm recovers the planted clique in polynomial time with high probability. 

\paragraph{An algorithm based on Lov\'asz $\theta$-function.} In \cite{FK03}, another algorithm that achieves recovery of the planted clique for $d=C\cdot \sqrt{n}$ was obtained using Lov\'asz $\theta$-function, with the feature that it works for more general random graph models. 

\paragraph{More recent developments.} \cite{FR10,DGP14} tackled the case of $d=C\cdot \sqrt{n}$ with combinatorial methods. These techniques rely on detailed analysis of degree and outperform the spectral algorithm in terms of time complexity and effectiveness for smaller constants $C$.

The algorithm in \cite{FR10} is called Low Degree Removal, as it removes the vertex with the lowest degree iteratively to purify the clique. The removed vertices are then re-examined in reverse order to recover any mistakenly discarded clique members. By rigorously analyzing the degree distributions and exploiting the statistical gap between clique and non-clique vertices, this method achieves a time complexity of $O(n^2)$ with constant success probability.

In \cite{DGP14}, a different iterative filtering strategy was employed to handle vertex degrees. Instead of simple elimination, the algorithm repeatedly samples a random subset of vertices and retains only those with sufficiently high connectivity to this subset. This effectively concentrates the clique density within shrinking subgraphs. It is proven to work for $C \approx 1.65$ and achieves high success probability.

\subsection{Algorithmic landscape of the planted TI space problem}

A central question in the planted TI space problem is to understand how the computational complexity depends on the planted dimension $d$. We first summarize our current algorithmic understanding, and then introduce them in the following subsections.

\begin{itemize}
    \item When $d \leq 2\log n$, because we set $m = \lceil n / \log n \rceil$, the planted structure is comparable to the largest totally-isotropic space in a random instance, which has dimension $\Theta(\log n)$ with high probability (\Cref{thm:TI-dim}). Thus, distinguishing or recovering the planted structure is impossible.

    \item $2\log n \ll d \ll n/2$ is the main regime of interest. In subsection \ref{sec:exp} we proposed an algorithm that runs in time $q^{O(n \log n)}$ based on subspace enumeration. No subexponential-time algorithms are known in this range. We conjecture that for some constant $C \geq 3$, the problem requires time $q^{\Omega(n)}$ when $d=\lceil n/C\rceil$, indicating exponential hardness.

    \item Near the critical threshold $d \approx n/2$, the rank profile method in subsection \ref{sec:distinguisher} provides a distinguisher. Specifically, the algorithm works in the regime $d = n/2 - c$ for $c=O(\sqrt{n/\log n})$, and its time complexity is polynomial in $n$ and exponential in $c$. However, this approach does not lead to a recovery algorithm directly.

    \item When $d \geq n/2$, we show in subsection \ref{sec:poly} that the planted TI space can be recovered in polynomial time using techniques from the non-commutative rank problem. This suggests that $d = n/2$ may serve as an algorithmic threshold for efficient recovery.
\end{itemize}

Overall, these results suggest the existence of a statistical-computational gap analogous to that of the planted clique problem, but potentially with a stronger hardness barrier. While planted clique is conjectured to be quasipolynomial-time hard, our evidence points toward exponential-time hardness in the planted TI space setting (\Cref{conj:search}).

\subsection{A \texorpdfstring{$q^{O(n\log n)}$}{q^{O(n log n)}}-time recovery algorithm} \label{sec:exp}


Let $\phi:\F_q^n\times\F_q^n\to\F_q^m$ be an alternating bilinear map. For a subspace $W\leq \F_q^n$, we define its orthogonal complement with respect to $\phi$ by $W_\phi^\perp:=\{u\in \F_q^n\mid \forall w\in W, \phi(u, w)=0\}$. 

\begin{fact}
    Given $W$ and $\phi$, a linear basis of $W_\phi^\perp$ can be computed in polynomial time. 
\end{fact}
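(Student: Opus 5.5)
The plan is to reduce the computation of $W_\phi^\perp$ to solving a homogeneous system of linear equations over $\F_q$ and then invoke Gaussian elimination. Concretely, we assume $\phi$ is given by its alternating matrix tuple $\tA=(A_1,\dots,A_m)\in\Lambda(n,\F_q)^m$, as in \eqref{eq:bilinear-map}, so that $\phi(u,w)=(u^\tr A_1w,\dots,u^\tr A_mw)^\tr$. We also assume $W$ is given by a spanning set $w_1,\dots,w_k\in\F_q^n$ with $k\leq n$; if $W$ comes with a larger spanning set, we first extract a basis by Gaussian elimination.

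The first step is to reduce the universal quantifier over $W$ to its spanning vectors. Since $\phi$ is linear in its second argument, $\phi(u,w)=0$ for all $w\in W$ holds if and only if $\phi(u,w_j)=0$ for every $j\in[k]$. Writing out the coordinates, this says $u^\tr A_i w_j=0$ for all $i\in[m]$ and $j\in[k]$. Each such condition is a single linear equation in the unknown vector $u$, namely $(A_iw_j)^\tr u=0$.

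The second step is to assemble these conditions into one matrix. Let $M$ be the $(mk)\times n$ matrix whose rows are $(A_iw_j)^\tr$ for $i\in[m]$ and $j\in[k]$. Then $W_\phi^\perp=\ker M$. Forming $M$ takes $O(mkn^2)$ field operations. Gaussian elimination then returns a basis of $\ker M$ in $O(mk n^2)$ further field operations. Since $k\leq n$ and arithmetic in $\F_q$ costs $\poly(\log q)$, the total running time is polynomial in $n$, $m$ and $\log q$.

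There is essentially no obstacle here. The only point needing care is that bilinearity is what lets us check finitely many spanning vectors of $W$ instead of every element of $W$. We also note that $W_\phi^\perp$ is indeed a linear subspace, since it is the kernel of a matrix. Finally, because $\phi$ is alternating, left and right orthogonality coincide, $\phi(u,w)=0\iff\phi(w,u)=0$, so the definition does not depend on which side $u$ is placed.
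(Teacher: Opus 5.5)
Your proposal is correct and follows the paper's proof: both express $W_\phi^\perp$ as the set of $u$ orthogonal to all vectors $A_iw_j$, where $w_j$ ranges over a basis of $W$, and compute it by Gaussian elimination. Your extra remarks on bilinearity and on left/right orthogonality coinciding for alternating $\phi$ spell out details the paper leaves implicit.
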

\begin{proof}
    Let $\{w_1, \dots, w_r\}$ be a linear basis of $W$. Let $\phi$ be represented by $(A_1, \dots, A_m)$ where $A_i\in\Lambda(n, \F_q)$. Then $W_\phi^\perp$ can be obtained as the orthogonal complement of $\{A_iw_j\mid i\in[m], j\in[r]\}$. 
\end{proof}

\begin{proposition}\label{prop:general-algorithm}
    There exists an algorithm that recovers the planted TI space for a random $\phi$ sampled from $\PlLinER(n, m=\lceil n/\log n\rceil, q, d)$ where $d\gg (2+o(1))\log n$ with probability
\(1-q^{-\Omega(n)}\) in time \(q^{O(n\log n)}\).
\end{proposition}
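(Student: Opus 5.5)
Write $V$ for the planted $d$-dimensional TI space and set $k=\lceil 3\log n\rceil$. The algorithm mirrors the quasipolynomial planted-clique algorithm. It enumerates all $k$-dimensional subspaces $W\leq\F_q^n$; there are at most $q^{kn}=q^{O(n\log n)}$ of them. For each $W$ it first checks in polynomial time whether $W$ is totally isotropic. If it is, it computes $W_\phi^\perp$ using the Fact above. It then outputs the first candidate for which $W_\phi^\perp$ (or a canonical TI subspace extracted from it, see below) is a totally isotropic space of dimension $d$. The total running time is $q^{O(n\log n)}\cdot\poly(n)$.

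Correctness rests on two claims. The first is completeness: any $W\leq V$ is TI, and $V\subseteq W_\phi^\perp$ because $\phi(V,V)=0$. So whenever we hit some $W\leq V$, the space $V$ lies inside $W_\phi^\perp$. The second is soundness: with probability $1-q^{-\Omega(n)}$ over $\phi$, the following two statements hold.
\begin{enumerate}
\item Every $k$-dimensional TI space $W$ is contained in $V$.
\item For every $k$-dimensional $W\leq V$ we have $W_\phi^\perp=V$.
\end{enumerate}
Given both, for each candidate that passes the test we get $W\leq V$ and hence $W_\phi^\perp=V$, which is exactly the output.

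For (2), work in coordinates after undoing $L$ (everything is $\GL$-invariant, so assume $V=\linspan\{e_1,\dots,e_d\}$). For $u\notin V$, write $u=v+u'$ with $u'\neq 0$ supported on the last $n-d$ coordinates. The vectors $\phi(u,w)$ for $w$ ranging over a basis of $W$ are then uniformly random in $\F_q^{mk}$, since they involve the independent random off-diagonal blocks. So $\Pr[u\in W_\phi^\perp]=q^{-mk}$. A union bound over the $q^n$ choices of $u$ and the at most $q^{dk}\le q^{nk}$ choices of $W$ gives $q^{n+nk-mk}$. This is the delicate point: with $m=n/\log n$ and $k=3\log n$ we have $mk=3n$, while the number of $W\leq V$ is $q^{k(d-k)}$, which can be as large as $q^{n\log n}$. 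So the naive union bound fails.

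I would instead fix a basis and condition more cleverly. Note that $W_\phi^\perp\cap U'$, for a complement $U'$ of $V$, is the kernel of the $mk\times(n-d)$ random matrix formed by the off-diagonal blocks $A_i^{(12)}$ applied to $W$. Then $W_\phi^\perp=V$ exactly when the linear map $u'\mapsto(\phi(w_j,u'))_{j,i}$ is injective. This holds for all $W$ simultaneously if, for every nonzero $u'$, the map $V\to\F_q^m$, $v\mapsto\phi(v,u')$, has large rank, in the sense that its kernel contains no $k$-dimensional subspace. That kernel has dimension at least $d-m$; since $d\le n/2$ and $m=n/\log n$, it is huge, so this fails. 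The correct fix is therefore to let soundness come from (1) together with a weaker version of (2): for each fixed $W$ of dimension $k$, the complement $W_\phi^\perp$ has dimension about $n-mk+\dim(\text{forced part})$, which is essentially $d$ plus $O(1)$ excess with high probability. From this set of dimension $d+O(1)$ we recover $V$ as its maximal TI subspace, brute-forcing the $O(1)$ extra directions in time $q^{O(n)}$.

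Claim (1) is the main obstacle. It needs a first-moment count of TI spaces of dimension $k$ not inside $V$, stratified by $j=\dim(W\cap V)$. Take $W$ with $\dim(W\cap V)=j<k$. Being TI imposes $m\left[\binom{k}{2}-\binom{j}{2}\right]$ independent uniform linear conditions (the pairs not both in $V$), so each such $W$ is TI with probability $q^{-m(\binom k2-\binom j2)}$. The number of such $W$ is at most $q^{jd+(k-j)n}$. The exponent is then
\[
jd+(k-j)n-\tfrac{m}{2}(k-j)(k+j-1)\le (k-j)\bigl(n-\tfrac{m}{2}(k+j-1)\bigr)+jd .
\]
With $k=3\log n$ and $m=n/\log n$ the bracket is at most $-n/2$, so for $j$ small the count is $q^{-\Omega(n)}$. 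For $j$ close to $k$ the $jd$ term must be absorbed by choosing $k$ as a larger constant times $\log n$; I would carry out this optimization carefully, and it is where I expect the real work. Summing over $j$ gives failure probability $q^{-\Omega(n)}$, as claimed.
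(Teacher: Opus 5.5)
Your algorithm is essentially the paper's, but the correctness argument has a genuine gap. Claim (1) is not merely hard to prove: it is false whenever $d\ge m+k-1$, and in particular for $d=\lceil n/C\rceil$.

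To see this, take any $u\notin V$ and the linear map $V\to\F_q^m$, $v\mapsto\phi(v,u)$. Its kernel $K_u$ has dimension at least $d-m\ge k-1$. For any $(k-1)$-dimensional $W_0\le K_u$, the space $W_0+\linspan\{u\}$ is a $k$-dimensional TI space not contained in $V$, and this holds for every $\phi$. This is exactly the $jd$ term at $j=k-1$ that you hoped to absorb, and no choice $k=O(\log n)$ can absorb it. Claim (2) uniformly over $W\le V$ also fails, as you observed. Your ``weaker version'' still needs every accepted seed to lie inside $V$, so the soundness argument collapses.

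The missing idea is to put the uniqueness requirement on the output rather than on the $k$-dimensional seeds.
\begin{itemize}
\item \emph{Acceptance rule.} Accept a seed $W$ only if $W_\phi^\perp$ is itself TI of dimension exactly $d$.
\item \emph{Uniqueness.} Show that with probability $1-q^{-\Omega(n)}$, $V$ is the unique $d$-dimensional TI space. For $W'\neq V$ with $\dim(W'\cap V)=d-t$, being TI has probability $q^{-mt(2d-t-1)/2}$. There are $q^{t(n-t)+O(t)}$ such $W'$. For $d\ge 3\log n$ we have $m(2d-t-1)/2\ge m(d-1)/2\ge(1.5-o(1))n$, so the union bound over $t\ge 1$ is $q^{-\Omega(n)}$.
\item \emph{Why dimension $d$ works but $k$ does not.} At dimension $d$ the TI constraints grow like $mtd$ and beat the $tn$ count. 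At dimension $k$, the roughly $q^{j(d-j)}$ choices of $W\cap V$ are not paid for, which is what breaks Claim (1).
\item \emph{Completeness.} Every output now equals $V$, and you need only \emph{one} fixed $W_0\le V$. There $(W_0)_\phi^\perp=V\oplus\ker H$ with $H$ a uniform $mk\times(n-d)$ matrix, so $\Pr[\ker H\neq 0]\le q^{n-d-mk}\le q^{-2n}$.
\end{itemize}
Seeds $W\le V$ with $W_\phi^\perp\supsetneq V$ are simply rejected by the dimension test. So no union bound over $W\le V$ is needed, and no brute-forcing of extra directions either.
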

\begin{proof}
    Suppose the planted TI space is $V\leq\F_q^n$. Denote $r=\lceil3\log n\rceil$. Consider the following algorithm. 
    \begin{enumerate}
    \item Enumerate all \(r\)-dimensional subspaces
    \(S\leq\F_q^n\).

    \item For each \(S\), test whether \(S\) is totally isotropic
    with respect to \(\phi\). If not, continue to the next subspace.

    \item If \(S\) is totally isotropic, compute $T=S_\phi^\perp$. If \(\dim(T)=d\) and \(T\) is totally isotropic, output \(T\) and
    terminate. Otherwise, continue to the next subspace.

    \item If no such \(T\) is found, output \(\perp\).
    \end{enumerate}
    
    The algorithm runs in time $q^{O(n\log n)}$ due to the enumeration cost in Step 1 and the fact that other steps run in time $\poly(n, \log q)$.

    For the correctness, we show that once a $(3\log n)$-dimensional subspace $S$ of the planted subspace $V$ is hit, $S^\perp_\phi=V$ with probability \(1-q^{-\Omega(n)}\).
    
    Fix a subspace \(S_0\leq V\) of dimension $\lceil 3\log n\rceil$, and choose
a decomposition
\[
\F_q^n=V\oplus U.
\]
With respect to this decomposition, each matrix representing \(\phi\)
has the form
\[
A_k=
\begin{pmatrix}
0 & B_k\\
-B_k^\tr & C_k
\end{pmatrix},
\]
where \(B_k\in\F_q^{d\times(n-d)}\) is uniformly random.

Let \(R\in\F_q^{d\times r}\) be a basis matrix of \(S_0\). Then
\[
(S_0)_\phi^\perp
=
V\oplus\ker(H),
\qquad
H=
\begin{pmatrix}
R^\tr B_1\\
\vdots\\
R^\tr B_m
\end{pmatrix}
\in\F_q^{mr\times(n-d)}.
\]
The matrix \(H\) is uniformly random. Since \(mr\ge3n\),
\[
\Pr[\rank(H)<n-d]
\leq q^{n-d-mr}
\leq q^{-2n}.
\]
Consequently,
\[
(S_0)_\phi^\perp=V
\]
with probability \(1-q^{-\Omega(n)}\). Hence the algorithm encounters
and accepts \(V\) with high probability.

It remains to show that the algorithm does not output another space.
Let \(W\ne V\) be a \(d\)-dimensional subspace and write
\[
\dim(V\cap W)=d-t.
\]
The probability that \(W\) is also totally isotropic is
\[
q^{-m\left(\binom d2-\binom{d-t}{2}\right)}
=
q^{-m t(2d-t-1)/2}.
\]
Moreover, the number of such subspaces \(W\) is at most
\[
q^{t(n-t)+O(t)}.
\]
Therefore, by a union bound and \(d\ge3\log n\),
\[
\Pr[\exists\,W\ne V:\dim(W)=d
\text{ and \(W\) is totally isotropic}]
\leq q^{-\Omega(n)}.
\]
Thus, with high probability, \(V\) is the unique \(d\)-dimensional TI
space, so the first space accepted by the algorithm must be \(V\).

\end{proof}

\subsection{A distinguisher through rank profiles} \label{sec:distinguisher}

In this section, we wish to distinguish if an alternating bilinear map is drawn from $\LinER(n, m=\lceil n/\log n\rceil, q)$ or $\LinER(n, m=\lceil n/\log n\rceil, q, d)$.
It is natural to ask whether degree-based algorithms for planted clique, such as those of \cite{Ku95}, admit an analogue for the planted TI space problem.

In a random graph, the degree of a vertex measures its connectivity to the rest of the graph. We can establish a similar metric for an alternating bilinear map $\phi:\F_q^n\times\F_q^n\to\F_q^m$. By fixing a vector $v \in \F_q^n$, we can observe its connection to other vectors through the linear map $\phi_v:\F_q^n\to\F_q^m$ defined as $\phi_v(u) = \phi(v, u)$. In the corresponding tensor $\tA\in \T(n\times n\times m, \F_q)$, $\phi_v$ can be represented by a linear combination of the lateral slices corresponding to the coordinates of $v$. Based on this structural correspondence, we define the degree of $v$ in $\phi$ as $\deg_\phi(v):=\rank(\phi_v)$.

However, a direct generalization of graph-based degree algorithms would be ineffective. In random graphs, vertex degrees follow binomial distributions that exhibit sufficient variance, creating distribution tails that algorithms can exploit even when the means are close. 
In contrast, the lateral slices in our tensor are $n\times m$ matrices, meaning $\deg_\phi(v)=\rank(\phi_v)$ is upper bounded by $m= \lceil n/\log n\rceil$. Since $n \gg m$, zeroing out $\lceil n/C\rceil$ rows in the planted structure has a negligible impact on its rank. 

Instead, we analyze the rank profile of the frontal slices and find some interesting results for the distinguishing problem. For simplicity, we assume throughout this section that $n$ is an even integer. Since the rank of any alternating matrix is always even, an odd $n$ forces the maximum rank to degenerate to $n-1$. However, the overall conclusions remain similar in the odd case.

\paragraph{When $d > n/2$.}

The regime $d > n/2$ allows for a straightforward polynomial-time distinguisher. There exists a deterministic algorithm running in $O(n^3)$ time that distinguishes $\LinER(n, \lceil n/\log n\rceil, q)$ from $\LinER(n, \lceil n/\log n\rceil, q, d)$ with success probability at least $1-O(1/q)$.

If the alternating bilinear map $\phi$ is sampled from $\LinER(n, \lceil n/\log n\rceil, q, d)$, let $S= L^\tr \tilde A L$ be a frontal slice of the corresponding tensor $\tA_{\phi}$ (Step 3 in \Cref{def:PlLinER}). Here, $L$ is an invertible basis transformation matrix so that $\rank(S)=\rank(\tilde A)$. We can decompose $\tilde A$ as the sum of two block matrices:
\[
\tilde A = \begin{pmatrix} \bzero_{d} & B \\ -B^\tr & D \end{pmatrix} 
= \begin{pmatrix} \bzero_{d} & B \\ \bzero_{(n-d) \times d} & \bzero_{n-d} \end{pmatrix} 
+ \begin{pmatrix} \bzero_{d} & \bzero_{d \times (n-d)} \\ -B^\tr & D \end{pmatrix}.
\]
The first matrix on the right-hand side has at most $n-d$ non-zero columns, and the second has at most $n-d$ non-zero rows. By the subadditivity of matrix rank,
\[
\rank(\tilde A) \leq (n-d) + (n-d) = 2n - 2d < n.
\]
In contrast, for a slice $S$ taken from $\phi \sim \LinER(n, \lceil n/\log n\rceil, q)$, the probability that it is full rank is $\prod_{j=1}^{n/2}(1-q^{1-2j}) = 1-O(1/q)$, given our assumption that $n$ is even. Therefore, the distinguishing algorithm simply computes the rank of an arbitrary frontal slice: if $\rank(S) < n$, it identifies the map as drawn from the planted distribution; otherwise, it outputs random.



\paragraph{When $d = n/2$.}

At this critical threshold, the deterministic rank-deficiency test on a single slice no longer applies. Nevertheless, by conducting a more refined analysis of the matrix rank profile, we can still obtain a distinguisher for random tensors and planted ones. First, we introduce the following lemmas, summarizing asymptotic rank distributions of random matrices over finite fields.

\begin{lemma}[{\cite{S88, F02}}]\label{lemma:rand-corank}
For any integers $n \geq 1$ and $0 \leq c \leq n$, if $A$ is a uniformly random matrix in $\M(n, \F_q)$, then
\[
\Pr[\operatorname{corank}(A)=c]\sim q^{-c^2}
\qquad\text{as } q\to\infty.
\]
\end{lemma}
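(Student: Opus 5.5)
This statement is cited from \cite{S88, F02}, so the plan is simply to reconstruct the classical count of $n\times n$ matrices over $\F_q$ of a given rank and read off its leading behaviour as $q\to\infty$. Write $r=n-c$.

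\emph{Step 1: count the matrices.} Fix the image, a subspace of dimension $r$ in $\F_q^n$; there are $\binom{n}{r}_q$ choices. Then count the surjections from $\F_q^n$ onto that image. The total is
\[
N_r=\binom{n}{r}_q\prod_{i=0}^{r-1}(q^n-q^i).
\]
Equivalently, one can use the standard formula $N_r=\prod_{i=0}^{r-1}(q^n-q^i)^2/(q^r-q^i)$.

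\emph{Step 2: divide by $q^{n^2}$.} Each factor is of the form $q^{a}(1-q^{-b})$ with $b\ge1$, and there are finitely many factors because $n$ is fixed. As $q\to\infty$, every such factor tends to $q^{a}$ times $1+O(1/q)$. So we only need the exponent of $q$ in $N_r$:
\[
2nr-r^2=n^2-(n-r)^2=n^2-c^2.
\]
Dividing by $q^{n^2}$ gives
\[
\Pr[\operatorname{corank}(A)=c]=q^{-c^2}\bigl(1+O_n(1/q)\bigr),
\]
which is exactly the claimed asymptotic. The degenerate case $c=n$, i.e.\ the zero matrix, gives exactly $q^{-n^2}$ and is consistent with this.

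\emph{Step 3: remarks on the interpretation.} The only subtle point is what $\sim$ means. If it is read uniformly in $n$ (as the paper's later use with growing $n$ might want), a fixed-$n$ argument no longer suffices. In that case the plan is to bound the product $\prod_{i\ge1}(1-q^{-i})^{\pm1}$, which appears after regrouping, between absolute constants tending to $1$ as $q\to\infty$. This yields the Fulman form $\Pr\approx q^{-c^2}\prod_{i>c}(1-q^{-i})\prod_{i=1}^{c}(1-q^{-i})^{-2}$ for $n\to\infty$, and the $q\to\infty$ limit then follows uniformly. I expect this uniformity bookkeeping to be the only real obstacle; the counting itself is routine.
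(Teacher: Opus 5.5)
Your Steps 1--2 are correct and are the classical rank count behind the cited references; the paper gives no proof of its own, only the citation. Written out, the count gives the exact identity
\[
\Pr[\operatorname{corank}(A)=c]=q^{-c^2}\,\frac{\prod_{j=c+1}^{n}(1-q^{-j})^{2}}{\prod_{j=1}^{n-c}(1-q^{-j})},
\]
which settles your Step 3 uniformity concern at once: the ratio to $q^{-c^2}$ lies between $\prod_{j\ge1}(1-q^{-j})^{2}$ and $\prod_{j\ge1}(1-q^{-j})^{-1}$, and both tend to $1$ as $q\to\infty$ uniformly in $n$ and $c$. There is one small slip in that remark: your displayed $n\to\infty$ limit carries an extra factor $\prod_{i=1}^{c}(1-q^{-i})^{-1}$ and should read $q^{-c^2}\prod_{i>c}(1-q^{-i})\prod_{i=1}^{c}(1-q^{-i})^{-1}$, equivalently $q^{-c^2}\prod_{i\ge1}(1-q^{-i})\prod_{i=1}^{c}(1-q^{-i})^{-2}$, though this does not affect the lemma.
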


\begin{lemma}[{\cite{S88, F02}}]\label{lemma:alt-corank}
For any integers $n \geq 1$ and $0 \leq c \leq n$ such that $n-c$ is even, if $A$ is a uniformly random alternating matrix in $\Lambda(n, \F_q)$, then
\[
\Pr[\operatorname{corank}(A)=c]\sim q^{\frac{-c^2+c}{2}}
\qquad\text{as } q\to\infty.
\]
\end{lemma}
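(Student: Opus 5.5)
The plan is to count exactly the alternating matrices of corank $c$ and divide by $|\Lambda(n,\F_q)|=q^{n(n-1)/2}$, tracking only the leading power of $q$ and showing that every other factor is $1+O(1/q)$ for fixed $n$ and $c$. Write $2k=n-c$ for the rank. The key structural fact is that an alternating matrix $A\in\Lambda(n,\F_q)$ of rank $2k$ is determined by two pieces of data:
\begin{itemize}
\item its radical $R=\ker A$, a $c$-dimensional subspace of $\F_q^n$;
\item the nondegenerate alternating form that $A$ induces on the $2k$-dimensional quotient $\F_q^n/R$.
\end{itemize}
Conversely, any such pair (a $c$-dimensional subspace together with a nondegenerate alternating form on the quotient) gives a unique alternating matrix with exactly that radical. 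So the number of corank-$c$ alternating matrices equals
\[
\binom{n}{c}_q\cdot N_{2k},
\]
where $\binom{n}{c}_q$ is the Gaussian binomial coefficient counting $c$-dimensional subspaces and $N_{2k}$ is the number of nondegenerate matrices in $\Lambda(2k,\F_q)$.

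Next I estimate each factor. The Gaussian binomial is
\[
\binom{n}{c}_q=\prod_{i=0}^{c-1}\frac{q^{n-i}-1}{q^{c-i}-1}=q^{c(n-c)}\bigl(1+O(1/q)\bigr).
\]
For $N_{2k}$, I use the full-rank probability already quoted in the paper for the even case, $\prod_{j=1}^{k}(1-q^{1-2j})$. If one prefers not to rely on it, the same fact follows from the standard recursive symplectic-basis count. This gives
\[
N_{2k}=q^{k(2k-1)}\prod_{j=1}^{k}\bigl(1-q^{1-2j}\bigr)=q^{(n-c)(n-c-1)/2}\bigl(1+O(1/q)\bigr).
\]

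Finally I combine the exponents. Dividing the count by $q^{n(n-1)/2}$ gives $q^{e}(1+O(1/q))$, where
\[
e=c(n-c)+\tfrac{(n-c)(n-c-1)}{2}-\tfrac{n(n-1)}{2}.
\]
Expanding, the second term minus the third is $\tfrac{-2nc+c^2+c}{2}$. Adding $cn-c^2$ then leaves $e=\tfrac{-c^2+c}{2}$. This gives $\Pr[\operatorname{corank}(A)=c]\sim q^{(-c^2+c)/2}$ as $q\to\infty$.

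The only step needing care is justifying the bijection between matrices and (radical, nondegenerate quotient form) pairs. The point is that an alternating form vanishes on its radical in both arguments, so it descends to a well-defined form on the quotient. That descended form is nondegenerate precisely because we quotiented by the full radical. Everything else is bookkeeping of $1+O(1/q)$ factors, each of which is a finite product of fixed length for fixed $n$ and $c$. The same template, counting pairs of a kernel of dimension $c$ and an image of dimension $n-c$ with $\approx q^{(n-c)^2}$ isomorphisms between the quotient and the image, reproduces Lemma~\ref{lemma:rand-corank}.
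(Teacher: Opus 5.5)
Your proof is correct. It also takes a different route from the paper, which gives no argument for this lemma at all and simply cites \cite{S88, F02}; so your proposal is a self-contained alternative rather than a reconstruction.

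The three ingredients all check out:
\begin{itemize}
\item The bijection between corank-$c$ alternating matrices and pairs consisting of the radical $R=\ker A$ and a nondegenerate alternating form on $\F_q^n/R$ is sound.
\item Your value $N_{2k}=q^{k(2k-1)}\prod_{j=1}^{k}(1-q^{1-2j})$ agrees with $|\GL(2k,\F_q)|/|\mathrm{Sp}(2k,\F_q)|$.
\item The exponent computation is right, including the boundary cases $c=0$ and $c=n$.
\end{itemize}

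The citation keeps the exposition short. Your count buys something extra, namely the exact identity
\[
\Pr[\operatorname{corank}(A)=c]=q^{\frac{-c^2+c}{2}}\cdot\frac{\prod_{i=0}^{c-1}\bigl(1-q^{-(n-i)}\bigr)}{\prod_{j=1}^{c}\bigl(1-q^{-j}\bigr)}\cdot\prod_{j=1}^{(n-c)/2}\bigl(1-q^{1-2j}\bigr).
\]
Each of the three products has the form $\prod_{j\in S}(1-q^{-j})$ for a set $S$ of distinct positive integers. So each lies in $[1-\frac{1}{q-1},1]$, and also above $\prod_{j\ge1}(1-2^{-j})>1/4$ for every $q\ge 2$. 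Hence the correction factor is $1+O(1/q)$ uniformly in $n$ and $c$, and stays between absolute constants even for small $q$.

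You only claimed the fixed-$n$, $q\to\infty$ statement, which is all the lemma asserts. The uniform version is still worth recording. In Theorem~\ref{thm:n/2} the paper applies the lemma, as the bound $\Theta(q^{-6})$ for corank $4$, with $n$ an arbitrary even integer $\ge 30$. The literal fixed-$n$ asymptotic does not by itself give constants independent of $n$, whereas your exact formula does.
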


\begin{lemma}[{\cite{S88, F02}}]\label{lemma:rec-corank}
For any integers $n, l, c$ such that $0 \leq l-c \leq l \leq n$, if $A$ is a uniformly random matrix in $\mathbb{F}_q^{l \times n}$, then
$$\Pr[\rank(A) = l-c] \sim q^{- c^2-(n-l)c} \qquad \text{as } q \to \infty.$$
\end{lemma}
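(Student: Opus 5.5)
We need to prove Lemma (rec-corank): for a uniformly random $l\times n$ matrix over $\F_q$ with $l\le n$, $\Pr[\rank(A)=l-c]\sim q^{-c^2-(n-l)c}$ as $q\to\infty$, with $n$, $l$, $c$ fixed.

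The plan is to use the exact count of matrices of a given rank and then take the leading order in $q$. The number of $l\times n$ matrices of rank $k$ over $\F_q$ is
\[
N_k=\prod_{i=0}^{k-1}\frac{(q^l-q^i)(q^n-q^i)}{q^k-q^i}.
\]
To derive this, choose the column space, a $k$-dimensional subspace of $\F_q^l$. The number of such subspaces is the Gaussian binomial $\binom{l}{k}_q$. Then count the surjective linear maps $\F_q^n\to\F_q^k$ (equivalently, full-rank $k\times n$ matrices); there are $\prod_{i=0}^{k-1}(q^n-q^i)$ of them. The product of these two counts is $N_k$, since a rank-$k$ matrix factors uniquely as an injection from $\F_q^k$ into its column space, modulo $\GL(k)$, composed with a surjection.

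Next, extract the leading power of $q$. Each factor is of the form $q^a(1-O(q^{-1}))$ with a fixed number of factors, so the leading power is exact.
\begin{itemize}
\item The Gaussian binomial satisfies $\binom{l}{k}_q=q^{k(l-k)}(1+O(q^{-1}))$.
\item The surjection count is $q^{nk}(1+O(q^{-1}))$.
\end{itemize}
Hence $N_k=q^{k(l-k)+nk}(1+O(q^{-1}))$. Dividing by the total count $q^{ln}$ gives the exponent
\[
k(l-k)+nk-ln.
\]
Substituting $k=l-c$, this becomes $(l-c)c+n(l-c)-ln=lc-c^2-nc=-c^2-(n-l)c$. So the probability is $q^{-c^2-(n-l)c}(1+O(q^{-1}))$, which is the claimed asymptotic.

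The only care needed is the bijective counting argument: rank-$k$ matrices correspond to pairs (column space, map onto it). Alternatively, one can count row-by-row. The asymptotics follow because $n$, $l$, $c$ are fixed while $q\to\infty$, so the $(1-q^{i-j})$ correction factors number boundedly many and each tends to $1$.
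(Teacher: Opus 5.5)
Your proof is correct. The count $N_k=\binom{l}{k}_q\prod_{i=0}^{k-1}(q^n-q^i)$ (equivalently, factorizations $A=PQ$ with $P$ injective and $Q$ surjective, counted modulo $\GL(k,\F_q)$) is right. The exponent bookkeeping $k(l-k)+nk-ln=-c^2-(n-l)c$ at $k=l-c$ also checks out, including the edge case $k=0$. The paper gives no proof of this lemma at all; it only cites \cite{S88,F02}. Your exact-count derivation is the standard argument behind those references, so there is no competing route to compare.

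One improvement is worth making. You justify the $(1+O(q^{-1}))$ factor by saying there are boundedly many correction factors, which only works for fixed $n,l$. In fact each of the three products is of the form $\prod_j(1-q^{-j})$ over distinct integers $j\ge 1$ (namely $j\in[l-k+1,l]$, $j\in[n-k+1,n]$, and $j\in[1,k]$). Hence each product lies in $\bigl[\prod_{j\ge1}(1-q^{-j}),1\bigr]\subseteq\bigl[1-\tfrac{1}{q-1},1\bigr]$. This gives $\Pr[\rank(A)=l-c]=q^{-c^2-(n-l)c}(1+O(q^{-1}))$ uniformly in $n,l,c$. That uniform form is what the paper actually needs, since it applies the lemma to matrices whose dimensions grow with $n$, for instance the $d\times(d+2c)$ block $B$ in the analysis leading to \cref{thm:general_c}.
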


\begin{algorithm}[H]
\caption{A distinguisher for the random tensor and tensor with a planted TI space when $d=n/2$}
\label{alg:distinguisher}
    \textbf{Input:} An alternating bilinear map $\phi: \mathbb{F}_q^n \times \mathbb{F}_q^n \to \mathbb{F}_q^m$ represented by a tuple of alternating matrices $A = (A_1, A_2, \dots, A_m) \in \Lambda(n,\F_q)^m$, sampled from either $\LinER(n, m=\lceil n/\log n\rceil, q)$ or $\LinER(n, m=\lceil n/\log n\rceil, q, d=n/2)$ \\
    \textbf{Output:} ``Random'' or ``Planted''.
    \begin{enumerate}
    \item Let $N = \lceil q^{5.5} \rceil$. Choose $N$ pairwise non-collinear nonzero coefficient vectors $c^{(k)} = (c_1^{(k)}, \dots, c_m^{(k)}) \in \mathbb{F}_q^m$ for $k \in [N]$.
    \item For each vector $c^{(k)}$, compute the corresponding linear combination $M_k = \sum_{i=1}^m c_i^{(k)} A_i$.
    \item Compute the rank of each matrix $M_k$ using Gaussian elimination.
    \item If there exists at least one $k \in [N]$ such that $\rank(M_k) = n-4$, output ``Planted''. Otherwise, output ``Random''.
\end{enumerate}
\end{algorithm}

Note that Step 1 requires $n \geq 30$ to ensure that the number of available non-collinear directions should satisfy $\frac{q^m-1}{q-1} \geq q^{5.5}$.

\begin{theorem} \label{thm:n/2}
Let $n \geq 30$ be an even integer and $m = \lceil n/\log n \rceil$. \Cref{alg:distinguisher} distinguishes an alternating bilinear map sampled from the planted model $\LinER(n, m, q, d=n/2)$ from one sampled from the random model $\LinER(n, m, q)$ with an advantage of at least $1 - O(q^{-0.5})$. Furthermore, the algorithm runs in $O(q^{5.5} \cdot n^3)$ time.
\end{theorem}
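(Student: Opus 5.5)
The plan is to compare, for a single linear combination $M=\sum_i c_iA_i$, the probability that $\rank(M)=n-4$ under the two models. The planted probability should be about $q^{-5}$ and the random probability about $q^{-6}$. With $N\approx q^{5.5}$ directions, the expected number of hits is then $q^{0.5}$ in the planted model and $q^{-0.5}$ in the random model. We conclude with a union bound in the random case and a second-moment (Chebyshev) bound in the planted case.

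\emph{Random model.} Since the $A_i$ have independent uniform upper-triangular entries and $c^{(k)}\neq 0$, each $M_k$ is a uniformly random element of $\Lambda(n,\F_q)$. By Lemma~\ref{lemma:alt-corank} with $c=4$, we get $\Pr[\rank(M_k)=n-4]\sim q^{-(16-4)/2}=q^{-6}$. A union bound over the $N=\lceil q^{5.5}\rceil$ directions then shows that the algorithm wrongly outputs ``Planted'' with probability $O(q^{-0.5})$.

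\emph{Planted model.} Conjugation by $L$ preserves rank, so we may work with $\tilde A_i$. Each combination then has the form
\[
M=\begin{pmatrix}\bzero_{n/2} & B\\ -B^\tr & D\end{pmatrix},
\]
where $B$ is a uniform $(n/2)\times(n/2)$ matrix and $D$ is an independent uniform alternating matrix.

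The key structural step is the following rank formula. Let $K=\ker B\le\F_q^{n/2}$; since $B$ is square, $\dim K$ equals the corank of $B$. Then
\[
\rank(M)=2\rank(B)+\rank\bigl(D|_{K\times K}\bigr).
\]
I would prove this by choosing bases adapted to $\operatorname{im}B$ and $\ker B$, and eliminating $D$ against the invertible part of $B$ by block row and column operations. Consequently, if $B$ has corank exactly $2$ and the $2\times 2$ alternating restriction $D|_{K\times K}$ vanishes, then $\rank(M)=n-4$.

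By Lemma~\ref{lemma:rec-corank} with $l=n/2$ and $c=2$, the first event has probability $\sim q^{-4}$. Conditioned on $B$, the restriction $D|_{K\times K}$ is a single uniform field element, which is zero with probability $1/q$, independently of $B$. So $p:=\Pr[\rank(M_k)=n-4]\ge(1-o(1))q^{-5}$.

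For concentration, let $X=\sum_k \mathbf 1[\rank(M_k)=n-4]$. Because the $c^{(k)}$ are pairwise non-collinear, every pair $(M_k,M_{k'})$ is obtained from the independent uniform entries of $(B_i,D_i)_i$ by two linearly independent functionals. Hence $M_k$ and $M_{k'}$ are independent, and the indicators are pairwise independent. Chebyshev then gives
\[
\Pr[X=0]\le \frac{\operatorname{Var}X}{(\mathbb E X)^2}\le \frac{1}{Np}=O(q^{-0.5}).
\]
Combining the two cases yields advantage $1-O(q^{-0.5})$. The running time is $N$ Gaussian eliminations on $n\times n$ matrices, i.e. $O(q^{5.5}n^3)$. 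The requirement $n\ge 30$ is only used so that $N$ non-collinear directions exist, as the paper already notes.

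I expect the main obstacle to be the block rank identity together with the independence claim for $D|_{K\times K}$ given $B$. I would handle both by fixing $B$ and applying a change of basis $P\in\GL(n/2,\F_q)$ with $\ker(BP)=\linspan\{e_{n/2-1},e_{n/2}\}$, followed by the corresponding change of basis on the column side. Because $D$ is uniform alternating and independent of $B$, it remains uniform after this congruence, so its entry in the $K\times K$ position is uniform. One also has to make sure the $\sim$ asymptotics from the cited lemmas (which are as $q\to\infty$) are used only as one-sided bounds with constants that are uniform in $n$.
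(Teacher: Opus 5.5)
Your proposal is correct and follows the paper's proof essentially step by step: a union bound via Lemma~\ref{lemma:alt-corank} in the random model; in the planted model, the block reduction with dominant event $\operatorname{corank}(B)=2$ and a vanishing $2\times 2$ alternating block; pairwise independence from non-collinearity; and Chebyshev. Your version is slightly tighter in two ways: you state the exact identity $\rank(M)=2\rank(B)+\rank(D|_{K\times K})$, and you use only the lower bound $p\ge(1-o(1))q^{-5}$, which is all Chebyshev needs, so you avoid the paper's unproved claim that the other configurations contribute only lower-order terms.
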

\begin{proof}
Let
\[
X_k:=\mathbf 1_{\{\rank(M_k)=n-4\}},
\qquad
X:=\sum_{k=1}^N X_k.
\]
The decision rule of the distinguishing algorithm is defined as follows: if $X \geq 1$, it outputs ``Planted''; if $X = 0$, it outputs ``Random''.

If $\phi$ is sampled from $\LinER(n, m, q)$, any non-zero linear combination of $A_i$ yields a uniformly distributed random alternating matrix $M$ in $\Lambda(n,\F_q)$. By \Cref{lemma:alt-corank}, the probability that $\rank(M)=n-4$ is
$$\Pr_{\phi\sim \LinER(n, m, q)}[\rank(M) = n-4] = \Theta(q^{-6}).$$
By the union bound, the probability of a false positive is bounded by
$$\Pr_{\phi\sim \LinER(n, m, q)}[X \geq 1] \leq \sum_{k=1}^N \Pr_{\phi\sim \LinER(n, m, q)}[\rank(M_k) = n-4] = N \cdot \Theta(q^{-6}) = O(q^{-0.5}).$$

If $\phi$ is sampled from $\LinER(n, m, q, d=n/2)$,
a random linear combination $M$ is congruent to a block matrix of the form $\begin{pmatrix} \bzero_d & B \\ -B^\tr & D \end{pmatrix}$, where $B, D$ are uniformly random matrices with dimension $n/2$, and $D$ is alternating.

To achieve $\rank(M) = n-4$, the dominant case requires $\rank(B) =  n/2 - 2$. In this case, there exist invertible matrices $P, Q$ such that $P^\tr B Q = \operatorname{diag}(I_{d-2}, \bzero_{2 \times 2})$. Applying the congruence transformation $T = \operatorname{diag}(P, Q)$ to $M$ yields
$$M' = T^\tr M T = \begin{pmatrix} \bzero & P^\tr B Q \\ -Q^\tr B^\tr P & Q^\tr D Q \end{pmatrix}.$$
Let $D' = Q^\tr D Q$. Since $D$ is a uniformly random alternating matrix and $Q$ is invertible, $D'$ remains uniformly random. We partition $D'$ conformably with $P^\tr B Q$. By applying block Gaussian elimination using the $I_{d-2}$ and $-I_{d-2}$ blocks, we get
$$M' = \begin{pmatrix} \bzero & \bzero & I_{d-2} & \bzero \\ \bzero & \bzero & \bzero & \bzero \\ -I_{d-2} & \bzero & D'_{11} & D'_{12} \\ \bzero & \bzero & -(D'_{12})^\tr & F \end{pmatrix} \to M'' = \begin{pmatrix} \bzero & \bzero & I_{d-2} & \bzero \\ \bzero & \bzero & \bzero & \bzero \\ -I_{d-2} & \bzero & \bzero & \bzero \\ \bzero & \bzero & \bzero & F \end{pmatrix},$$
where $F$ is a $2 \times 2$ random alternating matrix.

From this form, we know that $\rank(M) = \rank(M'') = 2(d-2) + \rank(F) = (n-4) + \rank(F)$. Therefore, for the overall rank to be exactly $n-4$, the $2 \times 2$ random alternating matrix $F$ must contribute a rank of $0$, whose probability is exactly $q^{-1}$ by \Cref{lemma:alt-corank}. By \Cref{lemma:rand-corank}, the probability that a uniformly random matrix $B$ has corank $2$ is $\Theta(q^{-4})$. Since $B$ and $D$ are sampled independently, the joint probability is $\Theta(q^{-5})$. Other configurations that may also lead to $\rank(M)=n-4$ contribute only lower-order terms, hence the total probability is still dominated by the case $\operatorname{corank}(B)=2$. Therefore, we denote
\[
p_1:=\Pr_{\phi\sim \LinER(n, m, q, n/2)}[\rank(M_k)=n-4]=\Theta(q^{-5}).
\]
Then
\[
\E_{\phi\sim \LinER(n, m, q, n/2)}[X]=Np_1=\Theta(q^{5.5})\Theta(q^{-5})=\Theta(q^{0.5}).
\]


The requirement that the coefficient vectors be pairwise non-collinear ensures that for any distinct $k,\ell\in[N]$, the vectors $c^{(k)}$ and $c^{(\ell)}$ are linearly independent. In the planted model, conditioned on the hiding matrix \(L\), the corresponding linear combinations $M_k$ and $M_\ell$ are independent planted slices. Since the event $\rank(M_k)=n-4$ is invariant under congruence, its conditional probability does not depend on the hidden basis choice. Hence the indicators $X_k=\mathbf 1_{\{\rank(M_k)=n-4\}}$ are pairwise independent, so the covariance terms vanish and
\[
\Var_{\phi\sim \LinER(n, m, q, n/2)}[X]
=
\sum_{k=1}^N \Var[X_k].
\]

Moreover, each $X_k$ is a Bernoulli random variable with parameter $p_1$, so
\[
\Var[X_k]=p_1(1-p_1)\leq p_1.
\]
Thus,
\[
\Var_{\phi\sim \LinER(n, m, q, n/2)}[X]
\le
\sum_{k=1}^N p_1
=
\E_{\phi\sim \LinER(n, m, q, n/2)}[X]
=
\Theta(q^{0.5}).
\]

Now observe that the algorithm fails under the planted model only if $X=0$. In that case,
\[
|X-\E_{\phi\sim \LinER(n, m, q, n/2)}[X]|
=
\E_{\phi\sim \LinER(n, m, q, n/2)}[X].
\]
By Chebyshev's inequality,
\[
\Pr_{\phi\sim \LinER(n, m, q, n/2)}[X=0]
\le\Pr\!\left[\left|X-\E[X]\right|\geq \E[X]\right]
\le\frac{\Var[X]}{\E[X]^2}
\leq \frac{1}{\E[X]}
=O(q^{-0.5}).
\]

Therefore, the distinguishing advantage of the algorithm can be strictly lower-bounded as follows:
$$\begin{aligned}
\text{Adv} &= \left| \Pr_{\phi\sim \LinER(n, m, q, n/2)}[X \geq 1] - \Pr_{\phi\sim \LinER(n, m, q)}[X \geq 1] \right| \\
&= \left| (1 - \Pr_{\phi\sim \LinER(n, m, q, n/2)}[X = 0]) - \Pr_{\phi\sim \LinER(n, m, q)}[X \geq 1] \right| \\
&\geq 1 - O(q^{-0.5}) - O(q^{-0.5}) \\
&\geq 1 - O(q^{-0.5}).
\end{aligned}$$

The time complexity of the algorithm is dominated by the rank computation of $N$ matrices. Gaussian elimination for an $n \times n$ matrix takes $O(n^3)$ operations. Thus, the total time complexity is $O(q^{5.5} \cdot n^3)$.
\end{proof}

\paragraph{When $d = n/2 - c$.}
For $d = n/2 - c$, we analyze the event $\operatorname{corank}(M)=2r$ for a random nonzero linear combination $M$ of the slices. By \Cref{lemma:alt-corank}, in the random model, the probability of $\operatorname{corank}(M)=2r$ is governed by 
\begin{equation}\label{eq:random-minus-c}
    q^{\frac{-(2r)^2+2r}{2}}=q^{-2r^2+r}.
\end{equation}

In the planted model, $M$ is congruent to a block matrix of the form $\begin{pmatrix} \bzero_{d} & B_{d\times(d+2c)} \\ -B^\tr_{(d+2c)\times d} & D_{(d+2c)\times (d+2c)} \end{pmatrix}$.
Let $s=d-\rank(B) \geq 0$, then $B$ can be transformed into $\begin{pmatrix} I_{d-s} & \bzero \\ \bzero & \bzero \end{pmatrix}$. Consequently, applying block Gaussian elimination leaves a $(2c+s)\times(2c+s)$ alternating matrix $F$. 
$$M' = \begin{pmatrix} \bzero & \bzero & I_{d-s} & \bzero \\ \bzero & \bzero & \bzero & \bzero \\ -I_{d-s} & \bzero & D'_{11} & D'_{12} \\ \bzero & \bzero & -(D'_{12})^\tr & F \end{pmatrix} \to M'' = \begin{pmatrix} \bzero & \bzero & I_{d-s} & \bzero \\ \bzero & \bzero & \bzero & \bzero \\ -I_{d-s} & \bzero & \bzero & \bzero \\ \bzero & \bzero & \bzero & F \end{pmatrix}.$$

Since $\rank(B)=d-s$, in order to get $\rank(M)=n-2r$, we need $\rank(F)=n-2r-2(d-s)=2c-2r+2s$, hence $\operatorname{corank}(F)=2r-s$. 

By \Cref{lemma:alt-corank} and \Cref{lemma:rec-corank}, 
\begin{align*}
\Pr[\operatorname{corank}(F)=2r-s] &\sim q^{\frac{-(2r-s)^2+(2r-s)}{2}}, \\
\Pr[\rank(B)=d-s] &\sim q^{-s(2c+s)}.
\end{align*}
Therefore, the probability that $\operatorname{corank}(M)=2r$ is dominated by the largest value of the following $g(s)$ over all choices of $s$:
\begin{equation}\label{eq:plant-minus-c}
    g(s)=\frac{-(2r-s)^2+(2r-s)}{2}-s(2c+s)=-\frac{3}{2}s^2+(2r-2c-\frac{1}{2})s-2r^2+r.
\end{equation}
Note that there is a constraint $\operatorname{corank}(F) \leq \dim(F)$, i.e. $s\geq r-c$. Also $s\geq 0$.
Comparing Equations~\eqref{eq:random-minus-c} and~\eqref{eq:plant-minus-c}, we obtain the following general pattern:
\begin{itemize}
    \item If $r\leq c+1$, the leading exponent is $g(s)=-2r^2+r$. Therefore,
    \[
    \Pr_{\phi\sim \LinER(n, m, q, n/2-c)}[\operatorname{corank}(M)=2r]
    \asymp
    \Pr_{\phi\sim \LinER(n, m, q)}[\operatorname{corank}(M)=2r],
    \]
    so the event $\operatorname{corank}(M)=2r$ does not provide a useful distinguisher at the level of leading asymptotics.

    \item If $r\geq c+2$, the dominant term is achieved at $s = r - c$. In particular, the first corank level at which an asymptotic separation appears is $r=c+2$, i.e. $\operatorname{corank}(M)=2c+4$. Therefore, $s=2$ and $\max g(s)=-2c^2 - 7c - 5$.
    
    In this case, $\Pr_{\phi\sim \LinER(n, m, q)}[\operatorname{corank}(M)=2c+4] \sim q^{-2c^2 - 7c - 6}$.
    Then the planted probability becomes asymptotically larger than the random one.

\end{itemize}

Therefore, we obtain a similar result to \Cref{alg:distinguisher} and \Cref{thm:n/2}:

\begin{theorem} \label{thm:general_c}
Let $c \geq 0$ be an integer satisfying
\[
2c^2+7c+\frac{13}{2}\leq \left\lceil \frac{n}{\log n}\right\rceil.
\]
There exists an algorithm that distinguishes an alternating bilinear map sampled from the planted model $\LinER(n, m=\lceil n/\log n\rceil, q, d=n/2-c)$ from one sampled from the random model $\LinER(n, m, q)$ with an advantage of at least $1 - O(q^{-0.5})$. Moreover, the algorithm runs in $O(q^{2c^2+7c+\frac{11}{2}} \cdot n^3)$ time.
\end{theorem}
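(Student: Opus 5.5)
We will mirror the proof of \Cref{thm:n/2}, replacing the corank-$4$ test with a corank-$(2c+4)$ test. The algorithm is as follows. Set $N=\lceil q^{2c^2+7c+11/2}\rceil$ and choose $N$ pairwise non-collinear nonzero vectors $c^{(k)}\in\F_q^m$. This is possible because $\frac{q^m-1}{q-1}\geq q^{m-1}\geq q^{2c^2+7c+11/2}$, which follows from the hypothesis $2c^2+7c+13/2\le m$. Form $M_k=\sum_i c^{(k)}_iA_i$, and output ``Planted'' if and only if some $M_k$ has $\operatorname{corank}(M_k)=2c+4$. Each rank computation costs $O(n^3)$, so the running time is $O(q^{2c^2+7c+11/2}n^3)$.

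\emph{False positives.} In the random model, each $M_k$ is a uniformly random matrix in $\Lambda(n,\F_q)$. By \Cref{lemma:alt-corank} with corank $2r=2c+4$, the probability of the event is
\[
\Theta\bigl(q^{-(2c+4)^2/2+(c+2)}\bigr)=\Theta\bigl(q^{-2c^2-7c-6}\bigr).
\]
A union bound over the $N$ combinations then gives
\[
\Pr[\text{false positive}]\le N\cdot\Theta\bigl(q^{-2c^2-7c-6}\bigr)=O(q^{-1/2}).
\]

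\emph{Planted model, lower bound on $p_1$.} Here $M_k$ is congruent to $\begin{pmatrix}\bzero_d & B\\ -B^\tr & D\end{pmatrix}$ with $B\in\F_q^{d\times(d+2c)}$ uniform and $D$ uniform alternating, independent of $B$. Restrict to the event $\rank(B)=d-2$, i.e.\ $s=2$. After the block elimination described before the theorem, we are left with a $(2c+2)\times(2c+2)$ uniform alternating matrix $F$, and $\operatorname{corank}(M)=2c+4$ holds exactly when $F=0$. Concretely, $\operatorname{corank}(F)=2r-s=2c+2$, i.e.\ full corank.

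I need to check that $F$ is uniform and independent of $B$. This follows as in the $d=n/2$ case, since $Q^\tr DQ$ is uniform for any fixed invertible $Q$. So $\Pr[F=0]=q^{-\binom{2c+2}{2}}=q^{-(c+1)(2c+1)}$, while $\Pr[\rank B=d-2]=\Theta(q^{-2(2c+2)})$ by \Cref{lemma:rec-corank}. Together these give
\[
p_1\ge\Theta\bigl(q^{-2c^2-7c-5}\bigr),\qquad \E[X]=Np_1=\Omega(q^{1/2}).
\]
Only a lower bound on $p_1$ is needed, so the ``other configurations'' never have to be controlled. This sidesteps the delicate dominance claim.

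\emph{Concentration and conclusion.} Conditioned on $L$, the combinations $M_k$ for pairwise non-collinear $c^{(k)}$ are pairwise independent planted slices, and the corank event is congruence-invariant. Hence the indicators $X_k$ are pairwise independent, which gives $\Var[X]\le\E[X]$. Chebyshev's inequality then yields $\Pr[X=0]\le 1/\E[X]=O(q^{-1/2})$. Combining with the false-positive bound, the advantage is at least $1-O(q^{-1/2})$.

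The main obstacle is making the asymptotics of \Cref{lemma:alt-corank} and \Cref{lemma:rec-corank} uniform in $c$. Those lemmas are stated as $q\to\infty$, yet $c$ may grow with $n$. I would first try to use the exact product formulas for the counts, which give bounds of the form $\Pr=\kappa\, q^{\text{exponent}}$ with $\kappa$ between absolute constants independent of $c$ (products like $\prod_i(1-q^{-i})^{\pm1}\in[1/4,4]$). Failing that, I would state the theorem for fixed $c$.
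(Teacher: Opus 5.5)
Your proposal is correct and follows the paper's route: you test $\lceil q^{2c^2+7c+11/2}\rceil$ pairwise non-collinear slice combinations for corank $2c+4$, bound false positives by a union bound, and handle the planted model via the $s=2$ block elimination, pairwise independence and Chebyshev, using the hypothesis on $m$ only to guarantee enough directions. You also tighten two points the paper leaves implicit: you note that only a lower bound on $p_1$ is needed, which avoids the paper's unproved dominance claim, and you use the exact counting formulas, whose correction factors are products such as $\prod_{i\ge 1}(1-q^{-i})^{\pm 1}$ bounded by absolute constants, to make the estimates uniform in $c$.
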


\begin{proof}
As demonstrated by the analysis of $g(s)$, the first corank level at which an asymptotic separation appears is $2c+4$. Note that this general conclusion aligns with the case of $c=0$, i.e. $d=n/2$.

We use the same sampling strategy as in the case $d=n/2$, namely, testing many pairwise non-collinear linear combinations and checking whether at least one of them has this prescribed corank. Then the required number of samples is $q^{\,2c^2+7c+\frac{11}{2}}$.
Accordingly, the distinguishing algorithm runs in time
\[
O\!\left(q^{\,2c^2+7c+\frac{11}{2}}\cdot n^3\right).
\]
Thus, for constant $c$, higher-corank events still yield a distinguisher, but the required complexity grows rapidly as $c$ increases.

Moreover, the total number of available pairwise non-collinear coefficient vectors is at most
$$\frac{q^m-1}{q-1}\asymp q^{m-1},
\qquad \text{where } m=\left\lceil \frac{n}{\log n}\right\rceil.$$
so in order for the sampling-based distinguisher to be feasible, we must at least have
$$2c^2+7c+\frac{13}{2} \leq \left\lceil \frac{n}{\log n}\right\rceil.$$

Following a similar proof to the $d=n/2$ case, the advantage of the distinguisher is also $1-O(q^{-1/2})$.
\end{proof}


\subsection{A polynomial-time recovery algorithm for \texorpdfstring{$d \geq n/2$}{d >= n/2}} \label{sec:poly}





When the planted TI space dimension is $d\geq n/2$, we have the following.

\begin{proposition}\label{prop:n-over-2}
There exists a polynomial-time algorithm that recovers the planted TI space for alternating bilinear maps sampled from $\LinER(n, \lceil n/\log n\rceil, q, d)$, $d\geq n/2$, with high probability.
\end{proposition}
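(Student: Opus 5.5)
The plan is to reduce recovery to computing a witness of non-commutative rank for the matrix space $\mathcal{A}=\linspan\{A_1,\dots,A_m\}\leq \M(n,\F_q)$. If $V$ is the planted TI space with $\dim V=d$, write $V^{\perp}$ for its orthogonal complement with respect to the standard dot product. Then $\phi(V,V)=0$ means exactly that $A_iV\subseteq V^{\perp}$ for every $i$. Hence
\[
\dim \mathcal{A}V\leq n-d,\qquad \dim V-\dim\mathcal{A}V\geq 2d-n\geq 0 .
\]
For $d>n/2$, $V$ is a $(2d-n)$-shrunk subspace of $\mathcal{A}$, so $\operatorname{ncrk}(\mathcal{A})\leq 2(n-d)<n$. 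For $d=n/2$, $V$ is a ``tight'' subspace with $\dim \mathcal{A}V=\dim V$.

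The algorithm has three steps.
\begin{enumerate}
\item Pass to an extension field $\F_{q^k}$ with $k=O(\log n)$, so that the deterministic non-commutative rank algorithm of \cite{IQS18} (or \cite{HH21}) applies. This algorithm returns, together with the rank, a subspace $U$ attaining $\max_U(\dim U-\dim\mathcal{A}U)$.
\item The minimizers of $U\mapsto \dim\mathcal{A}U-\dim U$ form a modular lattice. The Wong-sequence/regularity machinery of \cite{IQS18} computes its largest element, and by duality (applied to the transposed space) also its smallest element. We output the relevant extreme element.
\item By uniqueness, which we prove below, this output is $V\otimes\F_{q^k}$. Since it is Galois-invariant, it is defined over $\F_q$, so we descend to $\F_q$ by linear algebra.
\end{enumerate}
For $d=n/2$ the optimum shrinkage is $0$, and $\F_q^n$ itself is a trivial maximizer, so we cannot simply take the largest element. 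Instead we look for a nonzero proper tight subspace, namely a minimal nonzero element of the lattice of subspaces $U$ with $\dim\mathcal{A}U=\dim U$. My first attempt is to perturb so that the shrinkage becomes strictly positive. For each of the $n$ standard basis vectors $e_j$, restrict $\mathcal{A}$ to the domain $e_j^{\perp}$ while keeping the full codomain. Each such restriction has size $(n-1)\times n$ in one coordinate, and for the index $j$ with $e_j\notin V^{\perp}$ the space $V\cap e_j^{\perp}$ becomes a genuinely shrunk subspace after a one-row deletion in the codomain. We then recover $V$ as the $\mathcal{A}$-closure of that witness.

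The main probabilistic step, and in my view the main obstacle, is to show that with high probability $V$ is the unique subspace of dimension $\geq n/2$ achieving shrinkage $2d-n$ (respectively the unique nontrivial tight subspace when $d=n/2$), and that $\mathcal{A}V=V^{\perp}$ exactly. The second claim is easy. In the basis $\F_q^n=V\oplus U$ the off-diagonal blocks $B_k\in\F_q^{d\times(n-d)}$ are uniform, and the concatenation $[B_1\ \cdots\ B_m]$ has full row rank $d$ with probability $1-q^{-\Omega(mn)}$, because $m(n-d)\gg d$.

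For uniqueness I plan a first-moment union bound. Fix a candidate $W$ with $\dim(W\cap V)=d-t$ and $\dim W=w$. Requiring $\dim\mathcal{A}W\leq w-(2d-n)$ forces the $m$ random blocks, restricted to $W$ modulo $V$, to land in a subspace of codimension roughly $t$ beyond what $V$ already spans. This should cost probability about $q^{-\Omega(mt)}$, whereas there are at most $q^{O(tn)}$ such $W$. With $m=\lceil n/\log n\rceil$ these two exponents are of the same order up to the $\log n$ factor, so the crude bound is too weak. I expect to need to exploit that each of the $m$ slices contributes $\Omega(t(n-d))$ independent constraints, rather than $\Omega(t)$, giving a tail of $q^{-\Omega(mtn)}$ against $q^{O(tn)}$ candidates. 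Getting this counting right uniformly in $t$, and in particular handling the $d=n/2$ case where the shrinkage margin is zero, is the delicate part. The remaining steps are standard linear algebra.
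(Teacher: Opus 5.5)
For $d>n/2$ your route is the paper's: $\mathcal{A}V\subseteq V^{\perp}$ makes $V$ a shrunk subspace, \cite{IQS18,HH21} return the canonical shrunk subspace, and a uniqueness statement identifies it with $V$. The genuine gap is at $d=n/2$, where your perturbation cannot produce a shrunk subspace.

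Take $e_j\notin V^{\perp}$. Then $\dim(V\cap e_j^{\perp})=n/2-1$. Its image $\mathcal{A}(V\cap e_j^{\perp})\subseteq V^{\perp}$ is still all of $V^{\perp}$ with high probability: in unhidden coordinates it is the column span of a uniformly random $(n/2)\times m(n/2-1)$ matrix. So the image has dimension $n/2$. Deleting one codomain coordinate $e_{j'}$ lowers this by at most one, and only if $e_{j'}\in V^{\perp}$. For a uniformly random $V$ this fails for every $j'$ with high probability. So in the $(n-1)\times(n-1)$ space, $V\cap e_j^{\perp}$ is at best tight and never shrunk, and your closure step has no witness to start from. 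Making $V$ itself shrunk would require deleting a codomain direction inside the unknown $V^{\perp}$. You do identify the right target, a proper nonzero tight subspace, but you give no working procedure for computing it. The paper supplies one by invoking the invariant-subspace-pair algorithm of \cite{IQ23}.

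\emph{A self-contained fix.} With high probability some slice $A=A_k$ is invertible. Before hiding, each slice is $\begin{pmatrix}0&B_k\\-B_k^{\tr}&D_k\end{pmatrix}$ with $B_k$ a uniform $n/2\times n/2$ matrix. It is invertible exactly when $B_k$ is, which has probability at least $\prod_{i\geq 1}(1-2^{-i})>1/4$, independently over $k$. Since $AV\subseteq V^{\perp}$ and $\dim AV=\dim V^{\perp}$, you get $AV=V^{\perp}$. More generally, the tight subspaces of $\mathcal{A}$ are exactly the common invariant subspaces of $\{A^{-1}A_i\}_{i\in[m]}$. So a proper nonzero one can be found by standard polynomial-time submodule-finding algorithms over finite fields, such as the MeatAxe. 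You then still need that $V$ is the only such subspace.

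\emph{The uniqueness estimate.} The part you leave open is easier than you fear. Suppose a $d$-dimensional $W$ has $t$ independent directions outside $V$ and $\mathcal{A}(W\cap V)=V^{\perp}$. Then shrinkage $2d-n$ forces $\mathcal{A}W=V^{\perp}$, that is, $\phi(u,V)=0$ for each such direction $u$. This imposes $d$ uniform constraints per slice per direction, giving probability $q^{-tdm}$ against $q^{O(tn)}$ candidates. This is exactly the count in the paper's claim. The per-slice count is $td$, not $t(n-d)$; the latter would be too weak for $d$ near $n$.

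Relatedly, $\mathcal{A}V=V^{\perp}$ needs $[B_1^{\tr}\ \cdots\ B_m^{\tr}]$ to have full row rank $n-d$, which only requires $md\geq n-d$. It does not need $[B_1\ \cdots\ B_m]$ to have rank $d$, and your condition $m(n-d)\gg d$ fails for $d$ close to $n$.
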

\begin{proof}
	Let $\phi:\F_q^n\times\F_q^n\to\F_q^m$ be an alternating bilinear map sampled from $\LinER(n, m=\lceil n/\log n\rceil, q, d)$ with $d\geq n/2$. Let $V\in\Gr(d, \F_q^n)$ be the planted TI space in $\phi$. That is, $\phi(V, V)$ is the zero bilinear map. 

    Instead of searching for $V\in\Gr(d, \F_q^n)$ such that $\phi(V, V)=0$, we shall expand our search range to $U, W\in\Gr(d, \F_q^n)\times\Gr(d, \F_q^n)$, such that $\phi(U, W)=0$. 
    


    \begin{claim}\label{claim:TI-pair}
        Let $\phi:\F_q^n\times\F_q^n\to\F_q^m$ be sampled from $\LinER(n, m=\lceil n/\log n\rceil, q, d\gg 2\log n)$, with $V\in\Gr(d, \F_q^n)$ as the planted TI space. Then with high probability, $(V, V)$ is the only $(U, W)\in\Gr(d, \F_q^n)\times\Gr(d, \F_q^n)$ such that $\phi(U, W)=0$.
    \end{claim}
    \begin{proof}
        To illustrate the idea, let us first consider $(U, V)$ where $U\in \Gr(d, \F_q^n)$ and $V$ is the planted TI space. We further assume that $\dim(U\cap V)=d-1$. Then there exists a non-zero $u\in U\setminus V$. In order to satisfy $\phi(U, V)=0$, we need to ensure that $\phi(u, V)=0$, and this probability (over $\phi$ sampled from $\LinER(n, m, q, d)$) is $1/q^{dm}$. As $m=\lceil n/\log n\rceil$ and $d\gg 2\log n$, $1/q^{dm}=1/q^{\omega(n)}$. On the other hand, the number of possible $u\in U\setminus V$ is upper bounded by $q^n$. Then by a union bound, 
        $$
        \Pr[\exists U\in\Gr(d, \F_q^n)\mid \dim(U\cap V)=d-1, \phi(U, V)=0]\leq q^n\cdot \frac{1}{q^{dm}}=\frac{q^n}{q^{\omega(n)}}=o(1).
        $$

        Similarly, consider $(U, V)$ where $U\in \Gr(d, \F_q^n)$ such that $V\cap U$ is of codimension $k$ in $U$. For a fixed $U$ as such, the probability of $\phi(U, V)=0$ is upper bounded by $1/q^{kdm}$. The number of $k$-dimensional subspaces in $\F_q^n/U$ is upper bounded by $q^{dn}$. As $m=\lceil n/\log n\rceil$ and $d=\omega(\log n)$, we have $\Pr[\exists U\in\Gr(d, \F_q^n)\mid \dim(U\cap V)=d-k, \phi(U, V)=0]\leq o(1)$ as above.

        The case of arbitrary $(U, W)\in \Gr(d, \F_q^n)\times\Gr(d, \F_q^n)$ satisfying $\phi(U, W)=0$ can be dealt with in a similar way, by considering their intersection with $V$. We omit the routine calculations here.
    \end{proof}

    Given Claim~\ref{claim:TI-pair}, to recover the planted TI space $V\in\Gr(d, \F_q^n)$ when $d\geq n/2$, we can instead find $(U, W)\in\Gr(d, \F_q^n)\times\Gr(d, \F_q^n)$ such that $\phi(U, W)=0$. 

    When $d>n/2$, $U$ in such $(U, W)$ satisfies that $\dim(\linspan\{\cup_{i\in[m]}A_i(U)\})<\dim(U)$. Indeed, $\linspan\{\cup_{i\in[m]}A_i(U)\}$ is contained in a complement subspace of $W$, so $\dim(U)=d>n-d\geq
\dim\!\left(
\linspan\{\cup_{i\in[m]}A_i(U)\}
\right)$. Such a subspace is called a shrunk subspace. The shrunk subspace problem received considerable attention in algebraic complexity, and it admits three deterministic polynomial-time solutions in \cite{GGOW20,IQS18,HH21}. In particular, the algorithms in \cite{IQS18,HH21} work over finite fields and output the ``canonical'' shrunk subspace, which in our setting is the planted TI space $V$ with high probability.

    When $d=n/2$, such an $(U, W)$ pair is called an invariant subspace pair in \cite{IQ23}, which presented an algorithm that computes such an invariant subspace pair, which in our setting is $(V, V)$, $V$ the planted TI space, with high probability.
\end{proof}

\subsection{Algorithms based on solving polynomial equations}\label{subsec:grobner}
It is natural to formulate the planted totally-isotropic space problem as solving a system of polynomial equations over $\F_q$. Furthermore, as the planted condition $\phi(V,V)=0$ is quadratic in the coordinates of a basis of $V$, the search problem can be formulated as a system of quadratic equations.

Let $\phi: \F_q^n \times \F_q^n \to \F_q^m$ be represented by an alternating matrix tuple $(A_1,\dots,A_m)$ for $A_k \in \Lambda(n,\F_q)$, so that $\phi(u,v)=(u^\tr A_1v,\dots,u^\tr A_mv)^\tr$. A $d$-dimensional subspace $V\leq \F_q^n$ is totally-isotropic if and only if there exists a full-row-rank matrix $X\in \GL(d\times n,\F_q)$ whose row space is $V$ such that
\[XA_kX^\tr=\mathbf{0}_{d\times d} \quad \text{for all } k\in [m].\]
Since each $A_k$ is alternating, each matrix $XA_kX^\tr$ is alternating, so it is enough to impose the equations on the entries above the diagonal. This yields a system of $m\cdot\binom{d}{2}$ quadratic equations in the entries of $X$.

To ensure that the recovered matrix $X$ has full row rank, one approach is to fix a specific $d\times d$ block of $X$ and impose it to be invertible. 
While this may leave out those invertible matrices whose corresponding blocks are singular, it is a convenient assumption that usually suffices for practical computations. For example, let $U$ be the top-left $d\times d$ block of $X$. Instead of only requiring $U$ to be nonsingular using determinant, we introduce another variable matrix $U_{\mathrm{inv}}$ intended to represent its inverse, and add the equations
\[UU_{\mathrm{inv}}=I_d.\]
Thus, the entries of both $X$ and $U_{\mathrm{inv}}$ are treated as polynomial variables. This enforces the invertibility of the chosen block $U$ inside the polynomial system itself and excludes obvious degenerate solutions.

In the planted instance, this additional constraint is compatible with the true solution with a reasonable probability. Indeed, in our generation procedure, we sample the hiding matrix $L$ uniformly at random from $\GL(n, q)$, so the top-left $d \times d$ block of $L^{-\tr}$ is invertible with probability $1-O(1/q)$. If this holds, the intended solution $X$ satisfies not only the planted tensor equations $X  A_k X^\tr=\mathbf{0}_{d \times d}$ for all $k \in [m]$, but also the block-invertibility equations through the choice of the top-left $d\times d$ block of $L^{-\tr}$. This justifies setting $U_{\mathrm{inv}}$.

Therefore, the polynomial system used in our implementation consists of two parts: the planted tensor equations $X  A_k X^\tr=0_{d \times d}$ and the auxiliary block-invertibility equations $UU_{\mathrm{inv}}=I_d$, 
where $U$ is the first $d\times d$ submatrix of $X$. We construct a polynomial system in the entries of $X$ and $U_{\mathrm{inv}}$, and then solve this polynomial system via computing the Gr\"obner basis.

This formulation is tailored to the concrete implementation. Its purpose is to model, as directly as possible, the task of recovering the planted block from the hidden public tensor under the alternating tensor planting distribution. In particular, it applies uniformly both to the case of a fixed planted alternating tensor $\tB$ and to the planted zero-block case $\tB=0$.

\begin{algorithm}[H]
\caption{Polynomial-system solving for planted alternating tensors}
    \textbf{Input:} Integers $n,d,m$, a field $\F_q$, and a public alternating tensor $\vA=(A_1,\ldots,A_m)\in \Lambda(n,\F_q)^m$. A planted tensor $\vB=(B_1,\ldots,B_m) \in \Lambda(d,\F_q)^m$, where possibly $\vB=0$.\\
    \textbf{Output:} A rank-$d$ matrix $S\in \M(d\times n, \F_q)$ such that $S\vA S^\tr=\vB$, if such $S$ exists.
    \begin{enumerate}
    \item Construct a polynomial ring $R$ over $\F_q$ in $(dn+d^2)$ variables. 
    \item Set a variable matrix $X\in \M(d\times n, R)$ using the first $dn$ variables. Set $U \leftarrow X_{[:,1..d]} \in \M(d, R)$, i.e., the submatrix of $X$ consisting of its first $d$ columns.
    \item Introduce another variable matrix $U_{\mathrm{inv}}\in \M(d, R)$ using the last $d^2$ variables.
    \item Initialize an empty equation set $\mathcal{E}$.
    
    \item For each $k\in[m]$, compute
    \[
    M_k \leftarrow X A_k X^\tr - B_k,
    \]
    and add all equations $(M_k)_{i,j}=0$ for $1\leq i,j\leq d$ to $\mathcal{E}$.
    
    \item Add all equations
    $(UU_{\mathrm{inv}}-I_d)_{i,j}=0$ for $ 1\leq i,j\leq d$
    to $\mathcal{E}$.
    
    \item Form the ideal $J=\langle \mathcal{E}\rangle \subseteq R$ and compute a Gr\"obner basis of $J$.
    \item Solve the resulting polynomial system for the entries of $X$ and $U_{\mathrm{inv}}$. 
    \item If the solution set is non-empty, return a solution $S$. Otherwise, return $\bot$.
    
\end{enumerate}
\end{algorithm}

\paragraph{Hybrid guessing–algebraic attacks.} A natural variant of the polynomial-system approach is to combine \textit{guessing} with Gr\"obner basis computation. The guess of the entries of $X$ can be interpreted as revealing partial information about the hidden transformation. Specifically, after fixing $h$ entries of $X$, the resulting polynomial system contains $dn+d^2-h$ unknowns left. This reduces the effective number of unknowns and thereby simplifies the Gr\"obner basis computation. Nevertheless, we need to guess the correct values of these entries exhaustively, so the corresponding attack incurs an additional factor $q^h$. In this way, we get a simple hybrid strategy between exhaustive guessing and a pure algebraic approach.


\paragraph{Implementation and preliminary timings.}
We implemented the above algorithm in \textsc{Magma} ~\cite{Magma} and carried out several preliminary experiments on a laptop with the following hardware: \emph{MacBook Pro (13-inch, 2019, Four Thunderbolt 3 ports), 2.4 GHz Quad-Core Intel Core i5}. Our experiments show a clear contrast between the direct and hybrid strategies. Without fixing any entries of $X$, even very small instances are already costly: for a planted instance over the prime field $\F_{131}$ with parameters $d=3$ and $n=6$, recovering the planted structure took about 41 seconds. On the same machine, we were unable to solve larger instances reliably within our computational budget.

    By contrast, after fixing a substantial number of entries of $X$ to their correct values, the residual polynomial systems for moderate parameter sizes became tractable. After fixing $n(\log n)^2$ many entries over $\F_{131}$, for $d=20$ and $n=60$, the residual system was solved in about 23 seconds; for $d=30$ and $n=90$, the running time was about 428 seconds. These timings measure only the Gröbner basis computation after the correct values of the $n(\log n)^2$ entries have been fixed. If these values are not provided as side information but instead have to be obtained by exhaustive guessing, then enumerating the $n(\log n)^2$ entries incurs an additional factor of $q^{n(\log n)^2}$. Hence, after accounting for the guessing cost, this hybrid strategy does not improve on the $q^{O(n\log n)}$-time algorithm analyzed in \cref{sec:exp}.


\section{From planting totally-isotropic spaces to planting subtensors}\label{sec:assumptions}

Throughout this section, we keep the notation from Section~\ref{sec:prelim}: we view tensors and bilinear maps via the correspondence between $\tA$ and $\phi_{\tA}$, and for the alternating setting we use the congruence action $\tA \to L^\tr\tA L$ which corresponds to $\phi_\tA(u,v)\to \phi_\tA(Lu, Lv)$.

\paragraph{The planted alternating tensor (PAT) assumption.} To accommodate cryptographic applications, we generalize the planted totally-isotropic space model
from Definition~\ref{def:PlLinER}, in which the planted subtensor is the all-zero tensor, to a more general model where an arbitrary small alternating tensor is planted into a larger ambient alternating tensor. This gives us Definition~\ref{def:PlRandomLinER}. As explained after Definition~\ref{def:PlRandomLinER}, we expect that planting in TI spaces is easier than planting in other tensors, just as the general belief that the planted cliques are the easiest to detect \cite{ABI+23}. This leads us to formulate the planted alternating tensor (PAT) assumption, which states that a random alternating tensor $\tA$ containing a planted alternating subtensor $\tB$ 
remains computationally indistinguishable from a random alternating tensor of the same size.
Formally, for an appropriate choice of parameters $n, m, q, T, \varepsilon$, we say that the ($n,m,q,T,\varepsilon,\mathtt{B}$)-PAT assumption holds if,  for every non-uniform $T(q,n)$-time adversary $\mathcal{A}$,
$$
\left|\Pr[\mathcal{A}(\tA,\tB)=1 \mid \tA \smpl \mathcal{AT}(n, m, q, \tB)]-\Pr[\mathcal{A}(\tA,\tB)=1 \mid \tA \smpl \mathcal{AT}(n, m, q)]\right| \leq \varepsilon.
$$

\paragraph{The planted alternating tensor with 2-hints (PATw2H) assumption.} The PATw2H
assumption is a strengthening 
of the PAT assumption in which the adversary is additionally provided with \emph{hints} in the form of the two vectors $(\hat{x},\hat{y})$ that relate to the hiding action in the planting procedure.
Formally, let $J_d=
\begin{bmatrix}
I_d\\
0
\end{bmatrix}
\in \F_q^{n\times d}
$ 
 and given $\tB\in\AT(d,m,\F_q)$. Recall that we defined the distribution $\PPlLinER$ on pairs of alternating bilinear maps and invertible matrices in Definition~\ref{def:PlRandomLinER}. We say the $(n,m,q,T,\varepsilon,\tB)$-PATw2H assumption holds, if for every pair of linearly independent $x,y\in\F_q^d$, the following distributions are $\varepsilon$-computationally indistinguishable for every non-uniform $T(q,n)$-time adversary:



$$
\left\{\left(\tA, \tB,\hat{x},\hat{y}\right)\left\lvert\, \begin{array}{l}
(\tA,L)\smpl \mathcal{PAT}(n,m,q,\tB) \\
\hat{x}\leftarrow L^{-1}J_dx, \hat{y}\leftarrow L^{-1}J_d y
\end{array}\right.\right\},
$$
and 
$$
\left\{\left(\tA, \tB,\hat{x},\hat{y}\right)\left\lvert\, \begin{array}{l}
\hat{x},\hat{y} \smpl \F_q^n\\
\forall k\in [m]: M_k\smpl \{M\in \Lambda(n, \F_q): \hat{x}^{\tr}M\hat{y}=x^\tr B_ky\}\\
\tA\gets (M_1,\dots,M_m)
\end{array}\right.\right\}.
$$


Similarly to \cref{conj:decision}, for the purpose of our PSM construction, we conjecture the following:
\begin{conjecture}[PATw2H Conjecture]\label{conj:\PRSwH}
There exist constants $0<c<1$, $C\geq 3$ and $d=\lceil n/C\rceil$, such that the $(n,m=\lceil n/\log n\rceil,q,q^{o(n)},\varepsilon,\tB)$-PATw2H assumption holds with $\varepsilon=c$ against all non-uniform $q^{o(n)}$-time adversaries.
\end{conjecture}
\begin{lemma}\label{lemma:bridge-sim}
    Let $\tB\in \AT(d,m,\F_q)$ be fixed. Let $z_\tB(x,y)=(x^\tr B_1 y,\dots, x^\tr B_m y)\in \F_q^m$. Assume that the $(n,m,q,T,\varepsilon,\tB)$-PATw2H assumption holds. Then there exists a non-uniform PPT simulator $\alg{Sim}_\alg{PATw2H}$ such that for every non-uniform $T(n)$-time adversary $\mathcal A$ and for every pair of linearly independent $x,y\in\F_q^d$,
\[
\left|
\Pr[\mathcal A(\tA,\hat{x},\hat{y},\tB)=1]-
\Pr[\mathcal A(\alg{Sim}_\alg{PATw2H}(\mathbbm{1}^n,\tB,z_\tB(x,y)))=1]
\right|
\leq \varepsilon(n),
\]
where $(\tA,L)\xleftarrow{\$}\mathcal{PAT}(n,m,q,\tB), \hat{x}=L^{-1}J_dx, \hat{y}=L^{-1}J_dy$.
\end{lemma}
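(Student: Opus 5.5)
The simulator is essentially the second distribution in the definition of the PATw2H assumption, made explicit so that it needs only $\mathbbm{1}^n$, $\tB$ and the value $z_\tB(x,y)$. On input $(\mathbbm{1}^n,\tB,z)$ with $z=(z_1,\dots,z_m)\in\F_q^m$, $\alg{Sim}_\alg{PATw2H}$ will:
\begin{enumerate}
\item sample $\hat x,\hat y\smpl\F_q^n$ uniformly;
\item for each $k\in[m]$, sample $M_k$ uniformly from the affine set $\{M\in\Lambda(n,\F_q): \hat x^\tr M\hat y=z_k\}$;
\item output $((M_1,\dots,M_m),\hat x,\hat y,\tB)$.
\end{enumerate}
Since $z_k=x^\tr B_k y$ by definition of $z_\tB$, the output is distributed exactly as the second ensemble in the PATw2H assumption. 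Applying that assumption to the fixed pair $x,y$ then gives the claimed $\varepsilon(n)$ bound for every non-uniform $T(n)$-time adversary, because the first ensemble is exactly $(\tA,\tB,\hat x,\hat y)$ with $(\tA,L)\smpl\mathcal{PAT}(n,m,q,\tB)$. The only bookkeeping is the order of components; reordering the tuple is free for the adversary.

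What remains is to show that the simulator runs in polynomial time. The main point is sampling $M_k$ uniformly from the constrained set. The map $M\mapsto \hat x^\tr M\hat y$ is a linear functional on $\Lambda(n,\F_q)$. In coordinates $M=(m_{ij})_{i<j}$ it equals
\[
\sum_{i<j} m_{ij}\,(\hat x_i\hat y_j-\hat x_j\hat y_i).
\]
If $\hat x,\hat y$ are linearly independent, some coefficient $\hat x_i\hat y_j-\hat x_j\hat y_i$ is nonzero, because it is a $2\times 2$ minor of $[\hat x\ \hat y]$. In that case we sample all other coordinates uniformly and solve for $m_{ij}$, which yields the uniform distribution on the affine hyperplane.

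If $\hat x,\hat y$ are dependent, which happens with probability at most $q^{1-n}$, the functional is identically zero. The constraint set is then all of $\Lambda(n,\F_q)$ when $z_k=0$ and empty otherwise. To avoid that degenerate case, I will have the simulator resample $\hat x,\hat y$ conditioned on linear independence; this is still polynomial-time in expectation, or we can cap the number of trials and absorb the negligible error. The cleanest fix is to note that, for linearly independent $x,y$ and invertible $L$, the vectors $\hat x=L^{-1}J_dx$ and $\hat y=L^{-1}J_dy$ in the real distribution are always independent. So I expect the intended reading of ``$\hat x,\hat y\smpl\F_q^n$'' to be that they are uniform linearly independent pairs, which is exactly their distribution under a uniform $L$. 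I would state this interpretation explicitly, and otherwise absorb the $q^{1-n}$ statistical difference into $\varepsilon$. This degenerate case is the only step needing care; everything else is syntactic.
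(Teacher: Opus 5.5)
Your proposal is correct and follows the paper's proof: the paper also defines $\alg{Sim}_\alg{PATw2H}$ as sampling the second PATw2H ensemble with $z$ in place of $(x^\tr B_k y)_k$, noting that this ensemble depends on $x,y$ only through $z_\tB(x,y)$, and then invokes the assumption for the fixed $\tB$. Your extra points go beyond what the paper writes and are sound: efficient uniform sampling from the affine hyperplane via a nonzero $2\times 2$ minor, and the observation that the second ensemble is well defined only when $\hat x,\hat y$ are taken linearly independent. One small correction, which does not affect the argument: the probability that $\hat x,\hat y$ are dependent is $q^{-n}+q^{1-n}-q^{1-2n}$, slightly more than the $q^{1-n}$ you stated.
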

\begin{proof}
    Define $\alg{Sim}_\alg{PATw2H}(1^n,\tB,z)$ to sample from the second distribution in the PATw2H assumption for the fixed planted tensor $\tB$ and the target output vector $z$. The claim is then exactly the fixed $\tB$ special case of the PATw2H assumption.
\end{proof}

\paragraph{The planted random alternating tensor with 2-hints (\PRATwTwoH) assumption.} The \PRATwTwoH
assumption is a slight variant of the PATw2H assumption where the adversary does not get a deterministically fixed tensor $\tB$ but gets a random tensor $\tB\smpl\mathcal{AT}(d,m,q)$. Thus, this assumption is weaker than the PATw2H assumption. Formally, recall that $J_d=
\begin{bmatrix}
I_d\\
0
\end{bmatrix}
\in \F_q^{n\times d}
$. We say the $(n,m,d,q,T,\varepsilon)$-\PRATwTwoH assumption holds, if for every pair of linearly independent $x,y\in\F_q^d$, the following distributions are $\varepsilon$-computationally indistinguishable for every non-uniform $T(q,n)$-time adversary:

$$
\left\{\left(\tA, \tB,\hat{x},\hat{y}\right)\left\lvert\, \begin{array}{l}
\tB\smpl\mathcal{AT}(d,m,q)\\
(\tA,L)\smpl \mathcal{PAT}(n,m,q,\tB) \\
\hat{x}\leftarrow L^{-1}J_dx, \hat{y}\leftarrow L^{-1}J_d y
\end{array}\right.\right\},
$$
and 
$$
\left\{\left(\tA, \tB,\hat{x},\hat{y}\right)\left\lvert\, \begin{array}{l}
\tB\smpl\mathcal{AT}(d,m,q)\\
\hat{x},\hat{y} \smpl \F_q^n\\
\forall k\in [m]: M_k\smpl \{M\in \Lambda(n, \F_q): \hat{x}^{\tr}M\hat{y}=x^\tr B_ky\}\\
\tA\gets (M_1,\dots,M_m)
\end{array}\right.\right\}.
$$

This assumption is designated for our FGSS construction, and so we conjecture the following:
\begin{conjecture}[\PRATwTwoH Conjecture]\label{conj:\PRATwTwoH}
There exist constants $0<c<1$, $C \geq 3$ and $d=\lceil n/C\rceil$, such that the $(n,m=\lceil n/\log n\rceil,d,q,q^{o(n)},\varepsilon)$-\PRATwTwoH assumption holds with $\varepsilon=c$ against all non-uniform $q^{o(n)}$-time adversaries.
\end{conjecture}

\section{Private Simultaneous Messages from Planted Tensors}\label{sec:PSM}

In this section, we present a computational 2-party private simultaneous messages (PSM) protocol with public information, based on the alternating planting framework from Section~\ref{sec:assumptions}. 
We construct the protocol for alternating bilinear functions $f: \F_q^n \times \F_q^n \to \F_q^m$, and it can be extended to arbitrary bilinear functions via a simple alternating lift. This goes beyond prior constructions based on planted graph assumptions, which were primarily designed for Boolean functions of the form $[n] \times [n] \to \{0,1\}$.
\subsection{PSM protocol with public information}


A private simultaneous messages (PSM) protocol is a cryptographic tool that enables two parties to independently encode their initial secret inputs with randomness. The encodings are then sent non-interactively, and from these encodings one can evaluate a function $f$ and output the same value as applying both parties' initial secret inputs to $f$. Crucially, any observer who only has access to the encodings gains no knowledge about the secret inputs beyond what is revealed by the function output itself. Building upon this standard model, \cite{ABIKN23} proposed a computational variant with public information in which a setup phase provides the parties with secret shared randomness as well as some public information. This public information is essential for reconstructing the output of the function, but it does not reveal anything further about the secret inputs under certain computational assumptions. The model can be formalized as follows, with a slight change from Boolean functions to bilinear maps.

\begin{definition}[PSM protocols with public information \cite{ABIKN23} for bilinear maps]
Let $\mathcal{F}:=\{f_n\}_{n \in \mathbb{N}}$ be a family of bilinear maps such that $f_n:\F_q^n \times\F_q^n \rightarrow\F_q^m$.\footnote{Unless stated otherwise, we always exclude zero vectors from the admissible input domain, as any zero argument forces the bilinear output to be the all-zero vector and thus constitutes a degenerate evaluation.}
A $(T, \varepsilon)$-secure private simultaneous messages (PSM) protocol with public information for $\mathcal{F}$ is a triple of PPT algorithms (\alg{Setup}, \alg{Encode}, \alg{Output}) with the following syntax:
\begin{itemize}
	\item \alg{Setup} is randomized and takes as input $\mathbbm{1}^n$ and a description of $f_n$. The output is a triple ($I, s_0, s_1$) where $I$ is public information, and $s_0, s_1$ are private values.
	\item \alg{Encode} is randomized and takes as input $\mathbbm{1}^n$, an index $i \in\{0,1\}$, a public information $I$, a private value $s_i$, and an input $x_i \in \F_q^n$. The output is a message $m_i$.
	\item \alg{Output} is deterministic and takes as input two messages $\hat{x}, \hat{y}$ and public information $I$. The output is $z\in \F_q^m$.
\end{itemize}
We require the following properties:
\begin{itemize}
	\item \textbf{Perfect correctness.} For all inputs $x, y \in\F_q^n$ and random strings $r, r_0, r_1$, if $\left(I, s_0, s_1\right) \smpl$ $\alg{Setup}\left(\mathbbm{1}^n, f_n ; r\right), \hat{x} \smpl \alg{Encode}\left(\mathbbm{1}^n, 0, I, s_0, x ; r_0\right)$, and $\hat{y} \smpl \alg{Encode}\left(\mathbbm{1}^n, 1, I, s_1, y ; r_1\right)$, then the output is correct, i.e., $\alg{Output}\left(\hat{x}, \hat{y}, I\right)=f_n(x, y)$.
	\item \textbf{Security.} There exists a non-uniform polynomial-time simulator $\alg{PSMSim}$ such that, for every non-uniform $(T(n) \cdot \operatorname{poly}(n))$-time adversary $\mathcal{A}$, sufficiently large $n$, and $x, y \in \F_q^n$, 
	
\[
\left|
\begin{aligned}
&\Pr\!\left[
\mathcal{A}(\mathbbm{1}^n,\hat{x},\hat{y},I)=1
\;\middle|\;
\begin{array}{l}
(I,s_0,s_1)\xleftarrow{\$}\alg{Setup}(\mathbbm{1}^n,f_n),\\
\hat{x}\xleftarrow{\$}\alg{Encode}(\mathbbm{1}^n,0,I,s_0,x),\\
\hat{y}\xleftarrow{\$}\alg{Encode}(\mathbbm{1}^n,1,I,s_1,y)
\end{array}
\right]
\\[-1mm]
&\qquad -
\Pr\!\left[
\begin{array}{l}
(\hat{x}',\hat{y}',I')\xleftarrow{\$}
\alg{PSMSim}(\mathbbm{1}^n,f_n(x,y)),\\
\mathcal{A}(\mathbbm{1}^n,\hat{x}',\hat{y}',I')=1
\end{array}
\right]
\end{aligned}
\right|
\leq \varepsilon(n).
\]
	that is, the adversary cannot distinguish with advantage greater than $\varepsilon(n)$ if the messages and public information were generated as in the PSM protocol with inputs $x, y$ or by the simulator only holding the output $f(x, y)$.
\end{itemize}
The message size of a PSM protocol with public information is

$$
\ell(n):=\max _{i, r, r^{\prime}, x \in \F_q^n}\left\{\left|m_i\right| \left\lvert\, \begin{array}{l}
	\left(I, s_0, s_1\right) \leftarrow \alg{Setup}\left(\mathbbm{1}^n ; r\right) \\
	m_i \leftarrow \alg{Encode}\left(\mathbbm{1}^n, i, I, s_i, x ; r^{\prime}\right)
\end{array}\right.\right\} .
$$
\end{definition} 

\subsection{Alternating lifts of bilinear maps}
In this section, we explain how to encode an arbitrary bilinear function into an alternating tensor.
\begin{definition}[Tensor representation of a bilinear function]\label{def:tensor representation}
Let $f:\F_q^r \times \F_q^r \to \F_q^m$ be a bilinear function with $f(x,y)=(x^\tr H_1y,\dots,x^\tr H_my)$, where $x,y\in \F_q^r$ and each $H_k\in \M(r,\F_q)$.
We define the tensor representation of $f$ to be a tensor $H=(H_1,\dots,H_m)\in \T(r\times r\times m,\F_q)$, where the $k$th slice encodes the bilinear form corresponding to the $k$th entry of the output.
\end{definition}
\begin{definition}[Alternating lift]
Let $f:\F_q^r \times \F_q^r \to \F_q^m$ be a bilinear map. An alternating lift of $f$ is a quadruple $\mathcal{L}_f=(d,\tB,U,V)$, where $\tB=(B_1,\dots,B_m)\in \AT(d,m,\F_q), U,V\in \M(d\times r,\F_q)$, such that for every $x,y\in \F_q^r$, $f(x,y)=\bigl((Ux)^\tr B_1 (Vy),\dots,(Ux)^\tr B_m (Vy)\bigr)$.
\end{definition}
In other words, an alternating lift expresses the given bilinear map as the restriction of an alternating bilinear map on a larger space, after two fixed linear embeddings.
\begin{proposition}[Standard alternating lift]\label{prop:alternating_lift}
Every bilinear map $f:\F_q^r \times \F_q^r \to \F_q^m$ admits an alternating lift of dimension $d=2r$.
More precisely, let
\[
B_k=
\begin{bmatrix}
0 & H_k\\
-H_k^\tr & 0
\end{bmatrix}
\in \Lambda(2r,\F_q),
\qquad k\in[m],
\]
and define
\[
U=
\begin{bmatrix}
I_r\\
0
\end{bmatrix},
\qquad
V=
\begin{bmatrix}
0\\
I_r
\end{bmatrix}
\in \F_q^{2r\times r}.
\]
Then $\mathcal{L}_f=(2r,\tB,U,V)$ is an alternating lift of $f$.
\end{proposition}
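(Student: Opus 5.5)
We need to check two things: that each $B_k$ is alternating, so that $\tB\in\AT(2r,m,\F_q)$, and that the lift identity holds for every pair of inputs.

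\emph{Alternating property.} Write $u=(u_1,u_2)$ with $u_1,u_2\in\F_q^r$. Then
\[
u^\tr B_k u = u_1^\tr H_k u_2 - u_2^\tr H_k^\tr u_1 .
\]
The second term is a scalar, so it equals its own transpose: $u_2^\tr H_k^\tr u_1 = u_1^\tr H_k u_2$. The two terms therefore cancel and $u^\tr B_k u=0$. This argument works in every characteristic, including characteristic $2$. It uses the definition of alternating as $u^\tr A u=0$ for all $u$, not merely skew-symmetry, so it also covers the diagonal. The diagonal is in any case visibly zero, because the diagonal blocks of $B_k$ are $0$.

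\emph{Lift identity.} Fix $x,y\in\F_q^r$. We have $Ux=(x,0)$ and $Vy=(0,y)$ in $\F_q^{2r}$. Expanding the block product,
\[
(Ux)^\tr B_k (Vy) = x^\tr\cdot 0\cdot 0 + x^\tr H_k y + 0\cdot(\cdots) + 0 = x^\tr H_k y ,
\]
since only the top-right block $H_k$ of $B_k$ pairs the first half-coordinates with the second. Collecting over $k\in[m]$, the tuple $\bigl((Ux)^\tr B_k(Vy)\bigr)_k$ equals $f(x,y)$, using the tensor representation of $f$ from Definition~\ref{def:tensor representation}.

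Finally, $U,V\in\M(2r\times r,\F_q)$ have the required shapes. Hence $(2r,\tB,U,V)$ is an alternating lift of $f$ of dimension $d=2r$.
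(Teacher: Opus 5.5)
Your proposal is correct and follows the paper's proof, which is the same block-matrix computation $(Ux)^\tr B_k(Vy)=x^\tr H_k y$ applied coordinate-wise. Your additional check that $u^\tr B_k u=0$ for every $u$, valid in all characteristics including $2$, fills in a point the paper only asserts when it writes $B_k\in\Lambda(2r,\F_q)$.
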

\begin{proof}
    For every $k\in[m]$ and $x,y\in\F_q^r$, we have 
    \[
(Ux)^\tr B_k (Vy)
=
\begin{bmatrix}
x^\tr & 0
\end{bmatrix}
\begin{bmatrix}
0 & H_k\\
-H_k^\tr & 0
\end{bmatrix}
\begin{bmatrix}
0\\
y
\end{bmatrix}
=
x^\tr H_k y.
\]
Applying this identity coordinate-wise gives $\bigl((Ux)^\tr B_1 (Vy),\dots,(Ux)^\tr B_m (Vy)\bigr)=f(x,y)$ as required.
\end{proof}
From this point onward, the protocol itself depends only on an alternating lift $\mathcal{L}_f=(d,B,U,V)$ and on the planted alternating tensor framework from Section~\ref{sec:assumptions}.

\subsection{PSM protocols with public information from planted tensors}


We now describe the generic protocol. The protocol assumes that we are given an alternating lift $\mathcal{L}_f=(d,\tB,U,V) $ of the target bilinear map $f:\F_q^r \times \F_q^r \to \F_q^m$, where $\tB=(B_1,\dots,B_m)\in \AT(d,m,\F_q)$

To construct the protocol, we planted $\tB$ into a large random tensor $\tA$ to obtain an updated tensor $\tA$. Intuitively, the tensor $\tA$ hides the specific instance $\tB$ within a high-dimensional structure, while the isomorphic transformation ensures that only the parties holding the appropriate information can align their inputs with $\tB$. The public information therefore consists of the obfuscated tensor $\tA$ in which we plant $\tB$. The setup will provide the parties with the partial isomorphic transformation information.




\begin{figure}[H]
\begin{protocol}
\begin{center}\textsc{PSM protocol for bilinear maps based on planted tensors}\end{center}
\textbf{Parameters.} Let $q$ be a prime power. Let $r,d,n,m$ be integers. Let $f:\mathbb{F}_q^r\times\mathbb{F}_q^r\to\mathbb{F}_q^m$ be a bilinear function defined by $f(x,y)=(x^\tr H_1y,\dots,x^\tr H_my)$.
Let $J_d=\begin{bmatrix}I_d\\0\end{bmatrix}\in\M(n\times d,\F_q)$.
\paragraph{Setup $(\mathbbm{1}^n,f)$}
\begin{enumerate}
\item Compute the alternating lift $\mathcal{L}_f=(d,\tB,U,V)$
	\item $(\tA,L)\gets \mathcal{PAT}(n,m,q,\tB)$.
	\item Publish the public information $I:=\tA$. Both parties receive the shared randomness $\phi=(\phi_0,\phi_1):=(U^\tr J_d^\tr L^{-\tr}, V^\tr J_d^\tr L^{-\tr})$.
\end{enumerate}

\paragraph{Encode \,($\mathbbm{1}^n, i, x,y\in \F_q^r, s_i=\phi_i$)}
\begin{enumerate}
    \item If $i=0$, on input $x\in \F_q^r$, output $\hat{x}:=s_0^\tr x\in \F_q^n$
    \item If $i=1$, on input $y\in \F_q^r$, output $\hat{y}:=s_1^\tr y \in \F_q^n$
\end{enumerate}
\paragraph{Output \,($\hat{x},\hat{y},I=\tA$)}
\begin{enumerate}
	\item Output $(z_1,\dots,z_m):=(\hat{x}^\tr A_1\hat{y},\dots,\hat{x}^\tr A_m\hat{y})\in \F_q^m$.
\end{enumerate}
\end{protocol} 
\caption{PSM protocol for bilinear maps based on planted tensors}
\label{protocol:PSM}
\end{figure}
\begin{theorem}
Consider the PSM protocol constructed in~\cref{protocol:PSM} where Alice and Bob hold inputs $x\in \F_q^r$ and $y\in \F_q^r$ respectively, if the $(n,m,q,T,\varepsilon,\tB)$-PATw2H
assumption holds for the distribution of $\tB$ then the protocol is $(T,\varepsilon)$-secure. Each party sends one vector in $\F_q^n$, so the message size is $n\log q$ bits,
and the public information consists of an $n\times n \times m$ alternating tensor, hence has size $n^2m\log q$ bits. Moreover, assume that Conjecture~\ref{conj:\PRSwH} holds, then, there exists a constant $C\geq3$ such that $d=\lceil n/C\rceil$ and the construction achieves $(T,\varepsilon)$ security with message size of $Cd\log q$ bits and public information size of $C^2d^2
\left\lceil\frac{Cd}{\log(Cd)}\right\rceil\log q
=
O\left(\frac{d^3\log q}{\log d}\right)
$ bits for every adversary running in time $q^{o(n)}$.
\end{theorem}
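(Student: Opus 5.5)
Three things need checking: correctness, security via reduction to the PATw2H assumption through \cref{lemma:bridge-sim}, and the parameter count.

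\emph{Correctness.} The encodings are
\[
\hat x = s_0^\tr x = L^{-1}J_dUx, \qquad \hat y = L^{-1}J_dVy.
\]
By \cref{def:PlRandomLinER}, the public tensor satisfies $\tA=L^{-\tr}\tilde\tA L^{-1}$ slice-wise, where the top-left $d\times d$ block of $\tilde\tA$ is $\tB$; equivalently $J_d^\tr L^{-\tr}A_kL^{-1}J_d=B_k$. Hence
\[
\hat x^\tr A_k\hat y=(Ux)^\tr J_d^\tr L^{-\tr}A_kL^{-1}J_d(Vy)=(Ux)^\tr B_k(Vy),
\]
which equals the $k$th coordinate of $f(x,y)$ by the alternating lift property (\cref{prop:alternating_lift}). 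This gives perfect correctness for every choice of randomness.

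\emph{Security.} Set $x':=Ux$ and $y':=Vy\in\F_q^d$. The adversary's real view is then exactly $(\tA,\hat x,\hat y)$ with $\hat x=L^{-1}J_dx'$ and $\hat y=L^{-1}J_dy'$, i.e.\ the first distribution of the PATw2H assumption instantiated at the pair $(x',y')$. Since $\tB$ is fixed by $f$ and public, giving it to the adversary costs nothing.
\begin{itemize}
\item The simulator $\alg{PSMSim}(\mathbbm{1}^n,z)$ runs $\alg{Sim}_\alg{PATw2H}(\mathbbm{1}^n,\tB,z)$ from \cref{lemma:bridge-sim} on $z=f(x,y)=z_\tB(x',y')$ and outputs the resulting triple.
\item It needs only the output, because the second distribution depends on $(x,y)$ solely through the values $x'^\tr B_ky'$.
\item Any $\mathcal A$ distinguishing the real view from the simulated one with advantage above $\varepsilon$ yields a PATw2H distinguisher with the same running time up to $\poly(n)$, contradicting the assumption.
\end{itemize}

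\emph{Main obstacle.} The assumption is quantified only over \emph{linearly independent} pairs in $\F_q^d$. With the standard lift, $x'=(x;0)$ and $y'=(0;y)$ lie in complementary coordinate blocks, so they are linearly independent whenever $x,y\neq0$. This is exactly where the footnote excluding zero inputs is used. For a general lift one would instead add the hypothesis that $Ux$ and $Vy$ are independent.

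\emph{Parameters.} Each message is one vector in $\F_q^n$, costing $n\log q$ bits. The public tensor has $n^2m$ field entries; alternation halves this, but the stated bound is the crude one. Under Conjecture~\ref{conj:\PRSwH} we take $d=\lceil n/C\rceil$, so $n\le Cd$ and $m=\lceil n/\log n\rceil\le\lceil Cd/\log(Cd)\rceil$. The message size is then $Cd\log q$ bits and the public size is
\[
C^2d^2\lceil Cd/\log(Cd)\rceil\log q=O(d^3\log q/\log d)
\]
bits, with security holding against $q^{o(n)}$-time adversaries with advantage $\varepsilon$ as the conjecture provides.
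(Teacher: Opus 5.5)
Your proof is correct and follows essentially the same route as the paper: the same correctness calculation, the same substitution $x'=Ux$, $y'=Vy$ identifying the real view with the first PATw2H distribution (with linear independence coming from the standard lift and nonzero inputs), and the same simulator $\alg{Sim}_{\alg{PATw2H}}(\mathbbm{1}^n,\tB,f(x,y))$ from \cref{lemma:bridge-sim}. One harmless slip: the hiding step gives $\tA=L^\tr\tilde\tA L$ rather than $L^{-\tr}\tilde\tA L^{-1}$, but the identity $J_d^\tr L^{-\tr}A_kL^{-1}J_d=B_k$ that you actually use is the correct one.
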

\begin{proof}
We prove correctness and security separately.\\~
\emph{Correctness.}
Fix any $x,y\in \F_q^r$ and any $k\in [m]$.  Let $A'$ denote the planted tensor before the final hiding step, so that $A=L^\tr A' L$ and $J_d^\tr A'_k J_d = B_k$. By construction of the private values, we have $\hat{x}=s_0^\tr x = L^{-1}J_d Ux$ and $\hat{y}=s_1^\tr y = L^{-1}J_d Vy$. Therefore,
\[
\hat{x}^\tr A_k \hat{y}
=
x^\tr U^\tr J_d^\tr L^{-\tr}(L^\tr A'_k L)L^{-1}J_d V y
=
x^\tr U^\tr J_d^\tr A'_k J_d V y.
\]
Using the identity $J_d^\tr A'_k J_d=B_k$, we obtain
\[
\hat{x}^\tr A_k \hat{y}
=
x^\tr U^\tr B_k V y
=
(Ux)^\tr B_k (Vy).
\]
Since $\mathcal{L}_f=(d,\tB,U,V)$ is an alternating lift of $f$, we conclude that
\[
\bigl(\hat{x}^\tr A_1 \hat{y},\dots,\hat{x}^\tr A_m \hat{y}\bigr)=f(x,y).
\]
This proves perfect correctness.\\
\emph{Security.} 
Fix arbitrary inputs $x,y\in \F_q^r$ and write $x':=Ux\in \F_q^d, y':=Vy\in \F_q^d$. Note that $x'$ and $y'$ are linearly independent by the choice of $U$ and $V$ in \cref{prop:alternating_lift}. Define $z_\tB(x',y'):=\bigl((x')^\tr B_1 y',\dots,(x')^\tr B_m y'\bigr)\in \F_q^m$. By the defining property of the alternating lift, $z_\tB(x',y')=f(x,y)$.

We now describe the adversary's view in the protocol. The public information is the planted alternating tensor $\tA$, and the two messages are $\hat{x}=L^{-1}J_d x'$ and $\hat{y}=L^{-1}J_d y'$, where $(\tA,\tB,L)\xleftarrow{\$} \mathcal{PAT}(n,m,q,\tB)$. Therefore, for the fixed inputs $x,y$, the real protocol view is distributed exactly as $\left(
\tA,\,
L^{-1}J_d x',\,
L^{-1}J_d y'
\right)$.
By Lemma~\ref{lemma:bridge-sim} and PATw2H, there exists a simulator $\alg{Sim}_\alg{PATw2H}$ such that, for every pair of such $x',y'\in \F_q^d$, the real distribution $\left(
\tA,\,
L^{-1}J_d x',\,
L^{-1}J_d y'
\right)$ with $(\tA,\tB,L)\xleftarrow{\$} \mathcal{PAT}(n,m,q,\tB)
$ is computationally indistinguishable from $\alg{Sim}_\alg{PATw2H}(\mathbbm{1}^n,\tB,z_\tB(x',y'))$. Since $z_\tB(x',y')=f(x,y)$, it follows that the real protocol view is computationally indistinguishable from $\alg{Sim}(\mathbbm{1}^n,\tB,f(x,y))$.

We now define the simulator for the PSM protocol by 
$$\alg{PSMSim}(\mathbbm{1}^r, f(x,y)):=\alg{Sim}_\alg{PATw2H}(\mathbbm{1}^n,\tB,z_\tB(x',y'))$$
This simulator depends only on the public description of the functionality through the fixed tensor $\tB$, and on the output value $f(x,y)$; it does not depend on the private
inputs $x$ and $y$ themselves. Consequently, for every non-uniform adversary $\mathcal{A}$ running in time
$T(r)\cdot \poly(r)$, the distinguishing advantage between
\begin{itemize}
    \item  the real protocol transcript generated from $(x,y)$, and
    \item the simulated transcript $\alg{PSMSim}(1^r,f(x,y))$
\end{itemize}
is at most $\varepsilon(r)$.
This is exactly the required security condition for a PSM protocol with public information.
\end{proof}

\begin{remark}
    Although our construction is inspired by \cite{ABIKN23}, the security proof is different from the planted subgraph setting. In the
planted subgraph PSM protocol, the public object encodes a hidden adjacency structure, each party
reveals the planted location of one vertex, and the referee recovers the output by checking a
single adjacency predicate in the public graph. Accordingly, the security argument is centered on
hiding this adjacency information beyond the output bit.

In contrast, our protocol is built around a planted alternating tensor, and the two transmitted
messages are not vertex locations but encoded vectors derived from the hidden congruence
transformation. The referee recovers the output by evaluating a family of bilinear forms $(\hat{x}^\tr A_1 \hat{y},\dots,\hat{x}^\tr A_m \hat{y})$. Our proof no longer reduces to adjacency testing, but instead to preserving only a vector of algebraic relations induced by the planted tensor.
\end{remark}



	
	

\section{Secret Sharing Schemes from Planted (Random) Tensors}\label{sec:secretsharing}
In a $2$-out-of-$d$ secret-sharing scheme (also known as a $(2,d)$-threshold secret-sharing scheme), any two parties can recover the secret but any single party cannot. A secret-sharing scheme for a forbidden graph $Q_d$ generalizes this idea: the parties correspond to the vertices of $Q_d$, and the edges of $Q_d$ show whether the corresponding two parties can recover the secret. More specifically, any two parties whose vertices are connected must be able to recover the secret, while any two vertices without an edge between them are guaranteed to learn nothing about the secret. In analogy to the PSM with public information, this model can be further augmented with some public information that is broadcast to everyone at the beginning. The public information is necessary for recovering the secret by authorized pairs of parties, but it should not give adversaries any advantage in learning about the secret. In this section, we demonstrate how to construct a secret-sharing scheme with public information for forbidden graph access structures under the \PRATwTwoH assumption. 

\subsection{Forbidden graph secret-sharing schemes with public information} 

We now formalize the definitions of forbidden graph access structures and the public information model. 

\begin{definition}[Forbidden graph access structures {\cite{SS97}}]
A $d$-party forbidden graph access structure is a family of sets $(\mathcal{Q}_d, \mathcal{F}_d)_{d \in \mathbb{N}}$ such that
\begin{itemize}
	\item $|S|=2$ for every $S \in \mathcal{Q}_d$.
	\item $\mathcal{Q}_d \cup \mathcal{F}_d=\{S \subseteq[d] \mid |S| \leq 2\}$, $\mathcal Q_d\cap\mathcal F_d=\varnothing$.
\end{itemize}
\end{definition}

This definition follows the generic construction of the access structures in secret-sharing schemes, where $\mathcal{Q}_d$ refers to the set of parties who are able to recover the secret and $\mathcal{F}_d$ contains the set of parties who must not be able to learn anything about the secret. The first condition means that recovering the secret always requires (at least) pairs of parties, while the second condition can then be understood to forbid the remaining pairs of parties, as well as any single party itself, from accessing the secret. So this is naturally encoded as an order-$d$ graph structure, where the $d$ vertices correspond to the $d$ parties, and two vertices are adjacent if and only if they form a $2$-element set $S$ in the authorized set $\mathcal{Q}_d$.

In analogy to the PSM protocols with public information, \cite{ABIKN23} proposed a computational variant of secret-sharing schemes in which the primitive is also augmented with public information. To share a secret value $s$, the dealer generates a collection of small private shares $(s_1, \ldots, s_d)$, assigning one to each participant, along with large public information $I$ that is broadcast to everyone. The public information $I$ is required for successful reconstruction of the secret but, by design, it does not compromise the secrecy of $s$. Working in the computational setting allows $I$ to be used to reduce the size of the private shares. This model is motivated by scenarios where storing or transmitting private information is more costly than handling public data. Furthermore, such schemes often enable secret reconstruction with very limited communication among the parties, a property that is particularly attractive when the same public information can be reused across multiple sharings. Now we slightly adjust the model from a 1-bit secret setting to a $q$-ary string secret, formally defined as follows.

\begin{definition}[Secret-sharing schemes with public information and forbidden graph access structures {\cite{ABIKN23}} for $q$-ary string secret]
Let $m\in\N$. Let $T: \mathbb{N} \rightarrow \mathbb{N}$ be a time bound and let $\varepsilon: \mathbb{N} \rightarrow[0,1]$ be an indistinguishability bound. A $(T, \varepsilon)$-secure secret-sharing scheme with public information for the forbidden graph access structure $(\mathcal{Q}_d, \mathcal{F}_d)_{d \in \mathbb{N}}$ is a pair of uniform PPT algorithms (\textsf{Share}, \textsf{Recover}) with the following syntax:
\begin{itemize}
	\item \textsf{Share} is a randomized algorithm that takes as input $\mathbbm{1}^d$ and a secret $x \in\F_q^m$. Its output is public information $I$ and $d$ strings, called shares, $s_1, \ldots, s_d$, one for each party.
	\item \textsf{Recover} is a deterministic algorithm that takes as input $\mathbbm{1}^d$, a set $S \subseteq[d]$ with $|S|=2$, public information $I$, and shares $(s_i)_{i \in S}$. Its output is either a value $x^{\prime} \in\F_q^m $ or $\perp$.
\end{itemize}
We require the following properties:
\begin{itemize}
	\item \textbf{Perfect correctness.} For every $d \in \mathbb{N}, S \in \mathcal{Q}_d, x \in\F_q^m$, and random string $r$,
	$$
	\text { If } (I, s_1, \ldots, s_d) \leftarrow \textsf{Share}(\mathbbm{1}^d, x; r) \text {, then } \textsf{Recover}(\mathbbm{1}^d, S, I, (s_i)_{i \in S})=x.
	$$
	\item \textbf{Security.} For every non-uniform $(T(d)\cdot\operatorname{poly}(d))$-time adversary $\mathcal{A}$, sufficiently large $d$, and $S \in \mathcal{F}_d$,
	$$
	\left|\Pr\left[\mathcal{A}(\mathbbm{1}^d, I,\left(s_i\right)_{i \in S})=x \left\lvert\, \begin{array}{l}
		x \smpl\F_q^m \\
		\left(I, s_1, \ldots, s_d\right) \leftarrow \textsf{Share}\left(\mathbbm{1}^d, x\right)
	\end{array}\right.\right]-\frac{1}{q^m}\right| \leq \varepsilon(d).
	$$
\end{itemize}
We define the share size of the scheme as
$$
\ell(d):=\max _{i \in[d], x \in\F_q^m, r}\left\{\left|s_i\right| \left|\left(I, s_1, \ldots, s_d\right) \leftarrow \textsf{Share}\left(\mathbbm{1}^d, x ; r\right)\right\}\right..
$$
\end{definition}

This definition mainly follows the modeling in \cite{ABIKN23}. Here, the algorithm \textsf{Share} describes the procedure of broadcasting the public information $I$ and distributing the shares $s_1,s_2,\dots,s_d$ to $d$ parties, one for each; the algorithm \textsf{Recover} requires the authorized set of parties in $\mathcal{Q}_d$ to exactly reconstruct the secret. To achieve the security, adversaries cannot guess the secret value $x\in\F_q^m$ with any non-negligible advantage. It is worth noting that a significant difference between our secret-sharing schemes and the one in \cite{ABIKN23} lies in the secret; we extend their binary secret $b\in\{0,1\}$ to a length-$m$ $q$-ary secret $x \in\F_q^m$. This provides more capacity for the secret information being shared in the schemes.

\subsection{Forbidden graph secret-sharing schemes with public information from planted random tensors}

To ``erase'' the information carried by the public tensor $\tB$ (or equivalently, the bilinear map $\phi_{\tB}$) from the view of forbidden pairs $\{i,j\}$, we need to define a new tensor $\hat{\tB}$ such that $\phi_{\hat{\tB}}(e_i,e_j)=0$, while for authorized pairs $\{i',j'\}$, we still have $\phi_{\hat{\tB}}(e_{i'},e_{j'})=\phi_{\tB}(e_{i'},e_{j'})$. This procedure, visualized as tensors, can be formalized as the following puncturing operation.

\begin{definition}[Tensor puncturing with graph structure]
Let $\tB\in\AT(d, m,\F_q)$ and $Q_d=([d], E)$ be an undirected graph. We define the procedure $\mathsf{Puncture}(\tB, Q_d)$ as follows:
\begin{enumerate}
	\item $\hat{\tB}\leftarrow\tB$.
	\item 
	For every $i,j\in[d]$ such that $i\neq j$, if $i\nsim j$ in $Q_d$, set $\hat{\tB}(i,j,k)=0$ for every $k\in[m]$.
	\item Output $\hat{\tB}$.
\end{enumerate}
\end{definition}
With this puncturing procedure, we are ready to give our forbidden graph secret-sharing schemes with public information based on planted random tensors in \cref{protocol:secret-sharing}.
\begin{figure}[H]
\begin{protocol}
	\begin{center}\textsc{Forbidden graph secret-sharing schemes based on planted random tensors}\end{center}
	Let $Q_d=([d], E)$ be the graph representing the access structure and $x=(x_1,\dots,x_m)\in\F_q^m$ be the secret.\\
	\textsf{Share}$(\mathbbm{1}^d, x)$:
	\begin{enumerate}
		\item $\tB\smpl\mathcal{AT}(d,m,q)$.
		\item $L^{\prime}\smpl\GL(d,\F_q)$
		\item $\widehat{\tB}\leftarrow\mathsf{Puncture}((L^\prime)^\tr \tB L^\prime,Q_d)$.
		\item $(\tilde{\tA},L)\smpl\mathcal{PAT}(n,m,q,\tB)$.
		\item Let $\tilde{\tB}\leftarrow
\mathtt{0}_{d\times d\times m}$. For every edge $\{i,j\}\in E$ with $i<j$ and every $k\in[m]$, set $\tilde{\tB}(i,j,k)\leftarrow x_k+\widehat{\tB}(i,j,k)$. Then set $\tilde{\tB}(j,i,k)\leftarrow -\tilde{\tB}(i,j,k).$
		\item Output $I:=(\tilde{\tA},\tilde{\tB})$ and, for every $i\in[d]$, $s_i:=L^{-1}L^{\prime\prime}e_i$, where $L''=J_dL'\in\mathbb F_q^{n\times d}$.
	\end{enumerate}
	\textsf{Recover}$(\mathbbm{1}^d, S=\{i,j\} (i<j), I, (s_k)_{k\in S})$:
	\begin{enumerate}
		\item If $\{i, j\} \notin E$, output $\perp$.
		\item $(\tilde{\tA},\tilde{\tB})\leftarrow I$.
		\item Denote by $(s_i^\tr\tilde{\tA}s_j)_k$ the $k$th component of $s_i^\tr\tilde{\tA}s_j\in\F_q^m$. Output $\hat{x}=(\hat{x}_1,\dots,\hat{x}_m)$ where $$\hat{x}_k=\tilde{\tB}(i,j,k)-(s_i^\tr\tilde{\tA}s_j)_k$$ for each $k\in[m]$.
	\end{enumerate}
\end{protocol}
\caption{Forbidden graph secret-sharing schemes based on planted random tensors}
\label{protocol:secret-sharing}
\end{figure}

\begin{theorem}
If the $(n, m, d, q, T, \varepsilon)$-\PRATwTwoH assumption holds, the construction in \cref{protocol:secret-sharing} is a $(T, 2 \varepsilon)$-secure forbidden graph secret-sharing scheme with share size of $n\log q$ and public information size of $(n^2+d^2) m\log q$. Moreover, assume that Conjecture \ref{conj:\PRATwTwoH} holds, then there exists a constant $C\geq 3$ such that $d=\lceil n/C\rceil$ and the construction achieves $(T, 2\varepsilon)$ security with share size $C\cdot d\log q$ and public information size $(C^2+1)d^2
\left\lceil\frac{Cd}{\log(Cd)}\right\rceil\log q
=
O\left(\frac{d^3\log q}{\log d}\right)$ for every adversary running in time $T=q^{o(n)}$.
\end{theorem}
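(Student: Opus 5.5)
I will prove correctness, then the size bounds, then security. Security is a hybrid argument that uses the \PRATwTwoH assumption once and a one-time-pad argument once, which accounts for the $2\varepsilon$ loss.

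\emph{Correctness.} Fix an edge $\{i,j\}\in E$ with $i<j$. Let $\tA'$ be the tensor before hiding, so $\tilde{\tA}=L^\tr\tA'L$ and $J_d^\tr\tA'J_d=\tB$. With $L''=J_dL'$ we compute
\[
s_i^\tr\tilde{\tA}s_j=e_i^\tr (L')^\tr J_d^\tr L^{-\tr}L^\tr\tA'LL^{-1}J_dL'e_j=e_i^\tr (L')^\tr\tB L' e_j .
\]
Since $i\sim j$, puncturing leaves the $(i,j)$ fibre untouched, so this vector equals $\widehat{\tB}(i,j,\cdot)$. Step 5 of Share gives $\tilde{\tB}(i,j,k)=x_k+\widehat{\tB}(i,j,k)$, so Recover outputs exactly $x$.

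\emph{Sizes.} Each share is a vector in $\F_q^n$, so it has $n\log q$ bits. The public information is one $n\times n\times m$ tensor and one $d\times d\times m$ tensor, giving $(n^2+d^2)m\log q$ bits. Substituting $n=Cd$ and $m=\lceil Cd/\log(Cd)\rceil$ yields the stated bounds.

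\emph{Security.} Fix a forbidden set $S$. If $|S|\le 1$, the view is a sub-view of the $|S|=2$ case, or it can be padded to it, so it suffices to treat $S=\{i,j\}$ with $i\nsim j$. The adversary sees $(\tilde{\tA},\tilde{\tB},s_i,s_j)$. Set $x':=L'e_i$ and $y':=L'e_j$; these are linearly independent because $L'$ is invertible. Then $s_i=L^{-1}J_dx'$ and $s_j=L^{-1}J_dy'$, which is precisely the hint pair of \PRATwTwoH.
\begin{enumerate}
\item \textbf{Hybrid 0} is the real view.
\item \textbf{Hybrid 1} replaces $(\tilde{\tA},s_i,s_j)$ by the second \PRATwTwoH distribution: $s_i,s_j$ uniform, and each slice of $\tilde{\tA}$ uniform subject to $s_i^\tr M_k s_j=x'^\tr B_k y'$. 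Here $\tilde{\tB}$ is computed from $\tB$, $L'$, $x$, and a reduction can generate it because the assumption reveals $\tB$. One complication is that the assumption quantifies over a fixed pair $(x',y')$, while here $L'$ is random. I handle this by averaging, or by fixing $L'$ non-uniformly; either way the same bound holds for each fixed $L'$. This step costs at most $\varepsilon$.
\item In Hybrid 1, $\tilde{\tA},s_i,s_j$ depend on $\tB$ only through $z:=e_i^\tr(L')^\tr\tB L'e_j$. The public $\tilde{\tB}$ depends on $\tB$ only through the fibres $\widehat{\tB}(a,b,\cdot)$ for edges $\{a,b\}$. Since $(L')^\tr\tB L'$ is a uniform alternating tensor for any fixed $L'$, its $(i,j)$ fibre $z$ is uniform in $\F_q^m$ and independent of all other fibres. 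The $(i,j)$ fibre is punctured and so never enters $\tilde{\tB}$. Hence Hybrid 1 is fully simulatable from $x$ together with an independent uniform $z$.
\item \textbf{Hybrid 2} keeps $z$ but makes everything else independent of $z$. It is then simulated without the pad on any edge, because the remaining fibres one-time-pad $x$ on the edge entries. Carrying this out carefully is the main obstacle. Each edge fibre $\widehat{\tB}(a,b,\cdot)$ is uniform and used once, so the entries $\tilde{\tB}(a,b,\cdot)$ are jointly uniform and independent of $x$. Everything is therefore independent of $x$, and the adversary guesses $x$ with probability exactly $q^{-m}$.
\end{enumerate}
Where the paper's $2\varepsilon$ arises, I would attribute the second $\varepsilon$ to the switch back, or to the two-sided comparison across secrets. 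I would check this by bounding $|\Pr[\text{guess}]-q^{-m}|$ via Hybrid 1 on both the secret-dependent and the simulated side. Finally, instantiating with Conjecture~\ref{conj:\PRATwTwoH} gives $T=q^{o(n)}$.
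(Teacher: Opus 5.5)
Your argument is the same reduction the paper uses: run the scheme adversary on a \PRATwTwoH{} challenge and show that the guessing probability is exactly $q^{-m}$ on the simulated side. For a forbidden pair $\{i,j\}$ it is more careful than the paper's write-up in two places.
\begin{itemize}
\item The paper's $\mathcal{A}_2$ samples $L'$ after, and independently of, the fixed hint pair. Its planted branch therefore does not reproduce the real shares $L^{-1}J_dL'e_i$, $L^{-1}J_dL'e_j$. Conditioning on $L'$ and instantiating the assumption with $(x',y')=(L'e_i,L'e_j)$, as you do, repairs this.
\item The paper asserts that the unplanted $\tilde{\tA}$ carries no usable information, but the second distribution forces $\hat{x}^{\tr}M_k\hat{y}=x'^{\tr}B_ky'$. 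Your observation that this is exactly the punctured $(i,j)$ fibre of $(L')^{\tr}\tB L'$, independent of the edge pads, is what actually makes the secret perfectly hidden there.
\end{itemize}
This gives $|\Pr[\text{guess}]-q^{-m}|\le\varepsilon$ directly. There is no second $\varepsilon$ to account for; it is slack in the paper's contradiction argument. Your ``Hybrid 2'' is an identity of distributions, not an additional hybrid.

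The step that does not go through as written is your reduction of $|S|\le 1$ to a forbidden pair. Suppose $S=\{i\}$ with $i$ adjacent to every other vertex, or $S=\emptyset$ with $Q_d$ complete. Then no forbidden pair contains $S$, and padding with some $j\sim i$ breaks your step 3. Write $\tB':=(L')^{\tr}\tB L'$. In Hybrid 1 the tensor $\tilde{\tA}$ is constrained by $\tB'(i,j,\cdot)$, and this fibre is now an unpunctured pad on $x$, so you cannot conclude that the view is independent of $x$.

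One fix uses the fact that the assumption holds for every linearly independent pair. Take $x'=L'e_i$ and $y'=L'(\sigma_ae_a-\sigma_be_b)$ for distinct $a,b\neq i$, where $\sigma_c=1$ if $i<c$ and $\sigma_c=-1$ if $i>c$, and discard $\hat{y}$ (for $S=\emptyset$, take any $i$ and discard $\hat{x}$ as well). For every edge $\{i,c\}$ we have $\tilde{\tB}(i,c,\cdot)=\sigma_c x+\tB'(i,c,\cdot)$. Hence the value constraining $\tilde{\tA}$ becomes $\sigma_a\tilde{\tB}(i,a,\cdot)-\sigma_b\tilde{\tB}(i,b,\cdot)$. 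This is a function of the public $\tilde{\tB}$, which is uniform on edge positions and independent of $x$, so the Hybrid 1 view is again independent of $x$.

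Alternatively, argue statistically: once $\hat{y}$ is withheld, $(\tilde{\tA},\hat{x})$ is $q^{-\Omega(n)}$-close to independent of the constrained value, because $m\ll n$. The paper's proof glosses over this case as well.
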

\begin{proof}
We check the required properties one by one:
\begin{itemize}
    \item \textbf{Perfect correctness.} For every pair $\{i,j\}\in E$ with $i<j$ that is allowed to recover the secret $x=(x_1,\dots,x_m)\in\F_q^m$, 
    $$
    s_i^\tr\tilde{\tA}s_j = e_i^\tr (L^{\prime\prime})^\tr L^{-\tr}\tilde{\tA} L^{-1}L^{\prime\prime}e_j=e_i^\tr (L^{\prime})^\tr\tB L^{\prime}e_j = e_i^\tr \tilde{\tB} e_j - x.
    $$

    According to the scheme in \cref{protocol:secret-sharing}, for each $k\in[m]$, they output
    $$
    \hat{x}_k=\tilde{\tB}(i,j,k)-(s_i^\tr\tilde{\tA}s_j)_k=\tilde{\tB}(i,j,k)-\tilde{\tB}(i,j,k)+x_k=x_k.
    $$
    \item \textbf{Security.} We prove it by contradiction. Suppose that there exists a non-uniform $T(d)\cdot\mathrm{poly}(d)$-time adversary $\mathcal{A}_1$ that breaks the security of the secret-sharing scheme in \cref{protocol:secret-sharing}. Then there exists a subsequence $(d_k)_{k\in\N}$ and sets $(S_d)_{d\in\N}$ such that $S_d \in \mathcal{F}_d$, $|S_d|\leq 2$ and 
	$$
	\Pr\left[\mathcal{A}_1(\mathbbm{1}^{d_k}, I,\left(s_i\right)_{i \in S_{d_k}})=x \left\lvert\, \begin{array}{l}
		x \smpl\F_q^m \\
		\left(I, s_1, \ldots, s_{d_k}\right) \leftarrow \textsf{Share}\left(\mathbbm{1}^{d_k}, x\right)
	\end{array}\right.\right] > \frac{1}{q^m} + 2\varepsilon(d_k).
	$$
    for every $k\in\N$. Now we use $\mathcal{A}_1$ to attack the \PRATwTwoH assumption. Given an instance of the underlying problem for the \PRATwTwoH assumption, we assume $\mathcal{A}_2$ gets $\tA,\tB$ and $(s_i)_{i\in S_d}$ and performs as follows:
    \begin{enumerate}
        \item If $d\neq d_k$ for every $k\in \N$, output $0$.
        \item Sample $x\smpl\F_q^m$.
        \item Sample $L^\prime\smpl\GL(d,\F_q)$.
	\item $\widehat{\tB}\leftarrow\mathsf{Puncture}((L^\prime)^\tr \tB L^\prime,Q_d)$.
    \item Let $\tilde{\tB}\leftarrow
\mathtt{0}_{d\times d\times m}$. For every edge $\{i,j\}\in E$ with $i<j$ and every $k\in[m]$, set $\tilde{\tB}(i,j,k)\leftarrow x_k+\widehat{\tB}(i,j,k)$. Then set $\tilde{\tB}(j,i,k)\leftarrow -\tilde{\tB}(i,j,k).$
	\item Run $\mathcal{A}_1$ on input $I:=(\tA,\tilde{\tB})$ and $(s_i)_{i\in S_d}$.
        \item If $\mathcal{A}_1$ guesses $x$, then output $1$; otherwise, output $0$.
    \end{enumerate}
    Note that if $\tB$ is not planted in $\tA$, which means $\tA$ is a random tensor without any information that can be leveraged for comparison with $\tilde{\tB}$, then $\mathcal{A}_1$ cannot recover the secret $x$. If $\tB$ is indeed planted in $\tA$, then $\mathcal{A}_1$ will be exactly dealing with the secret-sharing game in \cref{protocol:secret-sharing}. Since $\mathcal{A}_1$ can recover $x$ with probability greater than $1/q^m + 2\varepsilon(d_k)$ for every $d=d_k$ in this game, the advantage of $\mathcal{A}_2$ against the \PRATwTwoH assumption is greater than $\varepsilon(d_k)$ for every $d=d_k$. The total running time of $\mathcal{A}_2$ is $T(d)\cdot\mathrm{poly}(d)$, which contradicts \cref{conj:\PRATwTwoH} and thereby proves that such $\mathcal{A}_1$ does not exist under this conjecture. \qedhere
\end{itemize}
\end{proof}

\bibliographystyle{alpha}
\bibliography{group-action}

\end{document}

%% file: preamble.tex
\usepackage[T1]{fontenc}
\usepackage{graphicx}
\usepackage[normalem]{ulem}
\usepackage{hyperref}
\usepackage{tikz}
\usetikzlibrary{calc,decorations.pathreplacing}
\usepackage{color}

\usepackage{amsthm}
\usepackage{caption}
\usepackage{algorithmicx, algpseudocode, algorithm}
\usepackage{amsmath,amsfonts,amssymb}
\usepackage{todonotes}
\usepackage[operators,sets]{cryptocode}
\usepackage{expl3}
\usepackage{soul}
\usepackage{xspace}
\usepackage{xcolor}

\usepackage{enumitem}
\usepackage[capitalise]{cleveref}
\usepackage[left=1in,right=1in,bottom=1in,top=1in]{geometry}
\usepackage{float}
\usepackage{tabularx}
\usepackage{mdframed}
\usepackage{mathtools}
\setul{1ex}{.5pt}

\newenvironment{protocol}{
	\begin{mdframed}[style=figstyle]}{
\end{mdframed}}

\newtheorem{theorem}{Theorem}
\newtheorem{fact}[theorem]{Fact}
\newtheorem{claim}[theorem]{Claim}
\newtheorem{proposition}[theorem]{Proposition}
\newtheorem{definition}{Definition}
\newtheorem{lemma}[theorem]{Lemma}
\theoremstyle{definition}

\newtheorem{remark}{Remark}
\newtheorem{conjecture}{Conjecture}

\newcommand{\bzero}{\mathbf{0}}
\newcommand{\F}{\mathbb{F}}

\newcommand{\N}{\mathbb{N}}

\newcommand{\T}{\mathrm{T}}
\newcommand{\AT}{\mathrm{AT}}
\newcommand{\vA}{\mathbf{A}}
\newcommand{\vB}{\mathbf{B}}

\newcommand{\tA}{\mathtt{A}}
\newcommand{\tB}{\mathtt{B}}

\newcommand{\M}{\mathrm{M}}
\newcommand{\Gr}{\mathrm{Gr}}

\newcommand{\ER}{\mathcal{G}}
\newcommand{\LinER}{\mathcal{AT}}
\newcommand{\PlLinER}{\mathcal{AT}}
\newcommand{\PPlLinER}{\mathcal{PAT}}

\newcommand{\rank}{\mathrm{rank}}
\newcommand{\linspan}{\mathrm{span}}

\newcommand{\tr}{\mathrm{t}}

\newcommand{\alg}[1]{\textsf{\upshape #1}}

\newcommand{\PRATwTwoH}{PlRanAltTensorW2Hints}
\newcommand{\PRSwH}{PlRanSubgraphWHints}

\DeclareMathOperator{\GL}{\mathrm{GL}}

\DeclareMathOperator{\poly}{\mathrm{poly}}

\DeclareMathOperator*{\E}{\mathbf{E}}
\DeclareMathOperator*{\Var}{\mathbf{Var}}

\usepackage{tikz}
\usetikzlibrary{matrix}
\newif\ifstudent
\newcommand{\equ}[1]{%
	\[
	\ifstudent\else\expandafter\phantom\fi{#1}
	\]%
}
\usepackage[most]{tcolorbox}
\usepackage{lmodern}
\newtcbox{\transparentsetting}{blank, on line, opacitytext=0.15}

\newcommand{\smpl}{\xleftarrow{\$}}

\usepackage[numbers,sort]{natbib}
\usepackage{bbm}